\newtheorem{theorem}{Theorem}[section]
\newtheorem{lemma}[theorem]{Lemma}
\newtheorem{corollary}[theorem]{Corollary}
\newtheorem{proposition}[theorem]{Proposition}
\newtheorem{definition}[theorem]{Definition}
\theoremstyle{remark}
\newtheorem{remark}[theorem]{Remark}
\numberwithin{equation}{section}
\DeclareMathOperator{\R}{\mathbb{R}}
\DeclareMathOperator{\N}{\mathbb{N}}
\DeclareMathOperator{\Hcal}{\mathcal{H}}
\DeclareMathOperator{\tr}{Tr}
\DeclareMathOperator{\Nup}{\mathit{N}_{\uparrow}}
\DeclareMathOperator{\Ndown}{\mathit{N}_{\downarrow}}
\DeclareMathOperator{\rup}{\rho_{\uparrow}}
\DeclareMathOperator{\rdown}{\rho_{\downarrow}}
\DeclareMathOperator{\supp}{supp}
\renewcommand{\d}{ \, \mathrm{d}}
\begin{document}
	
	\title{Ground state energy of a dilute inhomogeneous Fermi gas}
	\author{Thomas Gamet\thanks{Ecole Normale Supérieure de Lyon, UMPA (UMR 5669)}}
	\maketitle
	
	\begin{abstract} We study the ground state energy of a system of $N$ fermions with two spin states in the large $N$ limit. The particles are placed in an inhomogeneous trapping pontential and interact via scaled interactions. We study a dilute limit where the range of the interaction potential is much smaller than the typical inter-particle distance. We show that the energy per particle converges to the Thomas-Fermi energy of the system, with a perturbative term corresponding to the interaction and exhibiting the scattering length of the potential. The proof is decomposed into two bounds. First, we construct an appropriate test-state to prove the upper bound. Then, we prove the lower bound by the Dyson lemma which allows us to regularize the interaction potential, and several semi-classical tools. 
	\end{abstract}
	
	\tableofcontents
	
	\section{Introduction}
	
	\subsection{Motivation and context}
	
	The asymptotics of the ground state energy per unit of volume of a dilute Fermi gas with short range interactions were first derived for the case of a hard-sphere potential in \cite{huang_quantum-mechanical_1957, lee_many-body_1957}: in three dimensions, in the thermodynamic limit and for two spins, for a total density of particles $\varrho$, the volumic energy $e$ follows the so called Huang-Yang formula in the low density limit
	\begin{equation}
		e(\varrho) = \frac{3}{5}(3\pi^2)^{2/3}\varrho^{5/3} + 2\pi a \varrho^2(1 + c\varrho^{1/3}) + o(\varrho^{7/3}),
	\end{equation}
	with $a$ the scattering length of the interaction and $c$ an explicit constant. The first order corresponds to the energy of a free gas, that is the kinetic energy of a system without interactions. The second order expansion was rigorously derived in \cite{lieb_ground-state_2005}. Several subsequent works have improved the result of \cite{lieb_ground-state_2005}. The method developped in \cite{falconi_dilute_2021} was also used in \cite{giacomelli_optimal_2023} and \cite{giacomelli_optimal_2024} to find optimal lower and upper bounds. More precisely, under a smoothness assumption on the interaction potential, the author derives the second term with a $O(\varrho^{7/3})$ error term. Without this smoothness assumption, \cite{lauritsen_almost_2025} gets a slightly worse error term of the form $O(\varrho^{7/3-\delta})$ for all $\delta>0$. More recently, the third term was recovered for the upper bound in \cite{giacomelli_huang-yang_2024} without any smoothness assumption, and in \cite{giacomelli_huang-yang_2025} for the lower bound. Other works including \cite{agerskov_ground_2025} focus on the one dimensional case.\\	
	In this paper, we will not consider a thermodynamic limit, but rather work with a trap of fixed size and study the limit when the number of particles $N$ tends to $+\infty$. Several papers compute the energy of a system of fermions on a torus, with a Hamiltonian of the form
	\begin{equation}
		\sum_{j=1}^N \hbar^2 (-\Delta_j) + \sum_{1\leq j < k\leq N}w_N(x_j - x_k)
	\end{equation}
	with $\hbar = N^{-1/3}$. The choice of $\hbar$ comes from the growth $\propto N^{5/3}$ of the sum of the $N$ first eigenvalues of the Laplacian in a fixed volume. For $N$ particles trapped in a space of size of order $1$, the typical distance between two particles is therefore of order $\hbar$. In \cite{fournais_ground_2024}, the strongly interacting case is studied, that is $w_N = N^{-\alpha}w$ with $\alpha < 1$. Another scaling for the interaction is the mean field regime where $w_N = N^{-1}w$. For this model, several papers have estimated the correlation energy, that is the difference between the ground state energy and the Hartree-Fock energy (the ground state energy restricted to Slater determinants, which is a first order estimate of the ground state energy). In \cite{hainzl_correlation_2020} and \cite{benedikter_optimal_2020}, the authors have to impose a weak interaction assumption on $w$. In \cite{benedikter_correlation_2021}, \cite{christiansen_random_2023} and \cite{benedikter_correlation_2023} however, this assumption is relaxed. Other scaling have been studied, such as $w_N = N^2w(N\cdot)$ in \cite{chen_second_2024}. This is a special case of the scaling $w_N = N^{3\beta - 1}w(N^{\beta}\cdot)$, which is natural as $\int w_N = \int w$. Note that when $\beta = 0$, we recover the mean field regime. When $\beta < 1/3$, since the typical distance between two fermions of same spin is $\hbar = N^{-1/3}$, particles of same spin interact, while when $\beta > 1/3$, the interaction between particles of same spin is way weaker than the interaction of particles of different spin because of the Pauli exclusion principle.\\	
	Let us now take a look at the pair interaction energy. For a given function $f$ (that corresponds to the distribution of the relative coordinate of a particle pair) on $\R^3$ such that $f(x) \to 1$ when $|x| \to +\infty$, the scattering energy is
	\begin{equation}
		\mathcal{E}^{\mathrm{s}}[f] = \int \Big(\hbar^2|\nabla f|^2 + \frac{1}{2}w_N(x) |f(x)|^2 \Big) \d x.
	\end{equation}
	With a change of variables of the form $y = N^{\beta}x$, and with $\tilde{f}(y) = f(N^{-\beta}y)$, we obtain 
	\begin{equation}
		\mathcal{E}^{\mathrm{s}}[f] = N^{-3\beta}\int \Big( N^{2\beta-2/3}|\nabla \tilde{f}(y)|^2 + N^{3\beta-1}|\tilde{f}(y)|^2 \Big)\d y.
	\end{equation}
	We immediately notice that the two terms in the integral are not of the same order, and this implies that the scaling $w_N =N^{3\beta-1}w(N^{\beta}\cdot)$ does not allow to recover the scattering length (more on this in Section \ref{subsection_diffusion_length}) of $w$ but only its range $R_w$ (defined by \eqref{equation_definition_range}). Therefore, in the following, we choose the scaling $w_N = N^{2\beta - 2/3}w(N^{\beta}\cdot)$.\\	
	Until now, all the models implied the confinement of the particles in a box, wether the size of the box was varying with the number of particles or not. However, models with non homogeneous trapping potentials have also been studied in the litterature, see for instance \cite{fournais_semi-classical_2018, lewin_semi-classical_2019}. We want to study such an inhomogeneous model in the dilute limit; namely in the case where $\beta > 1/3$, that is when interactions occur mainly between particles of different spin. More specifically, for $\hbar = N^{-1/3}$, we consider
	\begin{equation}\label{equation_definition_hamiltonien_intro}
		H_N = \sum_{j=1}^N (-\hbar^2\Delta_j) + \sum_{j= 1}^N V(x_j) + \sum_{1\leq j < k \leq N} w_N(x_j - x_k)
	\end{equation}
	with a given trapping potential $V$ that satisfies $V(x) \to + \infty$ when $|x| \to +\infty$ and a positive and compactly supported interaction potential 
	\begin{equation} 
		w_N =N^{2\beta-2/3}w(N^{\beta}\cdot). 
	\end{equation}
	Our goal is to estimate the ground state energy, i.e. the lowest eigenvalue of \eqref{equation_definition_hamiltonien_intro} on the appropriate Hilbert space for fermions with two spin states`. To do so, we will now define it properly in order to state a precise result.
	
	\subsection{Main result}
	First, let us define the space on which $H_N$ acts. Let 
	\begin{equation}
		\Hcal_{\Nup, \Ndown} = \bigg\{\Psi\in \Big(L^2_{\mathrm{as}}\big((\R^d)^{\Nup}\big)\otimes L^2_{\mathrm{as}}\big((\R^d)^{\Ndown}\big)\Big) \cap H^1\big((\R^d)^N\big),~ \int \Big(\sum_{j=1}^N V(x_j)\Big) |\Psi(X_N)|^2\d X_N< + \infty\bigg\}
	\end{equation}
	be the space of states with $\Nup$ spin-up and $\Ndown$ spin-down particles, with finite kinetic and potential energies. This corresponds to functions that are antisymmetric separately for the $\Nup$ first variables and for the $\Ndown$ last variables. We see $H_N$ as a quadratic form on $\Hcal_{\Nup, \Ndown}$ with $\Nup + \Ndown =N$, and we define the ground state energy by
	\begin{equation}
		E(N) = \inf_{N_{\uparrow} + N_{\downarrow} = N} \inf \left\{  \frac{\langle\Psi_{\Nup, \Ndown}|H_N| \Psi_{\Nup, \Ndown}\rangle}{\langle \Psi_{\Nup, \Ndown}|\Psi_{\Nup, \Ndown}\rangle},~ \Psi_{\Nup, \Ndown}\in \Hcal_{\Nup, \Ndown}\setminus
		\{0\}\right\}.
	\end{equation}
	
	\begin{definition}
		We define the scattering length $a_v$ of a radial positive and compactly supported potential $v$ on $\R^3$ by 
		\begin{equation}
			4\pi a_v = \inf\left\{\int |\nabla f|^2 + \frac{1}{2}v|f|^2,~ f\underset{\infty}{\to}1\right\}.
		\end{equation}
	\end{definition}	
	More information on the scattering length will be given in Section \ref{subsection_diffusion_length}. To estimate $E(N)$, we introduce the perturbed two-spin Thomas-Fermi energy functional 
	\begin{equation}\label{equation_perturbed_two_spin_thomas_fermi_functional}
		\mathcal{E}^{\mathrm{TF}}_N[\rup, \rdown] = c_{\mathrm{TF}}\int \Big(\rho_{\uparrow}^{5/3} + \rho_{\downarrow}^{5/3}\Big) + \int V(\rup + \rdown) + 8\pi a_w N^{1/3 - \beta}\int \rup \rdown,
	\end{equation}
	with the Thomas-Fermi constant
	\begin{equation} 
		c_{\mathrm{TF}} = \frac{3}{5}(6\pi^2)^{2/3} 
	\end{equation}
	and $a_w$ the scattering length of $w$. Moreover, we define the perturbed two-spin Thomas-Fermi energy by
	\begin{equation}\label{equation_definition_thomas_fermi_energy_deux_spin_perturbed}
		E^{\mathrm{TF}}_N = \inf\left\{ \mathcal{E}^{\mathrm{TF}}_N[\rup, \rdown],~ \rup, \rdown \geq 0,~ \int(\rup + \rdown) = 1\right\}.
	\end{equation}
	The kinetic term of $\mathcal{E}_N^{\mathrm{TF}}$ favors an equidistribution $\rup = \rdown$ of the density. We denote by $E^{\mathrm{TF}}$ the Thomas-Fermi energy defined by 
	\begin{equation}
		E^{\mathrm{TF}} = \inf\left\{ 2^{-2/3}c_{\mathrm{TF}} \int \rho^{5/3} + \int V\rho,~ \rho \geq 0,~ \int \rho = 1 \right\},
	\end{equation}  
	 and we denote by $\rho^{\mathrm{TF}}$ the minimizer. The proof of the existence of $\rho^{\mathrm{TF}}$ and more information will be given in Section \ref{subsection_Thomas_Fermi}. Before we state the theorem, we have to introduce some assumptions on the trapping potential $V$, that we call (\ref{H_1}) and (\ref{H_2}). 
	\begin{equation}\tag{$H_1$}\label{H_1}
		\begin{cases}
			V \in W^{3, \infty}_{\mathrm{loc}}\\
			V\geq 1\\
			|\Delta V| \leq CV^2\\
			|\nabla\Delta V| \leq CV\\
			\sum_{j,k}|\partial_{jk}V|^2 \leq CV^2
		\end{cases}
	\end{equation}
	
	\begin{equation}\label{H_2}\tag{$H_2$}
		\Lambda \mapsto \iint \mathds{1}_{|p|^2 + V(x) \leq \Lambda} \d x \d p~ \text{is differentiable}
	\end{equation}
	
	\begin{theorem}[Main result: convergence of the ground state energy]\label{theorem}
		For $1/3 < \beta < 34/81$, and under \emph{(}\ref{H_1}\emph{)} and \emph{(}\ref{H_2}\emph{)}, we have
		\begin{equation}
			E(N) = N E_N^{\mathrm{TF}} + o(N^{4/3 - \beta}) = N E^{TF} + 2\pi a_w N^{1/3-\beta} \int \big(\rho^{\mathrm{TF}}\big)^2 + o(N^{4/3-\beta}).
		\end{equation}
	\end{theorem}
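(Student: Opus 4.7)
The overall plan is to prove matching upper and lower bounds $E(N)=N E^{\mathrm{TF}}_N+o(N^{4/3-\beta})$, both reducing to showing that the perturbed Thomas-Fermi minimum is essentially achieved at the spin-equipartitioned profile $\rho_\uparrow=\rho_\downarrow=\rho^{\mathrm{TF}}/2$ (forced by the strict convexity of $\rho\mapsto\rho^{5/3}$), so that the correction $8\pi a_w N^{1/3-\beta}\int\rho_\uparrow\rho_\downarrow$ becomes $2\pi a_w N^{1/3-\beta}\int(\rho^{\mathrm{TF}})^2$ as stated. Two structural facts drive everything: for $\beta>1/3$ the interaction range $N^{-\beta}$ sits well below the inter-particle scale $\hbar=N^{-1/3}$, so the Pauli principle decorrelates same-spin particles and only opposite-spin pairs interact to leading order; and the emerging coupling constant is precisely the scattering length $a_w$, because after rescaling $y=N^\beta x$ the two-body equation $-\hbar^2\Delta f+\tfrac12 w_N f=0$ is exactly the zero-energy scattering equation for $w$.

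For the upper bound I would use a Slater--Jastrow trial state
\[
\Psi = F(X_\uparrow,X_\downarrow)\,\Phi_\uparrow(X_\uparrow)\,\Phi_\downarrow(X_\downarrow),\qquad F(X_\uparrow,X_\downarrow)=\prod_{i,j}f_N(x_i^\uparrow-x_j^\downarrow),
\]
where $\Phi_\uparrow,\Phi_\downarrow$ are Slater determinants (built for instance from coherent states or the first $N/2$ eigenfunctions of an effective one-body operator) whose one-particle densities approximate $\rho^{\mathrm{TF}}/2$, and $f_N$ is a mesoscopically truncated scattering solution of $-\hbar^2\Delta+\tfrac12 w_N$. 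Standard semiclassical estimates for trapped fermions under (\ref{H_1})--(\ref{H_2}) then reproduce the Thomas-Fermi kinetic and potential energies up to $o(N^{4/3-\beta})$. The cross terms produced when the Laplacian is differentiated through $F$ yield exactly the scattering-length contribution, because
\[
\int\Bigl(\hbar^2|\nabla f_N|^2+\tfrac12 w_N|f_N|^2\Bigr)=4\pi a_w N^{-2/3-\beta}+o(N^{-2/3-\beta}),
\]
and this is multiplied by the effective number of opposite-spin pairs $N^2\!\int\rho_\uparrow\rho_\downarrow$ to give $8\pi a_w N^{4/3-\beta}\int\rho_\uparrow\rho_\downarrow$. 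Same-spin contributions, the three-body Jastrow terms, and the error from normalizing $\|\Psi\|$ are all $o(N^{4/3-\beta})$ thanks to Pauli-based decorrelation and to $\beta>1/3$.

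For the lower bound I would proceed via Dyson's lemma: extracting a small fraction $\varepsilon$ of the kinetic energy inside each mesoscopic box and combining it with $w_N$ yields, in the sense of quadratic forms, a soft two-body potential $U_R$ of integral $\approx 8\pi a_w\hbar^2$ supported on a scale $R\gg N^{-\beta}$. On the remaining problem $(1-\varepsilon)\sum_i(-\hbar^2\Delta_i)+\sum_i V(x_i)+U_R\text{-interactions}$, coherent-state localization at scale $\hbar$ produces the semiclassical Thomas-Fermi lower bound on the kinetic plus potential part, while the soft-interaction term is bounded from below by the Hartree expression $8\pi a_w N^{1/3-\beta}\int\rho_\uparrow\rho_\downarrow$ after replacing the two-point density by $\rho_\uparrow\rho_\downarrow$; the replacement error is estimated using the explicit Pauli bound on the opposite-spin correlation function. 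Minimizing the resulting lower bound over $(\rho_\uparrow,\rho_\downarrow)$ gives $N E_N^{\mathrm{TF}}$ up to the prescribed error.

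The main technical obstacle is the simultaneous optimization, in the lower bound, of the three scales involved: the spatial localization length $\ell$, the Dyson radius $R$, and the kinetic fraction $\varepsilon$. They must be tuned so that (i) the Dyson remainder, (ii) the semiclassical commutator errors controlled by the derivative bounds on $V$ in (\ref{H_1}), and (iii) the exchange/correlation error between the two-point density and $\rho_\uparrow\rho_\downarrow$ are all $o(N^{4/3-\beta})$. I expect the restriction $\beta<34/81$ to emerge precisely from this three-way balance, with $\varepsilon$ as an intermediate free parameter optimised at the end; the lower bound $\beta>1/3$ is exactly what puts the interaction range below $\hbar$ and legitimises the Pauli-based exclusion of same-spin interactions that underlies both bounds.
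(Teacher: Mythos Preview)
Your proposal captures the correct physics and the right broad strategy (upper bound via a trial state exhibiting the scattering length, lower bound via Dyson's lemma followed by semiclassical analysis), but it differs from the paper's actual proofs in both halves, and contains one concrete misconception.

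\textbf{Upper bound.} The paper does \emph{not} build a global Slater--Jastrow state. Instead it tiles $\supp\rho^{\mathrm{TF}}$ into boxes of side $l$ separated by corridors of width $N^{-\beta}R_w$, places $N_i\approx N\int_{\tilde\Lambda_i}\rho^{\mathrm{TF}}$ particles in each box with Dirichlet conditions, and then rescales each box by $N^\beta$ so that the local problem becomes exactly the homogeneous dilute Fermi gas of \cite{lieb_ground-state_2005}, whose upper bound is imported directly. The paper even remarks that a direct trial-state construction of your type is ``certainly possible'' and ``could even give the bound with less restrictions on $\beta$''. So your route is a legitimate alternative, but it is not what is done here.

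\textbf{Lower bound.} You propose spatial localisation into mesoscopic boxes, then Dyson inside each box. The paper works globally, with no spatial boxes in the lower bound. The kinetic energy is split in \emph{momentum} via $\Gamma(p)=\max\{1-p_F^2/|p|^2,0\}$: the low-frequency part plus $V$ yields $NE^{\mathrm{TF}}$ through Husimi functions, and the high-frequency part is fed into Dyson's lemma. After Dyson, the paper does not compare the two-point function to $\rho_\uparrow\rho_\downarrow$ directly; rather it fixes the spin-down positions $Y$, writes the remaining spin-up problem in terms of the conditional one-body matrix $\gamma_Y$, and replaces $\gamma_Y$ by the free spectral projector $P_{N_\uparrow}=\mathds{1}(H^0_\hbar\le\lambda_{N_\uparrow})$. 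The replacement is controlled by a Graf--Solovej-type excitation estimate (Lemma~\ref{lemma_graf_seiringer}) bounding $\tr\big(\gamma^{(1,0)}(1-P_{N_\uparrow})\big)$. The interaction term then becomes $\int\rho_{P_{N_\uparrow}}\!*\!U_R\cdot\sigma_N^{(0,1)}$, and one proves separately that $N^{-1}\rho_{P_{N_\uparrow}}\to\tfrac12\rho^{\mathrm{TF}}$ strongly in $L^{5/2}$ and $N^{-1}\sigma_N^{(0,1)}\rightharpoonup\tfrac12\rho^{\mathrm{TF}}$ weakly in $L^{5/3}$.

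\textbf{The misconception.} You write that the constraint $\beta<34/81$ should emerge from the three-scale balance in the \emph{lower} bound. In the paper it is the opposite: the lower bound (Proposition~\ref{proposition_lower_bound}) is valid for the whole range $1/3<\beta<1/2$, and the restrictive threshold $34/81$ comes from the \emph{upper} bound, specifically from the compatibility window $N^{1/3-\beta}\gg l\gg N^{-\frac{27}{21}(1-3\beta)-\beta}$ forced by the box decomposition and the imported LSS estimate. This is worth getting right, since it also explains the paper's comment that a direct upper-bound construction like yours might relax the constraint on $\beta$.
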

	\begin{remark}[Comments on Theorem \ref{theorem}]~
		\begin{itemize}
			\item If we replace $w_N = N^{2\beta-2/3}w(N^{\beta}\cdot)$ by $N^{3\beta-1}w(N^{\beta}\cdot)$, Theorem \ref{theorem} holds true with the scattering length $a_w$ replaced by the range of the interaction $R_w$, defined by \eqref{equation_definition_range}. More detail on this in Section~\ref{subsection_diffusion_length}.
			\item The hypotheses (\ref{H_1}) and (\ref{H_2}) on $V$ are only used in the proof of the lower bound. It is clear that (\ref{H_1}) is staisfied for a trapping potentiel of the form $V(x) = 1 + |x|^s$ with $s>1$. 
			\item The hypothesis on $\beta$ is strong. It comes from the methods we used. In particular, for the lower bound, we compute separately the main term (composed of the potential and kinetic energies, of order $N$) and the interaction energy (of order $N^{4/3-\beta}$). However, the result should probably hold true for larger $\beta$.
		\end{itemize}        
		\hfill $\diamond$
	\end{remark}
	
	\subsection{Plan of the paper}
	
	Before we give an outline of the proof, let us explain why it is reasonable to expect Theorem \ref{theorem} to be true. First, let us focus on the kinetic and potential energies, i.e. let us consider $H_N^{w=0} = \sum_j \big((-\hbar^2 \Delta_j) + V(x_j)\big)$. For fermionic states, it is well known that $\sum_{j=1}^N \Delta_j \propto N^{5/3}$, since particles cannot occupy the same state, this is why we choose $\hbar = N^{-1/3}$, in order to have both the kinetic and potential energies of same order. For a state $\Psi$, one can expect the following semi-classical equality to hold
	\begin{equation}
		\langle\Psi|H_N^{w=0}|\Psi\rangle \approx N \iint \big(|p|^2 + V(x)\big)\big(m_{\uparrow}(x, p) + m_{\downarrow}(x, p)\big) \d x \d p,
	\end{equation}
	$m_{\uparrow}$ and $m_{\downarrow}$ denoting here the Husimi functions associated to $\Psi$. If $\Psi$ minimizes $H_N^{w = 0}$, then we can assume that $m_{\uparrow}$ is of the form $m_{\uparrow}(x, p) \approx \mathds{1}_{|p|\leq c \rup(x)}$ with $\rho(x) = \int m(x, p)\d p$, which gives
	\begin{equation}
		\langle\Psi|H_N^{w=0}|\Psi\rangle \approx c_{\mathrm{TF}}N\int \Big(\rho_{\uparrow}^{5/3} + \rho_{\downarrow}^{5/3}\Big) + N\int V\big(\rup + \rdown\big).
	\end{equation}
	It remains to explain the form of the interaction term. We won't explain where the exact constant comes from, but we explain why it is reasonable to expect something proportional to $N^{4/3 - \beta}R_w$ for the case $w_N = N^{3\beta-1}w(N^{\beta}\cdot)$ ($R_w$ being the range of $w$). As the range of the interaction is very short ($N^{-\beta} \ll \hbar$), it is natural to only see the interaction between spin-up and spin-down particles. Since the intensity of the interaction potential becomes very high, we can imagine that it is "almost" a hardcore potential, and that to minimize the energy, we have to prevent two particles to be at a distance lower than $N^{-\beta}R_w$. This implies an increase in the kinetic energy of order $N^{-2/3}$ (that corresponds to $\hbar^2$) times $N$ (typical number of spin-up particles) times $N$ (typical number of spin-down particles) times $N^{-\beta}R_w$ (typical length of the interaction), wich indeed gives $R_w N^{4/3 - \beta}$. When $w_N = N^{2\beta-2/3}w(N^{\beta}\cdot)$, it is reasonable to expect that $R_w$ is replaced by $a_w$.\\	
	Before moving on to the proof of Theorem \ref{theorem}, we establish some preliminary results in Section \ref{section_preliminaries}. We show that $E_N^{\mathrm{TF}}$ can be seen as the minimal energy of the functional without the interaction, plus an interaction term. The functional without the interaction is easier to study, and in particular, we can show that there exist minimizers and give some of their properties. We also discuss the notion of scattering length briefly and show that the scattering length of a potential whose intensity goes to infinity tends to the radius of the potential. Then, in Section \ref{section_upper_bound}, we prove the upper bound of the theorem. More precisely, we use the upper bound of \cite{lieb_ground-state_2005} and map our Hamiltonian to a context similar to that of \cite{lieb_ground-state_2005}. For their upper bound, Lieb, Seiringer and Solovej show that the thermodynamic setting can be replace by a system where the size of the box and the number of particles inside are a function of the density $\varrho$, which we can adapt to the case where $\varrho$ itself depends on our $N$. It is certainly possible to get the upper bound without using \cite{lieb_ground-state_2005} by constructing a test state directly; this could even give the bound with less restrictions on $\beta$. In Section \ref{section_lower_bound}, we show the lower bound of Theorem \ref{theorem}. For this bound, it is harder to use directly the lower bound of \cite{lieb_ground-state_2005}, because all the computations are made after taking the $L\to+\infty$ limit, while we would like to use it at large but finite $L$. Therefore, we use a more direct approach; we take a sequence of states with low energy and prove that its energy is bounded from below. We decompose the energy between the low frequency kinetic and potential energies that recover the unperturbed Thomas-Fermi energy on the one hand, and the high kinetic and interaction energies, that recover the perturbation, after a regularization by the Dyson lemma, on the other hand.
	
	\section{Preliminaries}\label{section_preliminaries}
	
	\subsection{Thomas Fermi functional and density}\label{subsection_Thomas_Fermi}
	
	Here, we only assume that $V\in L^{\infty}_{\mathrm{loc}}$ is a positive function such that $V(x) \to +\infty$ when $|x| \to +\infty$. In (\ref{equation_perturbed_two_spin_thomas_fermi_functional}), we have defined the perturbed two-spin Thomas-Fermi energy functional. Let us now introduce a similar functional without the interaction term. This functional has the same minimum as another functional involving only the total density, which will be easier to study. For a pair of densities $(\rho_{\uparrow}, \rho_{\downarrow})$, we define the two-spin Thomas-Fermi energy functional by
	\begin{equation}
		\mathcal{E}^{\mathrm{TF}}_{(2)}[\rho_{\uparrow}, \rho_{\downarrow}] = c_{\mathrm{TF}}\int \Big(\rho_{\uparrow}^{5/3} + \rho_{\downarrow}^{5/3}\Big) + \int V(\rho_{\uparrow} + \rho_{\downarrow}).
	\end{equation}
	Moreover, let $E^{\mathrm{TF}}$ denote the Thomas-Fermi energy, defined by
	\begin{equation}\label{definition_energy_thomas_fermi_}
		E^{\mathrm{TF}} = \inf \bigg\{ \mathcal{E}^{\mathrm{TF}}_{(2)}[\rho_{\uparrow}, \rho_{\downarrow}], ~\rho_{\uparrow}, \rho_{\downarrow} \geq 0, \int \rho_{\uparrow} + \int \rho_{\downarrow} = 1\bigg\}.
	\end{equation}
	Note that for a fixed total density $\rho_{\uparrow} + \rho_{\downarrow}$, $\mathcal{E}^{\mathrm{TF}}$ is minimized by $\rho_{\uparrow} = \rho_{\downarrow}$. We introduce then the Thomas-Fermi energy functional for a given total density $\rho$ by
	\begin{equation}
		\mathcal{E}^{\mathrm{TF}}[\rho] = 2^{-2/3}c_{\mathrm{TF}}\int \rho^{5/3} + \int V \rho,
	\end{equation}
	and we have 
	
	\begin{equation}
		E^{\mathrm{TF}} = \inf \bigg\{ \mathcal{E}^{\mathrm{TF}}[\rho], ~\rho \geq 0, \int \rho = 1\bigg\}.
	\end{equation}
	Note that the Thomas-Fermi energy is equal to the Vlasov energy that we now define. For a phase-space density $m$, we define the Vlasov energy functional by
	\begin{equation}
		\mathcal{E}^{\mathrm{V}}[m] = \frac{1}{(2\pi)^3}\int \big(|p|^2 + V(x) \big) m(x, p) \d x \d p.
	\end{equation}
	For a given $m$, we define the spatial density $\rho_m$ by
	\begin{equation}
		\rho_m(x) = \frac{1}{(2\pi)^3}\int m(x, p)\d p.
	\end{equation} 
	Then, by taking $\tilde{m}(x, p) = \mathds{1}_{|p|\leq c \rho_{m}(x)^{1/3}}$, we find
	\begin{equation}
		\mathcal{E}^{\mathrm{V}}[\tilde{m}] \leq \mathcal{E}^{\mathrm{V}}[m].
	\end{equation}
	Therefore, the Thomas Fermi energy is indeed equal to the Vlasov energy $E^{\mathrm{V}}$
	\begin{equation}
		E^{\mathrm{TF}} = E^{\mathrm{V}} := \inf\bigg\{ \mathcal{E}^{\mathrm{V}}[m],~  0 \leq m \leq 2,~ \int m = (2\pi)^3 \bigg\}.
	\end{equation}
	In the following proposition, we recap the existence of a minimizer of the Thomas-Fermi functional, and state some of its properties.
	
	\begin{proposition}[Existence and properties of the minimizing density]\label{proposition_properties_thomas_fermi_density}~
		\begin{enumerate}
			\item There exists a unique minimizer of the Thomas-Fermi functional that we denote by $\rho^{\mathrm{TF}}$.
			\item There exists a Lagrange multiplyer $\lambda_{\mathrm{TF}}$ such that the following equality holds on $\supp \rho^{\mathrm{TF}}$
			\begin{equation}\label{equation_egalite_multiplicateur_lagrange}
				2^{-2/3}\frac{5}{3}c_{\mathrm{TF}}\big(\rho^{\mathrm{TF}}\big)^{2/3} + V = \lambda_{\mathrm{TF}},
			\end{equation}
			and the following inequality holds on $\big(\supp \rho^{\mathrm{TF}}\big)^c$
			\begin{equation}\label{equation_inegalite_multiplicateur_lagrange}
				V \geq \lambda_{\mathrm{TF}}.
			\end{equation}
			\item The minimizing density is compactly supported and satisfies $\rho^{\mathrm{TF}}\in L^{\infty}$. Moreover, if $V\in W^{1, \infty}_{\mathrm{loc}}$, $\rho^{\mathrm{TF}}\in W^{1, \infty}$.
		\end{enumerate}
	\end{proposition}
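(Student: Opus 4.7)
The plan is the classical direct method combined with an Euler-Lagrange analysis. First I would establish existence and uniqueness. For a minimizing sequence $(\rho_n)$, the bound $\int\rho_n^{5/3}\leq C$ yields $L^{5/3}$-boundedness, while $\int V\rho_n\leq C$ together with $V(x)\to+\infty$ provides tightness: for any $\varepsilon>0$ there is $R$ with $\int_{|x|>R}\rho_n\leq\varepsilon$. Extracting a weak-$L^{5/3}$ limit $\rho^{\mathrm{TF}}$ and combining weak convergence on $\{|x|\leq R\}$ with tightness preserves the unit-mass constraint. The map $\rho\mapsto\int\rho^{5/3}$ is weakly lower semicontinuous by convexity; the potential term is handled by monotone truncation $V_k=\min(V,k)\mathds{1}_{B_k}\uparrow V$, since each $V_k$ is bounded with compact support, hence lies in the dual $L^{5/2}$, giving $\int V_k\rho_n\to\int V_k\rho^{\mathrm{TF}}$, and monotone convergence then yields $\int V\rho^{\mathrm{TF}}\leq\liminf\int V\rho_n$. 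Uniqueness is immediate from the strict convexity of $\rho\mapsto\rho^{5/3}$ and the linearity of $\int V\rho$.

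Second, for the Euler-Lagrange equation, I would vary $\rho^{\mathrm{TF}}$ by admissible perturbations $\rho^{\mathrm{TF}}+t\eta$ with $\int\eta=0$ and $\rho^{\mathrm{TF}}+t\eta\geq 0$. On the open set $\{\rho^{\mathrm{TF}}>0\}$ two-sided variations are allowed, giving pointwise
\begin{equation*}
\tfrac{5}{3}\,2^{-2/3}c_{\mathrm{TF}}(\rho^{\mathrm{TF}})^{2/3}+V=\lambda_{\mathrm{TF}}
\end{equation*}
for a Lagrange multiplier $\lambda_{\mathrm{TF}}$. On $\{\rho^{\mathrm{TF}}=0\}$ only one-sided variations ($t\geq 0$) are admissible, producing the inequality $V\geq\lambda_{\mathrm{TF}}$.

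Third, the structural consequences follow from inverting the Euler-Lagrange identity:
\begin{equation*}
\rho^{\mathrm{TF}}(x)=\Bigl(\tfrac{3\cdot 2^{2/3}}{5\,c_{\mathrm{TF}}}\Bigr)^{3/2}\bigl(\lambda_{\mathrm{TF}}-V(x)\bigr)_+^{3/2}.
\end{equation*}
Since $V\to+\infty$, the set $\{V\leq\lambda_{\mathrm{TF}}\}$ is bounded, so $\rho^{\mathrm{TF}}$ is compactly supported. Because $V\in L^{\infty}_{\mathrm{loc}}$, the factor $(\lambda_{\mathrm{TF}}-V)_+$ is bounded on this compact set, and therefore $\rho^{\mathrm{TF}}\in L^{\infty}$. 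Under the stronger hypothesis $V\in W^{1,\infty}_{\mathrm{loc}}$, $(\lambda_{\mathrm{TF}}-V)_+$ is locally Lipschitz, and since $t\mapsto t_+^{3/2}$ is Lipschitz on bounded subsets of $[0,\infty)$, the composition is locally Lipschitz; combined with compact support this yields $\rho^{\mathrm{TF}}\in W^{1,\infty}$.

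The main obstacles are essentially bookkeeping: justifying the lower semicontinuity of $\int V\rho_n$ under only the qualitative hypothesis $V\to+\infty$, and verifying that $\lambda_{\mathrm{TF}}$ is well-defined so that $\{V<\lambda_{\mathrm{TF}}\}$ has positive Lebesgue measure (otherwise the mass constraint $\int\rho^{\mathrm{TF}}=1$ could not be satisfied; this value is in fact fixed by $\int\bigl(\tfrac{3\cdot 2^{2/3}}{5c_{\mathrm{TF}}}\bigr)^{3/2}(\lambda_{\mathrm{TF}}-V)_+^{3/2}=1$, which has a unique solution by monotonicity in $\lambda_{\mathrm{TF}}$). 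Once these points are settled, the explicit formula for $\rho^{\mathrm{TF}}$ delivers the support, boundedness, and regularity in one stroke.
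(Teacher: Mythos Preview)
Your proposal is correct and follows essentially the same approach as the paper: direct method for existence (minimizing sequence, $L^{5/3}$-boundedness, tightness from $V\to+\infty$, weak lower semicontinuity), strict convexity for uniqueness, variational derivation of the Euler--Lagrange relations, and then reading off compact support and regularity from those relations. Your explicit formula $\rho^{\mathrm{TF}}=C(\lambda_{\mathrm{TF}}-V)_+^{3/2}$ and your truncation argument for the potential term are slightly more detailed than the paper's presentation, but the underlying argument is the same.
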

	\begin{proof}~\\
			Step $1.$ First, as the Thomas-Fermi functional is strictly convex, there is at most one minimizer. Then, it remains to show that there exists at last one minimizer. Since everything is positive, we have $E^{\mathrm{TF}} \geq 0$. Let $(\rho_n)$ be a minimizing sequence of $\mathcal{E}^{\mathrm{TF}}$. Then, $(\rho_n)$ is clearly bounded in $L^{5/3}$ and as $V(x) \to + \infty$ when $|x| \to +\infty$, $(\rho_n)$ is tight. Thus, up to a subsequence, $(\rho_n)$ converges weakly to a certain $\rho\in L^{5/3}$ satisfying $\int \rho = 1$. By lower semicontinuity of the $L^{5/3}$ norm, we have
			\begin{equation}
				\int \rho^{5/3} \leq \liminf \int \rho_n^{5/3}.
			\end{equation}
			Likewise, we have
			\begin{equation}
				\int V \rho \leq \liminf \int V \rho_n.
			\end{equation}
			Therefore, we finally have 
			\begin{equation}
				\mathcal{E}^{\mathrm{TF}}[\rho] \leq \liminf \mathcal{E}^{\mathrm{TF}}[\rho_n] = E^{\mathrm{TF}}.
			\end{equation}
			Step $2.$ We set $G = 2^{-2/3}\frac{5}{3}c_{\mathrm{TF}}\big(\rho^{\mathrm{TF}}\big)^{2/3} + V$. Let $\sigma \in C^{\infty}_c\big(\supp \rho^{\mathrm{TF}}\big)$ such that for a certain $\varepsilon_0> 0 $, $\rho^{\mathrm{TF}} + \varepsilon_0\sigma \geq 0$, and satisfying $\int \sigma = 0$. Then, for $0 <\varepsilon < \varepsilon_0$, we have
			\begin{equation}
				\mathcal{E}^{\mathrm{TF}}[\rho + \varepsilon \sigma] \geq \mathcal{E}^{\mathrm{TF}}[\rho].
			\end{equation}
			Taking $\varepsilon \to 0$, this implies
			\begin{equation}
				\int G \sigma = 0.
			\end{equation}
			Therefore, $G$ is constant on $\supp \rho^{\mathrm{TF}}$ and we call the constant $\lambda^{\mathrm{TF}}$. Now, let $\sigma \in C^{\infty}_c\big((\supp \rho^{\mathrm{TF}})^c\big)$ such that $\sigma \geq 0$ and $\int \sigma = 1$. Then, for $0 < \varepsilon < 1$, we have
			\begin{equation}
				\mathcal{E}^{\mathrm{TF}}\big[(1 - \varepsilon)\rho^{\mathrm{TF}} +\varepsilon \sigma\big] \geq \mathcal{E}^{\mathrm{TF}}[\rho^{\mathrm{TF}}].
			\end{equation}
			Taking $\varepsilon \to 0$, this implies
			\begin{equation}
				\int V \sigma = \int G \sigma \geq \int G \rho^{\mathrm{TF}} = \lambda^{\mathrm{TF}}.
			\end{equation}
			Therefore, on $(\supp \rho^{\mathrm{TF}})^c$, we have $V \geq \lambda^{\mathrm{TF}}$.\\~\\	
			Step $3.$ By (\ref{equation_egalite_multiplicateur_lagrange}), $V \leq \lambda^{\mathrm{TF}}$ on $\supp \rho^{\mathrm{TF}}$. Since $V(x) \to + \infty$ when $|x|\to + \infty$, $\supp \rho^{\mathrm{TF}}$ is bounded and $\rho^{\mathrm{TF}}$ is compactly supported. Moreover, as $V\in L^{\infty}_{\mathrm{loc}}$, $V$ is bounded on the support of $\rho^{\mathrm{TF}}$ and by (\ref{equation_egalite_multiplicateur_lagrange}), $\rho^{\mathrm{TF}}$ is bounded. When $V\in W^{1, \infty}_{\mathrm{loc}}$, it is clear that $\rho^{\mathrm{TF}}\in W^{1, \infty}$ by (\ref{equation_egalite_multiplicateur_lagrange}).
	\end{proof}
	Now that we have established the existence and properties of the minimizer of $\mathcal{E}^{\mathrm{TF}}$, we show that the minimizing sequences of the Thomas Fermi functionals converge strongly to the minimizer (we will use this crucially to prove Lemma \ref{lemma_convergence_pseudo_density} and Lemma \ref{lemma_convergence_densite_projecteur}).
	
	\begin{lemma}[Strong convergence of a TF-minimizing sequence]\label{lemme_convergence_suites_minimisantes}~
		\begin{enumerate} 
			\item Let $(\rho_n, \sigma_n)$ be a minimizing sequence of $\mathcal{E}_{(2)}^{\mathrm{TF}}$, that is $\mathcal{E}_{(2)}^{\mathrm{TF}}[\rho_n, \sigma_n] \to E^{\mathrm{TF}}$, then we have 
			\begin{equation}
				\rho_n,~\sigma_n \to \frac{1}{2}\rho^{\mathrm{TF}}
			\end{equation}
			strongly in $L^{5/3}$.
			\item Let $(\rho_n)$ be a minimizing sequence of $\mathcal{E}^{\text{TF}}$, that is $\mathcal{E}^{\mathrm{TF}}[\rho_n] \to E^{\mathrm{TF}}$, we have 
			\begin{equation}
				\rho_n \to \rho^{\mathrm{TF}}
			\end{equation}
			strongly in $L^{5/3}$.
		\end{enumerate}
	\end{lemma}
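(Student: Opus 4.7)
The plan is to prove part 2 first and then deduce part 1 from it via a convexity comparison. Weak compactness of $(\rho_n)$ in $L^{5/3}$ has essentially been established in Step 1 of Proposition \ref{proposition_properties_thomas_fermi_density}: the $L^{5/3}$ bound follows from the kinetic term, tightness follows from $V(x)\to +\infty$, and the weak limit is identified as $\rho^{\mathrm{TF}}$ by the lower semicontinuity argument already given there, together with uniqueness of the minimizer. Since the weak limit of any subsequence must be the same $\rho^{\mathrm{TF}}$, the whole sequence converges weakly to $\rho^{\mathrm{TF}}$ in $L^{5/3}$, and the task reduces to upgrading this to strong convergence.

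The key step is to establish norm convergence $\|\rho_n\|_{5/3}\to \|\rho^{\mathrm{TF}}\|_{5/3}$. Combined with weak convergence and the uniform convexity of $L^{5/3}$ (Radon–Riesz property), this yields strong convergence. To obtain norm convergence, split $\mathcal{E}^{\mathrm{TF}}[\rho_n] = 2^{-2/3}c_{\mathrm{TF}}\int \rho_n^{5/3} + \int V\rho_n$: the first term is weakly lower semicontinuous on $L^{5/3}$, and the second is lower semicontinuous thanks to $V\to +\infty$ (tightness plus Fatou on $\{V\leq R\}$ and control of the tail by the $L^{5/3}$-bound on $\rho_n$ together with $V\rho_n\geq 0$). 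Since the two liminfs sum to at least $E^{\mathrm{TF}}$ while the corresponding sum along the sequence converges to $E^{\mathrm{TF}}$, each liminf must coincide with its limit value, which forces $\int \rho_n^{5/3}\to \int (\rho^{\mathrm{TF}})^{5/3}$ and concludes the strong convergence.

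For part 1, the convexity of $t\mapsto t^{5/3}$ yields the pointwise inequality $\rho_n^{5/3} + \sigma_n^{5/3} \geq 2^{-2/3}(\rho_n+\sigma_n)^{5/3}$, hence $\mathcal{E}^{\mathrm{TF}}_{(2)}[\rho_n,\sigma_n] \geq \mathcal{E}^{\mathrm{TF}}[\rho_n+\sigma_n] \geq E^{\mathrm{TF}}$. Since the left-hand side tends to $E^{\mathrm{TF}}$, the total density $\rho_n+\sigma_n$ is a minimizing sequence for $\mathcal{E}^{\mathrm{TF}}$ and converges strongly to $\rho^{\mathrm{TF}}$ by part 2. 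Running the same weak compactness argument on the pair $(\rho_n,\sigma_n)$ identifies their weak limits as $\tfrac{1}{2}\rho^{\mathrm{TF}}$, using strict convexity of $\mathcal{E}^{\mathrm{TF}}_{(2)}$ in $(\rho_\uparrow,\rho_\downarrow)$ to pin down the unique minimizer $(\tfrac{1}{2}\rho^{\mathrm{TF}},\tfrac{1}{2}\rho^{\mathrm{TF}})$. A three-term lower-semicontinuity argument applied to $\int \rho_n^{5/3}$, $\int \sigma_n^{5/3}$ and $\int V(\rho_n+\sigma_n)$ then delivers convergence of the individual $L^{5/3}$ norms, and Radon–Riesz concludes. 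The only delicate ingredient is the Radon–Riesz property, but this is a standard consequence of uniform convexity of $L^{5/3}$.
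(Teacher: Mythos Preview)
Your proof is correct and follows essentially the same strategy as the paper's: weak compactness and tightness identify the weak limit as the unique minimizer, then the observation that each weakly lower semicontinuous piece of the functional must actually converge (since their sum does) yields norm convergence, and Radon--Riesz (cited in the paper as Proposition~3.32 of Brezis) upgrades this to strong $L^{5/3}$ convergence. The only structural difference is the order of the two parts: the paper proves part~1 directly and deduces part~2 by setting $\sigma_n=\tfrac{1}{2}\rho_n$, whereas you prove part~2 first and reduce part~1 to it via the convexity inequality $\rho^{5/3}+\sigma^{5/3}\geq 2^{-2/3}(\rho+\sigma)^{5/3}$; both reductions are elementary.
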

	\begin{proof}~\\
		Step 1. Let $(\rho_n, \sigma_n)$ such that $\int (\rho_n + \sigma_n) = 1$ and $\mathcal{E}_{(2)}^{\mathrm{TF}}[\rho_n, \sigma_n] \to E^{\mathrm{TF}}$. We have convergence up to extraction of $\rho_n$ weakly-$*$ in $(L^{\infty})^*$ and weakly in $L^{5/3}$, to a $\rho_{\infty}$, and similarly for $\sigma$. We deduce that for a fixed $E\geq 0$,
		\begin{equation}\label{equation_liminf_machin}
			\liminf \left(c^{\mathrm{TF}}\int \rho_n^{5/3} + \int_{V\leq E} V \rho_n \right) \geq c^{\mathrm{TF}}\int \rho_{\infty}^{5/3} + \int_{V\leq E} V \rho_{\infty}.
		\end{equation}
		We set $I^E_n = \int_{V \geq E} (\rho_n+\sigma_n)$ and we have
		\begin{equation}
			E^{\text{TF}} + o(1) = \mathcal{E}_{(2)}^{\text{TF}}[\rho_n, \sigma_n] \geq \int_{V\geq E} V(\rho_n + \sigma_n) \geq E I^E_n.
		\end{equation} 
		Moreover, we have
		\begin{equation}
			E^{\text{TF}} + o(1) \geq \mathcal{E}_{(2)}^{\mathrm{TF}}[\rho_n\mathds{1}_{V\leq E}, \sigma_n \mathds{1}_{V\leq E}] + \int_{V\geq E}V(\rho_n + \sigma_n) \geq (1 - I_n^E)^{5/3} E^{\mathrm{TF}} + \int_{V\geq E}V(\rho_n + \sigma_n)
		\end{equation}
		and hence, for $E$ large but fixed, $(1 - I_n^E)^{5/3} \geq 1 - 2I_n^E$ which implies (by decomposing the integral)
		\begin{equation}
			E^{\mathrm{TF}} + o(1) \geq E^{\mathrm{TF}} + I_n^E(E/2 - 2E^{\mathrm{TF}}) + \frac{1}{2}\int_{V\geq E} V(\rho_n + \sigma_n).
		\end{equation}
		Then, if $E$ is large enough, $E/2 - 2E^{\mathrm{TF}} \geq 0$ and
		\begin{equation}\label{equation_convergence_integrale_depasse}
			\int_{V\geq E} V (\rho_n + \sigma_n) \to 0.
		\end{equation}
		Hence, for all $A \geq E$,
		\begin{equation}
			\int_{A\geq V\geq E} V(\rho_{\infty} + \sigma_{\infty}) = \lim_{n\to+\infty} \int_{A\geq V\geq E} V (\rho_n + \sigma_n) = 0,
		\end{equation}
		and thus
		\begin{equation}\label{equation_rho_infty_borne}
			\int_{V \geq E} V(\rho_{\infty} + \sigma_{\infty}) = 0.
		\end{equation}
		Therefore, combining \eqref{equation_liminf_machin} and \eqref{equation_rho_infty_borne}, we have 
		\begin{equation} E^{\text{TF}} + o(1) = \mathcal{E}^{\text{TF}}_{(2)}[\rho_n, \sigma_n] \geq \mathcal{E}^{\text{TF}}_{(2)}[\rho_{\infty}, \sigma_{\infty}] + o(1)
		\end{equation} 
		and thus 
		\begin{equation}\label{equation_rho_infty_TF}
			\rho_{\infty} = \sigma_{\infty} = \rho^{\text{TF}}/2.
		\end{equation} 
		As it is the only possible limit, the convergence holds without extracting. Moreover, by \eqref{equation_convergence_integrale_depasse}, \eqref{equation_rho_infty_borne}, \eqref{equation_rho_infty_TF} and weak-* convergence, we have
		\begin{equation} 
			\int V \rho_n \to \frac{1}{2}\int V \rho^{\mathrm{TF}}
		\end{equation}
		so 
		\begin{equation} 
			\int \rho_n^{5/3} \to \int \big(\rho^{\text{TF}}/2\big)^{5/3}, 
		\end{equation}
		which finally concludes the proof by Proposition 3.32 of \cite{brezis_functional_2011}, since the weak convergence and the convergence of the $L^{5/3}$ imply the strong convergence of $(\rho_n)$ and $(\sigma_n)$.\\~\\
		Step 2. If we take $\sigma_n = \frac{1}{2}\rho_n$, we have $\mathcal{E}^{\mathrm{TF}}_{(2)}[\sigma_n, \sigma_n] \to E^{\mathrm{TF}}$, and therefore 
		\begin{equation}
			\frac{1}{2}\rho_n = \sigma_n \to \frac{1}{2}\rho^{\mathrm{TF}}
		\end{equation}
		strongly in $L^{5/3}$ by Step 1.
	\end{proof}
	
	With our new notation, we can now express the perturbed two-spin Thomas-Fermi energy functional by
	\begin{equation}
		\mathcal{E}_N^{\mathrm{TF}}[\rup, \rdown] = \mathcal{E}^{\mathrm{TF}}_{(2)}[\rup, \rdown] + 8\pi a_wN^{1/3-\beta}\int \rup\rdown,
	\end{equation}
	and we finally prove that the perturbed Thomas-Fermi energy is equal to the Thomas-Fermi energy plus the interaction energy of the minimizer.
	
	\begin{proposition}[Equality between Thomas-Fermi energies] We recall that $E_N^{\mathrm{TF}}$ and $E^{\mathrm{TF}}$ are defined by \eqref{equation_definition_thomas_fermi_energy_deux_spin_perturbed} and \eqref{definition_energy_thomas_fermi_}. We have 
		\begin{equation}
			E^{\mathrm{TF}}_N = E^{\mathrm{TF}} + 2\pi a_w N^{1/3-\beta}\int \big(\rho^{\mathrm{TF}}\big)^2 + o(N^{1/3-\beta})
		\end{equation}
	\end{proposition}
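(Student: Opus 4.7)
The plan is to prove matching upper and lower bounds. For the upper bound, I use the pair $(\rho^{\mathrm{TF}}/2, \rho^{\mathrm{TF}}/2)$ as a trial state in the variational problem \eqref{equation_definition_thomas_fermi_energy_deux_spin_perturbed}: since the function $t \mapsto t^{5/3} + (\rho - t)^{5/3}$ is minimized at $t = \rho/2$, we get $\mathcal{E}^{\mathrm{TF}}_{(2)}[\rho^{\mathrm{TF}}/2, \rho^{\mathrm{TF}}/2] = \mathcal{E}^{\mathrm{TF}}[\rho^{\mathrm{TF}}] = E^{\mathrm{TF}}$, and a direct computation of the interaction term gives $E_N^{\mathrm{TF}} \leq E^{\mathrm{TF}} + 2\pi a_w N^{1/3 - \beta} \int (\rho^{\mathrm{TF}})^2$ (the integral being finite since $\rho^{\mathrm{TF}}\in L^\infty$ is compactly supported by Proposition~\ref{proposition_properties_thomas_fermi_density}).

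For the lower bound, I pick for each $N$ an approximate minimizer $(\rup^N, \rdown^N)$ with $\mathcal{E}_N^{\mathrm{TF}}[\rup^N, \rdown^N] \leq E_N^{\mathrm{TF}} + o(N^{1/3 - \beta})$; such pairs exist since $E_N^{\mathrm{TF}}$ is an infimum. Because $\int \rup^N \rdown^N \geq 0$, combining with the upper bound yields
\[ \mathcal{E}^{\mathrm{TF}}_{(2)}[\rup^N, \rdown^N] \leq E_N^{\mathrm{TF}} + o(N^{1/3 - \beta}) \leq E^{\mathrm{TF}} + O(N^{1/3 - \beta}), \]
so $(\rup^N, \rdown^N)$ is automatically a minimizing sequence for $\mathcal{E}^{\mathrm{TF}}_{(2)}$. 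Lemma~\ref{lemme_convergence_suites_minimisantes} then forces $\rup^N, \rdown^N \to \rho^{\mathrm{TF}}/2$ strongly in $L^{5/3}$.

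It remains to show that $\liminf_N \int \rup^N \rdown^N \geq \tfrac{1}{4}\int (\rho^{\mathrm{TF}})^2$. Strong $L^{5/3}$ convergence allows me to extract, along any subsequence, a further subsequence $N_k$ along which both $\rup^{N_k}$ and $\rdown^{N_k}$ converge to $\rho^{\mathrm{TF}}/2$ almost everywhere; since these densities are non-negative, Fatou's lemma applied to the product $\rup^{N_k} \rdown^{N_k}$ yields the desired inequality. Plugging this back into the energy identity and using $\mathcal{E}^{\mathrm{TF}}_{(2)}[\rup^N,\rdown^N] \geq E^{\mathrm{TF}}$, I obtain
\[ E_N^{\mathrm{TF}} \geq E^{\mathrm{TF}} + 8\pi a_w N^{1/3 - \beta} \int \rup^N \rdown^N - o(N^{1/3 - \beta}) \geq E^{\mathrm{TF}} + 2\pi a_w N^{1/3 - \beta} \int (\rho^{\mathrm{TF}})^2 + o(N^{1/3 - \beta}), \]
which matches the upper bound.

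The only delicate point is the lower bound on the cross term $\int \rup^N \rdown^N$: since $L^{5/3}$ convergence does not give $L^2$ control on $(\rup^N,\rdown^N)$, Cauchy--Schwarz-type arguments are not directly available. The observation that Fatou's lemma alone is enough, precisely because one only needs a one-sided bound and the integrand is non-negative, bypasses this issue without requiring any additional $L^p$ estimate on the approximate minimizers.
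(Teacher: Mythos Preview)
Your proof is correct and follows essentially the same approach as the paper: the upper bound via the trial pair $(\rho^{\mathrm{TF}}/2,\rho^{\mathrm{TF}}/2)$, and the lower bound by showing that approximate minimizers of $\mathcal{E}_N^{\mathrm{TF}}$ form a minimizing sequence for $\mathcal{E}_{(2)}^{\mathrm{TF}}$, then invoking Lemma~\ref{lemme_convergence_suites_minimisantes} and Fatou's lemma along an a.e.\ convergent subsequence. Your treatment of the subsequence extraction and of the cross term $\int \rup^N\rdown^N$ is in fact slightly more carefully worded than the paper's.
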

	\begin{proof}
		First, we have 
		\begin{equation}
			E^{\mathrm{TF}}_N \leq \mathcal{E}^{\mathrm{TF}}_N\bigg[\frac{1}{2}\rho^{\mathrm{TF}}, \frac{1}{2}\rho^{\mathrm{TF}}\bigg] = E^{\mathrm{TF}} + 2\pi a_w N^{1/3-\beta}\int \big(\rho^{\mathrm{TF}}\big)^2.
		\end{equation}
		It remains to show the other inequality. Let $\big(\rho^{\mathrm{TF}}_N, \sigma^{\mathrm{TF}}_N\big)$ be a sequence of pair of densities such that  
		\begin{equation}
			\mathcal{E}^{\mathrm{TF}}_N[\rho^{\mathrm{TF}}_N, \sigma_N^{\mathrm{TF}}] = E^{\mathrm{TF}}_N + o(N^{4/3 - \beta})
		\end{equation}
		Then, it is clear that $(\rho^{\mathrm{TF}}_N, \sigma^{\mathrm{TF}}_N)$ is a minimizing sequence of $\mathcal{E}^{\mathrm{TF}}_{(2)}$. Because of Lemma \ref{lemme_convergence_suites_minimisantes}, this implies that $\rho^{\mathrm{TF}}_N$ and $\sigma^{\mathrm{TF}}_N$ converges strongly to $\rho^{\mathrm{TF}}/2$ in $L^{5/3}$, and hence there exists a subsequence of $\rho^{\mathrm{TF}}_N$ converging almost everywhere to $\rho^{\mathrm{TF}}$. Therefore, by Fatou's lemma,
		\begin{equation}
			\liminf\int \big(\rho^{\mathrm{TF}}_N\big)^2 \geq \int \big(\rho^{\mathrm{TF}}\big)^2,
		\end{equation}
		which implies that 
		\begin{equation}
			E^{\mathrm{TF}}_N \geq E^{\mathrm{TF}} + 2\pi a_w N^{1/3-\beta}\int \big(\rho^{\mathrm{TF}}\big)^2 + o(N^{1/3 - \beta}).
		\end{equation}
	\end{proof}
	
	\subsection{Scattering length}\label{subsection_diffusion_length}
	
	Here, we give some information on the scattering length of an interaction potential. In particular, we show that when the intensity of the potential tends to infinity, the potential behaves like a hard-core and its scattering length tends to its range. Then, we deduce the equivalent of Theorem \ref{theorem} when $w_N$ is more singular. We recall the definition of the scattering length $a_v$ of a radial positive potential of finite range $v$ on $\R^3$:
	\begin{equation}
		4\pi a_v = \inf\left\{\int |\nabla f|^2 + \frac{1}{2}v|f|^2,~ f\underset{\infty}{\to}1\right\}.
	\end{equation}
	Let $f_v$ be the minimizer of the variational problem, it is called the zero-energy scattering solution, and it satisfies
	\begin{equation}
		-\Delta f_v + \frac{1}{2}v f_v = 0.
	\end{equation}
	Moreover, we denote by $R_v$ the range of $v$ defined by
	\begin{equation}\label{equation_definition_range}
		R_v = \inf\big\{R\in \R_+,~ \supp v \subset B(0, R)\big\}.
	\end{equation}
	We recap some well-known properties of the zero-energy scattering solution in the following proposition (see for instance Theorem 5.2. of \cite{rougerie_scaling_2021} and Theorem A.1. of \cite{lieb_ground_2001}):
	
	\begin{proposition}[Properties of $f_v$]
		The zero-energy scattering solution is positive, radial and satisfies:
		\begin{equation}
			\begin{cases}
				f_v(x) = 1 - a_v/|x| &\mathrm{if}~|x|\geq R_v\\
				f_v(x) \geq 1 - a_v/|x| &\mathrm{else}.
			\end{cases} 
		\end{equation}
		Moreover, $f_v$ is bounded from above
		\begin{equation}
			f_v \leq 1.
		\end{equation}
	\end{proposition}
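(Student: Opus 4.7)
The plan is to obtain the four claims (positivity, radiality, exterior formula, and pointwise bounds) as successive consequences of the variational characterization and the scattering ODE.

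First I would establish existence and uniqueness of $f_v$ by a direct method: the scattering functional is strictly convex and coercive on the affine class of functions $f$ with $f \to 1$ at infinity (using that $v$ is compactly supported so that only the kinetic term controls the behavior outside a ball), which gives a unique minimizer and the Euler-Lagrange equation $-\Delta f_v + \frac{1}{2} v f_v = 0$ in the distributional sense. Positivity follows since $|\nabla |f||^2 \leq |\nabla f|^2$ almost everywhere and $v |f|^2 = v \, ||f||^2$, so $|f_v|$ has no larger energy than $f_v$ and must coincide with it by uniqueness; the strong minimum principle applied to the subharmonic function $f_v$ (recall $-\Delta f_v = -\frac{v}{2} f_v \leq 0$) rules out interior zeros given the value $1$ at infinity. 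Radiality comes from rotation invariance of the functional combined with strict convexity.

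For the exterior formula, on $\{|x| > R_v\}$ the function $f_v$ is harmonic, and the radial harmonic functions on $\R^3$ tending to $1$ at infinity are precisely $r \mapsto 1 + c/r$. I would identify $-c$ with $a_v$ by integrating the scattering energy on $B(0,R)$ and using $-\Delta f_v = -\frac{v}{2} f_v$ to convert the bulk into a boundary flux $-\int_{\partial B_R} f_v \, \partial_r f_v$; passing to $R \to \infty$ yields $4\pi a_v = -4\pi c$, so $c = -a_v$.

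The upper bound $f_v \leq 1$ is a truncation argument: replacing $f_v$ by $\min(f_v, 1)$ preserves the condition at infinity, does not increase $|\nabla f|^2$ pointwise, and does not increase $v |f|^2$, so by uniqueness of the minimizer this truncation must already coincide with $f_v$. For the lower bound I would pass to the radial substitution $u(r) = r f_v(r)$, which turns the scattering equation into the one-dimensional ODE $u'' = \frac{1}{2} v \, u$ on $(0, +\infty)$. Since $u, v \geq 0$, $u$ is convex on $(0, R_v]$; the exterior formula forces $u$ to be the affine function $r \mapsto r - a_v$ on $[R_v, +\infty)$, so $u'(R_v^-) \leq u'(R_v^+) = 1$, and convexity gives $u'(r) \leq 1$ throughout $(0, R_v)$. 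Integrating backward from $R_v$ yields $u(r) \geq u(R_v) - (R_v - r) = r - a_v$, which is exactly the claimed inequality after dividing by $r$. No single step is really the main obstacle: the only point requiring care is the rigorous justification of the Euler-Lagrange equation, the maximum principle, and the convexity argument when $v$ is merely bounded with compact support, but this is entirely classical and detailed in the cited references.
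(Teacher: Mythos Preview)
Your argument is correct and follows the standard route; in fact the paper does not supply its own proof of this proposition but merely cites Theorem~5.2 of \cite{rougerie_scaling_2021} and Theorem~A.1 of \cite{lieb_ground_2001}, whose arguments are essentially the ones you reproduce: uniqueness from strict convexity, positivity and radiality from symmetrization, the exterior formula from harmonicity and an integration by parts, $f_v\le 1$ from truncation, and the lower bound from convexity of $u(r)=r f_v(r)$.

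Two small remarks. First, in your flux computation the sign of the boundary term is off: integrating $\int_{B_R}|\nabla f_v|^2+\tfrac12 v f_v^2$ by parts and using $-\Delta f_v+\tfrac12 v f_v=0$ gives $+\int_{\partial B_R} f_v\,\partial_r f_v$, not $-\int_{\partial B_R} f_v\,\partial_r f_v$; with $f_v=1+c/r$ this tends to $-4\pi c$, and your conclusion $c=-a_v$ is nonetheless correct. Second, since $u$ solves $u''=\tfrac12 v u$ with $u\ge 0$ and $v\ge 0$ on all of $(0,\infty)$, $u$ is globally convex and $C^1$, so in fact $u'(R_v^-)=u'(R_v^+)=1$; your inequality $u'(R_v^-)\le 1$ is of course still valid and is all the backward integration needs.
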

	Now, we can show that the scattering length of a potential whose intensity tends to infinity tends to the range of the interaction.
	\begin{lemma}[Scattering length of a potential of diverging intensity]\label{lemma_scattering_length_near_hardcore}
		\begin{equation}
			a_{Av} \to R_v~\text{when}~ A \to + \infty
		\end{equation}
	\end{lemma}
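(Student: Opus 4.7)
The plan is to establish the two one-sided bounds $\limsup_{A\to\infty} a_{Av}\le R_v$ and $\liminf_{A\to\infty} a_{Av}\ge R_v$ separately.

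For the upper bound I would simply use the explicit competitor $f_\star(x)=(1-R_v/|x|)_+$ in the variational definition of $a_{Av}$: it is Lipschitz, tends to $1$ at infinity, and vanishes on $\overline{B(0,R_v)}\supset\supp v$. A radial computation gives $\int|\nabla f_\star|^2=4\pi R_v$ while $\int v f_\star^2=0$, so $4\pi a_{Av}\le 4\pi R_v$ for every $A>0$. Incidentally $A\mapsto a_{Av}$ is nondecreasing and bounded by $R_v$, so the limit exists.

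For the lower bound I would argue by contradiction. Suppose $a_{A_kv}\to L<R_v$ along some $A_k\to+\infty$, and let $f_k=f_{A_kv}$ be the corresponding zero-energy scattering solution. By the previous proposition, $f_k$ is radial, satisfies $0\le f_k\le 1$, and equals $1-a_{A_kv}/|x|$ for $|x|\ge R_v$. The variational identity
\[
\int|\nabla f_k|^2+\frac{A_k}{2}\int v f_k^2=4\pi a_{A_kv}\le 4\pi R_v
\]
bounds $(f_k)$ in $\dot H^1$ and forces $\int v f_k^2\le 8\pi R_v/A_k\to 0$. Passing to a subsequence, $f_k\rightharpoonup g$ weakly in $\dot H^1$, strongly in $L^2_{\mathrm{loc}}$ and a.e.\ (Rellich), with $g$ radial. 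The explicit formula outside $B(0,R_v)$ identifies $g(x)=1-L/|x|$ there, whence $g\to 1$ at infinity. Fatou's lemma applied to $v f_k^2\to v g^2$ a.e.\ yields $\int vg^2=0$, i.e.\ $g=0$ a.e.\ on $\{v>0\}$.

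To close the argument I would exploit the radial structure. A radial element of $\dot H^1(\R^3)$ satisfies $\int_0^\infty(g')^2r^2\,dr=(4\pi)^{-1}\int|\nabla g|^2<\infty$, hence $g'\in L^1_{\mathrm{loc}}((0,\infty))$ by Cauchy--Schwarz, so $g$ is locally absolutely continuous in the radial variable. Because $v$ is radial and $R_v$ is the outer radius of $\supp v$, the essential supremum of $|x|$ on $\{v>0\}$ is exactly $R_v$; hence for every $\varepsilon>0$ the $1$D set $\{r\in(R_v-\varepsilon,R_v]:v(r)>0\}$ has positive measure, and I can pick $r_0$ in it with $g(r_0)=0$. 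Cauchy--Schwarz applied to $1=g(\infty)-g(r_0)=\int_{r_0}^\infty g'(r)\,dr$ then gives
\[
1\le\left(\int_{r_0}^\infty g'(r)^2 r^2\,dr\right)^{1/2}\left(\int_{r_0}^\infty\frac{dr}{r^2}\right)^{1/2},
\]
so $\int|\nabla g|^2\ge 4\pi r_0>4\pi(R_v-\varepsilon)$. Combining with weak lower semicontinuity $\int|\nabla g|^2\le\liminf\int|\nabla f_k|^2\le 4\pi L$ and letting $\varepsilon\to 0$ contradicts $L<R_v$.

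I expect the most delicate point to be upgrading ``$g=0$ a.e.\ on $\{v>0\}$'' to the pointwise statement ``$g(r_0)=0$ at some $r_0$ arbitrarily close to $R_v$'', which is what makes the Cauchy--Schwarz estimate effective. This relies both on the radial symmetry of $v$ (so that $\supp v$ is a union of annuli whose outer essential radius is exactly $R_v$) and on the absolute continuity of the radial profile of any radial $\dot H^1$ function on $(0,\infty)\setminus\{0\}$; the rest of the argument is soft compactness and weak lower semicontinuity.
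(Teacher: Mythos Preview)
Your proof is correct. The upper bound is identical to the paper's. For the lower bound, however, the paper takes a much shorter route that avoids all compactness arguments: it uses directly the pointwise lower bound $f_{Av}(x)\ge 1-a_{Av}/|x|$ from the preceding Proposition. Under the contradiction hypothesis $a_{A_nv}\le R_v-\varepsilon$, this bound forces $f_n\ge \varepsilon/(3R_v-2\varepsilon)$ on the fixed annulus $C_\varepsilon=\{R_v-2\varepsilon/3\le|x|\le R_v-\varepsilon/3\}$, so that $\tfrac{A_n}{2}\int v f_n^2\ge c A_n\int_{C_\varepsilon}v\to+\infty$, contradicting the uniform bound $4\pi R_v$ on the scattering energy. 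Your argument instead passes to a weak $\dot H^1$ limit $g$, identifies $g=0$ a.e.\ on $\{v>0\}$ via Fatou, and then uses the radial Hardy/Cauchy--Schwarz trick to bound $\int|\nabla g|^2$ from below by $4\pi r_0$ with $r_0$ arbitrarily close to $R_v$. This is heavier (Rellich, weak lower semicontinuity, radial absolute continuity) but has the virtue of relying only on the variational characterization of $a_v$ and not on the explicit pointwise estimate for the scattering solution; it would transfer to situations where such a pointwise bound is unavailable. Note also that the delicate point you flag---that $\{r\in(R_v-\varepsilon,R_v]:v(r)>0\}$ has positive radial measure for every $\varepsilon>0$---is exactly the same implicit assumption the paper uses when it asserts $\int_{C_\varepsilon}v>0$.
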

	\begin{proof}
		First, we show that in general, the scattering length of a potential is smaller thant its radius. Let
		\begin{equation}
			f_0(x) = \begin{cases}
				1 - R_v/|x|& \text{if}~ |x|\geq R_v\\
				0& \text{else}.
			\end{cases}
		\end{equation}
		Then, 
		\begin{equation}
			a_{Av} \leq \frac{1}{4\pi}\int \Big(|\nabla f_0|^2 + \frac{1}{2}A v|f_0|^2\Big) = \frac{1}{4\pi}\int_{x\geq R_v}|\nabla f_0|^2 = (R_v)^2\int_{R_v}^{+\infty}\frac{\d r}{r^2} = R_v.
		\end{equation}
		Thus, it remains to show that $a_{Av} \geq R_v + o(1)$. We fix $\varepsilon>0$ and assume that for a sequence $A_n \to + \infty$, we have 
		\begin{equation} 
			R_v - a_{A_nv} \geq \varepsilon. 
		\end{equation} 
		We denote by $f_n$ the minimizer of the scattering energy. Then, \begin{equation} 
			f_n(x)\geq 1 - a_{A_nv}/|x|. 
		\end{equation} 
		We set 
		\begin{equation} 
			C = \{x\in \R^3,~a_{A_nv}\leq |x| \leq R_v\}
		\end{equation} 
		and
		\begin{equation} 
			C_{\varepsilon} = \{x\in \R^3,~R_v - 2\varepsilon/3 \leq |x| \leq R_v - \varepsilon/3\}\subset C,
		\end{equation} 
		and we have
		\begin{equation}
			\int A_n v|f_n|^2\geq\int_{C} A_nv|f_n|^2 \geq A_n\int_{C_\varepsilon}v(x)\left(1 - \frac{a_{A_nv}}{R_v - 2\varepsilon/3}\right)^2\geq A_n \left(\frac{\varepsilon}{3R_v - 2\varepsilon}\right)^2\int_{C_\varepsilon}v \to + \infty.
		\end{equation}
		However, we know that $\int A_n v|f_n|^2 \leq 4\pi a_n \leq 4\pi R_v$, and therefore 
		\begin{equation}
			a_{A v} \underset{A\to + \infty}{\to}  R_v.
		\end{equation}
	\end{proof}
	We can now state the equivalent of Theorem \ref{theorem} for a more intense interaction potential.
	\begin{corollary}
		Let $1/3<\beta < 34/81$ and $w_N = N^{\alpha}w(N^{\beta}\cdot)$, with $\alpha > 2\beta-2/3$. Under (\ref{H_1}) and (\ref{H_2}), we have
		\begin{equation}
			E(N) = N E^{\mathrm{TF}} + 2\pi R_w N^{4/3-\beta}\int \big(\rho^{\mathrm{TF}}\big)^2 + o(N^{4/3-\beta}).
		\end{equation}
	\end{corollary}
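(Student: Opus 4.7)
The plan is to reduce the corollary to Theorem~\ref{theorem} by absorbing the extra intensity factor into the interaction potential. Setting $A_N := N^{\alpha - 2\beta + 2/3}$, which tends to $+\infty$ since $\alpha > 2\beta - 2/3$, I would rewrite
\[
w_N(x) = N^{2\beta - 2/3}\,(A_N w)(N^{\beta}x),
\]
so that we are exactly in the framework of Theorem~\ref{theorem} with the fixed interaction $w$ replaced by the $N$-dependent potential $v_N := A_N w$. Note that $v_N$ has the same range $R_w$ as $w$, and its scattering length is bounded by $R_w$ uniformly in $N$ (since $a_v \leq R_v$ for every radial positive compactly supported $v$, as appears in the first step of the proof of Lemma~\ref{lemma_scattering_length_near_hardcore}).

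Next, I would revisit the proof of Theorem~\ref{theorem} and argue that all estimates remain valid when $w$ is replaced by the family $v_N$, keeping the same error bound $o(N^{4/3-\beta})$. The upper bound uses a test state built from the zero-energy scattering solution $f_{v_N}$, which always satisfies $0 \leq f_{v_N} \leq 1$ regardless of the intensity of the potential, while the lower bound is based on the Dyson lemma, whose constants depend on $v_N$ only through its range $R_w$ and its scattering length $a_{v_N} \leq R_w$. Since both quantities remain bounded uniformly in $N$, the proof of Theorem~\ref{theorem} carries over and yields
\[
E(N) = N E^{\mathrm{TF}} + 2\pi a_{v_N}\, N^{4/3-\beta} \int (\rho^{\mathrm{TF}})^2 + o(N^{4/3-\beta}).
\]

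To conclude, Lemma~\ref{lemma_scattering_length_near_hardcore} gives $a_{v_N} = a_{A_N w} \to R_w$ as $N \to \infty$, hence $a_{v_N} = R_w + o(1)$. Multiplying by $N^{4/3-\beta}$ yields
\[
2\pi a_{v_N} N^{4/3-\beta} \int (\rho^{\mathrm{TF}})^2 = 2\pi R_w N^{4/3-\beta} \int (\rho^{\mathrm{TF}})^2 + o(N^{4/3-\beta}),
\]
which is the claimed asymptotic expansion.

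The main obstacle is the uniformity step: Theorem~\ref{theorem} is stated for a fixed interaction $w$, and one must verify that the constants hidden in the error term depend on $w$ only through quantities that stay bounded along the family $(v_N)$, namely $R_w$ and $\int w$, rather than through pointwise norms such as $\|w\|_\infty$ that would blow up with $A_N$. This is plausible because scattering solutions, Dyson regularizations and semiclassical inputs all couple to the potential through integrated quantities, but a careful bookkeeping of each estimate in Sections~\ref{section_upper_bound} and~\ref{section_lower_bound} is required to make the reduction rigorous.
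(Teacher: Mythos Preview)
Your approach is essentially the same as the paper's: the paper sets $\delta = \alpha - (2\beta - 2/3)$, writes $w_N = N^{2\beta-2/3}(N^{\delta}w)(N^{\beta}\cdot)$, applies Theorem~\ref{theorem} formally to obtain $a_{N^{\delta}w}$ in place of $a_w$, and then invokes Lemma~\ref{lemma_scattering_length_near_hardcore} to replace $a_{N^{\delta}w}$ by $R_w + o(1)$. The paper likewise acknowledges that the formal application requires adapting the proof of Theorem~\ref{theorem}, exactly the uniformity issue you flag in your last paragraph.
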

	Indeed, let $\delta = \alpha - 2\beta-2/3$, then if we apply Theorem \ref{theorem} formally, we find (skipping the detail)
	\begin{align}
		E(N) &= N E^{\mathrm{TF}} + 2\pi a_{N^{\delta}w} N^{4/3-\beta}\int \big(\rho^{\mathrm{TF}}\big)^2 + o(N^{4/3-\beta}) \notag\\
		&= N E^{\mathrm{TF}} + 2\pi R_w(1 + o(1)) N^{4/3-\beta}\int \big(\rho^{\mathrm{TF}}\big)^2 + o(N^{4/3-\beta}).
	\end{align}
	A more rigorous derivation is possible by adapting the proof of Theorem \ref{theorem} and replacing $a_w$ by $R_w + o(1)$.
	
	\subsection{Some general notation}
	Here we recall some usual notation. Let $\Psi_{\Nup, \Ndown} \in \Hcal_{\Nup, \Ndown}$, we set $\Gamma_{\Nup, \Ndown} = |\Psi_{\Nup, \Ndown}\rangle\langle \Psi_{\Nup, \Ndown}|$ the density matrix associated with $\Psi_{\Nup, \Ndown}$. Then, we define the one-particle spin-up reduced density matrix $\gamma_{\Psi_{\Nup, \Ndown}}^{(1, 0)}$ by
	\begin{equation}
		\gamma_{\Psi_{\Nup, \Ndown}}^{(1, 0)}(x, y) = \Nup \int \overline{\Psi_{\Nup, \Ndown}(x, x_2,...,x_N)} \Psi_{\Nup, \Ndown}(y, x_2,..., x_N)\d x_2... \d x_N,
	\end{equation}
	acting on 
	\begin{equation}
		\Hcal = \left\{ \varphi \in H^1(\mathbb{R}^d),~ \int V |\varphi|^2 < +\infty\right\}.
	\end{equation}
	Similarly, we define the one-particle spin-down reduced density matrix $\gamma_{\Psi_{\Nup, \Ndown}}^{(0,1)}$. These one-particle reduced density matrices satisfy the Pauli principle 
	\begin{equation}
		0 \leq \gamma_{\Psi_{\Nup, \Ndown}}^{(1, 0)},\gamma_{\Psi_{\Nup, \Ndown}}^{(0,1)} \leq \mathds{1}.
	\end{equation}
	Finally, we set the one-particle spin-up reduced density by
	\begin{equation}
		\rho_{\Psi_{\Nup, \Ndown}}^{(1, 0)}(x) = \gamma_{\Psi_{\Nup, \Ndown}}^{(1, 0)}(x, x) = \Nup \int |\Psi_{\Nup, \Ndown}(x, x_2,...,x_N)|^2 \d x_2... \d x_N,
	\end{equation}
	which satisfies
	\begin{equation}
		\int \rho_{\Psi_{\Nup, \Ndown}}^{(1, 0)}(x) \d x = \Nup.
	\end{equation}

	\section{Upper bound}\label{section_upper_bound}
	In this section, we prove the following Proposition.
	
	\begin{proposition}[Upper bound on the energy]\label{proposition_upper_bound} Provided that $w\in C^{0}_c$, $V \in W^{1, \infty}_{\mathrm{loc}}$ and $1/3<\beta < 34/81$, we have the following inequality:
		\begin{equation}\label{equation_upper_bound}
			E(N) \leq N E^{TF} + 2\pi N^{4/3 - \beta}\int \big(\rho^{\mathrm{TF}}\big)^2 + o(N^{4/3 - \beta}).
		\end{equation}
	\end{proposition}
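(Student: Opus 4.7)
The plan is to construct an explicit $N$-body trial state by tiling $\R^3$ with cubes $(Q_\alpha)_\alpha$ of side $\ell$ on which $\rho^{\mathrm{TF}}$ and $V$ are nearly constant, and filling each with a locally uniform two-spin Fermi gas of the Lieb-Seiringer-Solovej type at density $\rho^{\mathrm{TF}}(x_\alpha)/2$ per spin. The scale $\ell = \ell(N)$ is to be optimized, with $N^{-1/3} \ll \ell \ll 1$: small enough for the approximation of $\rho^{\mathrm{TF}}$ and $V$ by constants on each cube to be accurate, but large enough that every occupied cube contains many particles, i.e.\ $N \ell^3 \to +\infty$.

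For each cube, I set $\rho_\alpha := \rho^{\mathrm{TF}}(x_\alpha)/2$, pick integers $N_\alpha^{\uparrow} = N_\alpha^{\downarrow}$ close to $N \rho_\alpha \ell^3$, and adjust them by at most one so that $\sum_\alpha (N_\alpha^{\uparrow} + N_\alpha^{\downarrow}) = N$ (possible because $\int \rho^{\mathrm{TF}} = 1$). Rescaling $Q_\alpha$ by $y = N^{1/3}x$ turns it into a cube $\widetilde Q_\alpha$ of side $L_\alpha = \ell N^{1/3} \to +\infty$ containing a Fermi gas of density $\rho_\alpha = O(1)$ per spin with a short-range interaction of scattering length $a_{\mathrm{eff}} = N^{1/3-\beta} a_w \to 0$, as computed in Section~\ref{subsection_diffusion_length}; since $\rho_\alpha a_{\mathrm{eff}}^3 \to 0$, we are in the dilute regime of \cite{lieb_ground-state_2005}. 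In each $\widetilde Q_\alpha$, I invoke the LSS upper bound construction to produce a normalized antisymmetric trial state with $N_\alpha^{\uparrow}$ spin-up and $N_\alpha^{\downarrow}$ spin-down particles and energy at most
\begin{equation*}
L_\alpha^3 \left( 2 c_{\mathrm{TF}}\, \rho_\alpha^{5/3} + 8\pi a_{\mathrm{eff}}\, \rho_\alpha^2 + \text{remainder}\right).
\end{equation*}
As noted in the introduction, although the LSS result is stated in the thermodynamic limit, the construction itself only requires $L_\alpha \to +\infty$ and $\rho_\alpha a_{\mathrm{eff}}^3 \to 0$, with quantitative control permitting $L_\alpha$, $\rho_\alpha$ and $a_{\mathrm{eff}}$ to all depend on $N$.

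Undoing the rescaling inside each $Q_\alpha$ and tensorizing the trial states across cubes (antisymmetrization between different cubes is automatic because of the disjoint supports), and replacing $V(x)$ by $V(x_\alpha) + O(\ell)$ inside $Q_\alpha$ (valid because $V \in W^{1,\infty}_{\mathrm{loc}}$), each sum over cubes becomes a Riemann sum, and I obtain a total expected energy of
\begin{equation*}
N \int \bigl( 2^{-2/3} c_{\mathrm{TF}} (\rho^{\mathrm{TF}})^{5/3} + V \rho^{\mathrm{TF}} \bigr) + 2\pi a_w N^{4/3-\beta}\int (\rho^{\mathrm{TF}})^2 + \text{errors},
\end{equation*}
which is exactly the sought bound plus accumulated remainders. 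The main obstacle is precisely to choose $\ell$ so that every error is $o(N^{4/3-\beta})$: one must simultaneously control the LSS remainder (of the form $a_{\mathrm{eff}}^{5/2}\rho_\alpha^{7/3}$ or similar per unit rescaled volume, which after summation and unscaling has a definite scaling in $N$ and $\ell$), the Riemann-sum errors coming from replacing $\rho^{\mathrm{TF}}$ and $V$ by constants on each $Q_\alpha$, and the rounding errors on the $N_\alpha^{\uparrow/\downarrow}$. I expect the balance of these three contributions to force precisely the restriction $\beta < 34/81$ and to fix the optimal choice of $\ell$.
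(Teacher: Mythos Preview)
Your outline is essentially the paper's proof. Two small points worth flagging: first, the paper rescales by $N^{\beta}$ rather than $N^{1/3}$, landing in the LSS setting of \emph{fixed} interaction $w$ and \emph{vanishing} density $\varrho_i \sim N^{1/3-\beta}$ (equivalent to your fixed-density, vanishing-$a_{\mathrm{eff}}$ picture, but more directly citable since \cite{lieb_ground-state_2005} is phrased for fixed $v$ and $\rho\to 0$); second, you must separate adjacent boxes by corridors of width $r = N^{-\beta}R_w$ so that the tensorized trial state genuinely has no cross-box interaction terms --- disjoint supports of the wavefunctions alone do not prevent particles near facing boundaries of neighbouring cubes from interacting, since $w_N$ has positive range.
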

	
	To bound the ground state energy from above it is enough to find a state for which the energy is small enough. To do so, we decompose the space in small boxes so that the confining potential is almost constant in each box. Then, we create a state with particles of the lowest energy possible confined in the boxes. In each box, the number of particles is proportional to the local value of the Thomas-Fermi density $\rho^{\mathrm{TF}}$, and thus the state has a density which approximates $\rho^{\mathrm{TF}}$. Then we show that the energy of the particles in the small box can be mapped to the energy of another system, which can be bounded from above by \cite{lieb_ground-state_2005} (note that the bound we use relies on a similar decomposition). Nonetheless, instead of a thermodynamic limit, we use a combined limit; that is the density converges to 0 and the number of particles goes to infinity simultaneously. Therefore, we have to slightly adapt the inequality of \cite{lieb_ground-state_2005}.
	
	\subsection{Decomposition into boxes}
	
	\begin{figure}
		\centering
		\includegraphics[width=0.8\linewidth]{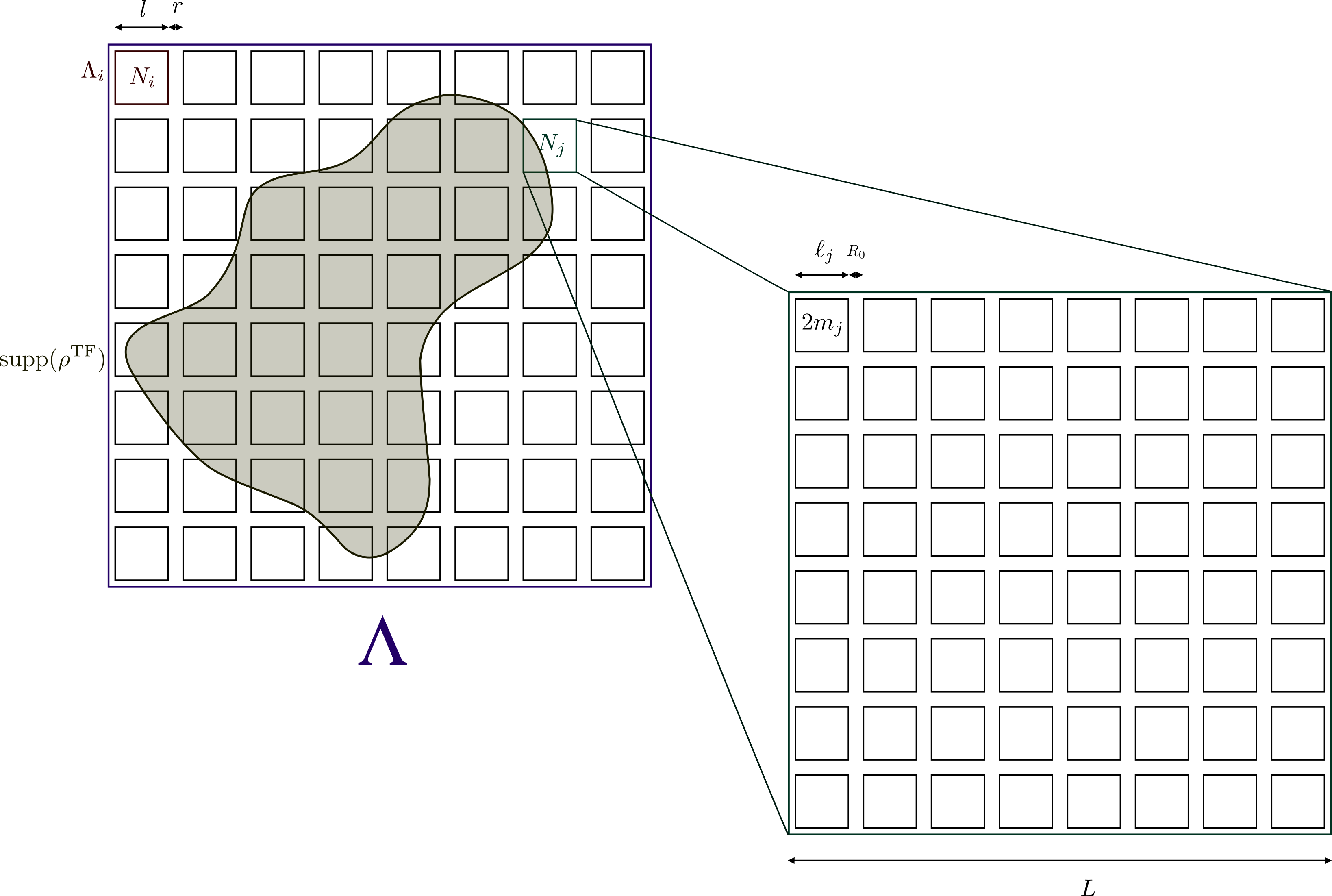}
		\caption{Representation of the double decomposition into boxes: for the second decomposition see the proof of Lemma \ref{lemma_energy_bound_small_box}. The grayed region is the support of $\rho^{\mathrm{TF}}$.}
	\end{figure}
	
	We have proved in Proposition \ref{proposition_properties_thomas_fermi_density} that $\rho^{\mathrm{TF}}$ is compactly supported, so we consider a cube $\Lambda\subset \R^3$ such that $\supp \rho^{\mathrm{TF}} \subset \Lambda$. For a given $l$ (which depends implicitly on $N$) and for $r= N^{-\beta}R_w$, we decompose $\Lambda$ into small boxes of size $l$ and at distance $r$ from one another, so that the potential is almost constant on each box and there is no interaction between two boxes. To be more precise, we have
	\begin{equation}
		\Lambda = \bigsqcup_{i\in I} \tilde{\Lambda}_i 
	\end{equation}
	with 
	\begin{equation}
		\tilde{\Lambda}_i = \prod_{k = 1}^3 \big[(\bar{x}_i)_k-(l+r)/2, (\bar{x}_i)_k + (l+r)/2 \big).
	\end{equation}	
	Moreover we set
	\begin{equation}
		\Lambda_i = \prod_{k = 1}^3 \big[(\bar{x}_i)_k-l/2, (\bar{x}_i)_k + l/2 \big).
	\end{equation}
	Let 
	\begin{equation}
		L = N^{\beta}l~~~~~\text{and}~~~~~\Lambda_L = [-L/2, L/2]^2.
	\end{equation}
	For the moment, we fix $N$ (and $\hbar$ accordingly). Let $N_i = 2M_i \in \N$, we consider
	\begin{equation}
		H^0_{N_i, \hbar} = \sum_{j = 1}^{N_i} \hbar^2 (-\Delta_j) + N^{2\beta - 2/3}\sum_{1 \leq j<k\leq N_i}w\big(N^{\beta}(x_j - x_k)\big) = H_{N_i, \hbar} - \sum_{j= 1}^{N_i} V(x_j)
	\end{equation}	
	and
	\begin{equation}
		\tilde{H}^0_{N_i, \hbar} = \sum_{j = 1}^{N_i}  (-\Delta_j) + \sum_{1 \leq j<k\leq N_i} w(x_j - x_k).
	\end{equation}	
	We want to prove the following proposition:
	
	\begin{proposition}[A change of variables]
		We have equality of the ground state energies:
		\begin{equation}
			\inf \Big\{ \langle \Psi | H_{N_i, \hbar}^0 | \Psi\rangle,~ \Psi \in L^2_{\mathrm{as}}\big(\Lambda_i^{M_i}\big)^{\otimes 2},~ \langle \Psi | \Psi\rangle = 1\Big\} = N^{2\beta - 2/3}\inf \Big\{\langle \Phi | \tilde{H}_{N_i, \hbar}^0 | \Phi\rangle,~ \Phi \in L^2_{\mathrm{as}}\big(\Lambda_L^{M_i}\big)^{\otimes2}, ~ \langle \Phi | \Phi\rangle = 1\Big\}.
		\end{equation}
	\end{proposition}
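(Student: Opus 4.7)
The claim is a pure scaling identity, so the plan is to exhibit a unitary between the two Hilbert spaces that conjugates $H^0_{N_i,\hbar}$ into $N^{2\beta-2/3}\tilde H^0_{N_i,\hbar}$.

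First, I would introduce the affine change of variables $y_j = N^{\beta}(x_j - \bar{x}_i)$, which maps the box $\Lambda_i$ (side length $l$, centered at $\bar{x}_i$) bijectively onto $\Lambda_L$ (side length $L = N^{\beta}l$, centered at the origin). This induces the isometry
\[
U\colon L^2(\Lambda_i^{N_i}) \to L^2(\Lambda_L^{N_i}),\quad (U\Psi)(y_1,\ldots,y_{N_i}) = N^{-3\beta N_i/2}\,\Psi\bigl(N^{-\beta}y_1 + \bar{x}_i,\ldots,N^{-\beta}y_{N_i}+\bar{x}_i\bigr).
\]
The prefactor $N^{-3\beta N_i/2}$ is exactly the Jacobian factor needed to preserve the $L^2$-norm, and since $U$ acts identically and separately on each coordinate it preserves antisymmetry in any prescribed block of variables. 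Hence $U$ restricts to a bijective isometry from $L^2_{\mathrm{as}}(\Lambda_i^{M_i})^{\otimes 2}$ onto $L^2_{\mathrm{as}}(\Lambda_L^{M_i})^{\otimes 2}$.

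Next, I would check how each term of $H^0_{N_i,\hbar}$ transforms under $U$. Using $\hbar^2 = N^{-2/3}$ together with the chain-rule identity $\nabla_{y_j}(U\Psi) = N^{-3\beta N_i/2}\,N^{-\beta}(\nabla_{x_j}\Psi)\circ(N^{-\beta}\,\cdot\,+\bar{x}_i)$, a change of variables in the integral gives
\[
\hbar^2 \sum_j \int_{\Lambda_i^{N_i}}|\nabla_{x_j}\Psi|^2\,dx = N^{2\beta-2/3}\sum_j \int_{\Lambda_L^{N_i}}|\nabla_{y_j}(U\Psi)|^2\,dy.
\]
For the interaction, the identity $w(N^\beta(x_j-x_k))=w(y_j-y_k)$ combined with the $N^{-3\beta N_i}$ from the Jacobian of the full change of variables (which exactly cancels the squared $U$-prefactor) yields
\[
N^{2\beta-2/3}\sum_{j<k}\int_{\Lambda_i^{N_i}} w\bigl(N^\beta(x_j-x_k)\bigr)|\Psi|^2\,dx = N^{2\beta-2/3}\sum_{j<k}\int_{\Lambda_L^{N_i}} w(y_j-y_k)|U\Psi|^2\,dy.
\]
Adding the two identities produces $\langle\Psi|H^0_{N_i,\hbar}|\Psi\rangle = N^{2\beta-2/3}\,\langle U\Psi|\tilde H^0_{N_i,\hbar}|U\Psi\rangle$.

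To conclude, it suffices to take the infimum over normalized $\Psi$ on the left-hand side and over $\Phi = U\Psi$ on the right, which is legitimate because $U$ is an isometric bijection of the appropriate antisymmetric subspaces. There is no serious obstacle here: this is just a clean verification that the scaling matches and that the fermionic structure (separate antisymmetry in the two blocks of $M_i$ variables) together with the Dirichlet form domain on a box are both preserved by the diagonal coordinate transformation.
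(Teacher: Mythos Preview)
Your proof is correct and follows essentially the same approach as the paper: both arguments use the dilation $y=N^{\beta}(x-\bar{x}_i)$ to relate the two quadratic forms, the only cosmetic difference being that you normalize the map into a unitary $U$ and read off the conjugation $U H^0_{N_i,\hbar} U^{-1} = N^{2\beta-2/3}\tilde H^0_{N_i,\hbar}$, whereas the paper leaves the map unnormalized and compares Rayleigh quotients directly.
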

	\begin{proof}
		As $H^0_{N_i}$ is invariant by translation, we assume for simplicity that $\bar{x}_i = 0$. Let $\Psi \in L^2_{\mathrm{as}}\big(\Lambda_i^{M_i}\big)^{\otimes 2}\cap H^1\big(\Lambda_i^{N_i}\big)$ and $\Phi = \Psi(N^{-\beta}\cdot)$. First, it is clear that 
		\begin{equation}
			\langle \Phi|\Phi\rangle = N^{3N_i \beta}\langle \Psi|\Psi\rangle.
		\end{equation}
		Then, with a change of variables $y = N^{\beta}x$, we have
		\begin{align}
			\langle \Phi | \tilde{H}^0_{N_i, \hbar} |\Phi\rangle &= \int_{(\Lambda_L)^{N_i}} \Bigg(\sum_{j= 1}^{N_i} \big|\nabla_j\Phi(y)\big|^2 + \sum_{1\leq j < k \leq N} w(y_j - y_k)|\Phi(y)|^2\Bigg)\d y_1 ...\d y_{N_i}\notag\\
			&= \int_{(\Lambda_L)^{N_i}} \Bigg(\sum_{j= 1}^{N_i}N^{-2\beta} \big|\nabla_j\Psi(N^{-\beta}y)\big|^2 + \sum_{1\leq j < k \leq N} w(y_j - y_k)|\Psi(N^{-\beta}y)|^2\Bigg)\d y_1 ...\d y_{N_i}\notag\\
			&= N^{3N_i \beta} \int_{(\Lambda_i)^{N_i}} \Bigg(\sum_{j=1}^{N_i} N^{-2\beta} \big|\nabla_j \Psi(x)\big|^2 + \sum_{1\leq j < k \leq N} w\big(N^{\beta}(x_j - x_k)\big)|\Psi(x)|^2\bigg)\d x_1 ...\d x_{N_i}\notag\\
			&= N^{3N_i \beta} N^{2/3 - 2 \beta}\int_{(\Lambda_i)^{N_i}} \Bigg(\sum_{j=1}^{N_i} \hbar^2 \big|\nabla_j \Psi(x)\big|^2 + \sum_{1\leq j < k \leq N} N^{2\beta - 2/3} w\big(N^{\beta}(x_j - x_k)\big)|\Psi(x)|^2\bigg)\d x_1 ...\d x_{N_i}\notag\\
			&= \frac{\langle \Phi|\Phi\rangle}{\langle \Psi|\Psi\rangle} N^{2/3-2\beta} \langle \Psi|H^0_{N_i, \hbar}| \Psi\rangle.
		\end{align}	
		Therefore,
		\begin{equation}
			\frac{\langle \Phi|\tilde{H}^0_{N_i, \hbar}|\Phi\rangle}{\langle \Phi|\Phi\rangle} = N^{2/3 - 2\beta}\frac{\langle \Psi|H^0_{N_i, \hbar}|\Psi\rangle}{\langle \Psi |\Psi\rangle},
		\end{equation}
		which concludes the proof.
	\end{proof}	
	For the rest of the section, we define $M_i\in \mathbb{N}$ by
	\begin{equation}
		M_i = \left\lceil \frac{N}{2}\int_{\tilde{\Lambda}_i} \rho^{\mathrm{TF}}  \right\rceil,
	\end{equation}
	$\lceil\cdot\rceil$ being the ceiling function.
	
	\subsection{Upper bound in the boxes}

	In this section, we bound 
	\begin{equation} 
		\tilde{E}^0_{\hbar}(N_i, L) :=  \inf \big\{\langle \Phi | \tilde{H}_{N_i, \hbar}^0 | \Phi\rangle,~ \Phi \in L^2_{\mathrm{as}}\big(\Lambda_L^{M_i}\big)^{\otimes 2}, ~ \langle \Phi | \Phi\rangle = 1\big\}
	\end{equation}
	from above.
	
	\begin{lemma}[Energy bound in a small box]\label{lemma_energy_bound_small_box}
		Let $\varrho_i = M_i/L^3 > 0$ (i.e. $M_i \geq 1$). We have
		\begin{equation}\label{equation_energy_bound_small_box}
			\frac{1}{L^3}\tilde{E}^0_{\hbar}(N_i, L) \leq \big(2c^{\mathrm{TF}} \varrho_i^{5/3} + 8\pi a_w \varrho_i^2\big)\big(1 + o(\varrho_i^{1/3}) + O(\varrho_i^{-20/27}/N^{\beta}l)\big)
		\end{equation}
	\end{lemma}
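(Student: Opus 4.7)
The plan is to adapt the upper bound of \cite{lieb_ground-state_2005} to the combined limit in which $\varrho_i \to 0$ and $L \to +\infty$ simultaneously. The Hamiltonian $\tilde{H}^0_{N_i,\hbar}$ restricted to $\Lambda_L$ is exactly the Lieb--Seiringer--Solovej Hamiltonian for $N_i = 2M_i$ spin-$1/2$ particles with balanced densities $\varrho_i$ in each spin component, so I can reuse their trial state construction and only need to track how the bulk error $o(\varrho^{1/3})$ degrades at finite $L$.

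Following \cite{lieb_ground-state_2005}, I would introduce a second, finer decomposition of $\Lambda_L$ into sub-boxes of side $s$ (to be chosen at the end) separated by corridors of width $R_w$, and distribute the particles of each spin evenly among the sub-boxes. The trial state is taken as a tensor product over these sub-boxes, so the two-body interaction only couples particles living in the same sub-box. Inside each sub-box, the trial state is a product of two spin Slater determinants of plane waves with momenta inside the Fermi ball of radius $(6\pi^2 \varrho_i)^{1/3}$, multiplied by a Jastrow factor $F(X) = \prod_{j,k} f(x^\uparrow_j - x^\downarrow_k)$, where $f$ is a suitable truncation of the zero-energy scattering solution of $w$. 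The kinetic energy of the two Slater determinants produces the free Fermi gas term $2 c_{\mathrm{TF}} \varrho_i^{5/3}$; the combination of $|\nabla F|^2$ with $\tfrac{1}{2} w |F|^2$ yields, through the variational definition of the scattering length, the interaction term $8\pi a_w \varrho_i^2$; the cross terms produce the relative correction $o(\varrho_i^{1/3})$ exactly as in \cite{lieb_ground-state_2005}.

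The one point where one must go beyond \cite{lieb_ground-state_2005} is tracking the dependence of the error on the finite volume $L^3$. Two sources of finite-$L$ effects appear: (i) the wasted volume in the corridors between sub-boxes, of relative size $O(R_w/s)$, and (ii) the discrepancy between the continuum Thomas--Fermi expression and the discrete sum over plane-wave momenta in a cube of side $s$, of relative size $O(1/(\varrho_i^{1/3} s))$. Both are multiplied by the bulk energy density; summing over the $(L/s)^3$ sub-boxes and optimizing $s$ against the internal LSS parameters (which are themselves expressed as powers of $\varrho_i$) should produce, after a short algebraic calculation, the announced relative error $O(\varrho_i^{-20/27}/L)$, which, since $L = N^\beta l$, is the term stated in the lemma.

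The main obstacle is the careful bookkeeping of all error terms in the LSS construction as explicit functions of $\varrho_i$ and the sub-box size $s$, in particular the finite-volume correction to the discrete Fermi-sea energy, and the ensuing optimization in $s$ that produces the specific exponent $-20/27$. Beyond this, everything is a direct adaptation of \cite{lieb_ground-state_2005}, and no new physical input is needed because the scattering length $a_w$ of the unscaled interaction already appears as the natural object in that construction.
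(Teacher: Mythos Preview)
Your overall strategy matches the paper's: decompose $\Lambda_L$ into sub-boxes of side $s$ (the paper writes $\ell_i$) separated by corridors, place the LSS trial state in each sub-box, and read off a finite-$L$ correction. The paper's proof is in fact only three lines: quote the sub-additivity inequality (a generalization of equation~(10) of \cite{lieb_ground-state_2005}) which gives
\[
\frac{1}{L^3}\tilde E^0_\hbar(N_i,L)\le \frac{1}{\ell_i^3}\tilde E^0_\hbar(2m_i,\ell_i)\bigl(1+O(\ell_i/L)\bigr),
\]
take $\ell_i\propto\varrho_i^{-20/27}$ as in LSS, and invoke the LSS upper bound in each sub-box.

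However, you have misidentified the source of the finite-$L$ term. The two effects you list---corridor waste $O(R_w/s)$ and momentum discretization $O(1/(\varrho_i^{1/3}s))$---depend only on $s$ and $\varrho_i$, not on $L$; they are precisely the \emph{internal} LSS errors that, after the LSS optimization in $s$, produce the bulk relative correction $o(\varrho_i^{1/3})$. Summing them over the $(L/s)^3$ sub-boxes changes nothing, since the energy \emph{density} is the same in every sub-box. The actual $L$-dependence comes from a mechanism you did not mention: when tiling $\Lambda_L$ by sub-boxes of side $s+R_0$, only $\lfloor L/(s+R_0)\rfloor^3$ of them fit, so a boundary layer of relative volume $O(s/L)$ is wasted. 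This is exactly the $O(\ell_i/L)$ factor displayed above. Once the LSS-optimal choice $s=\ell_i\propto\varrho_i^{-20/27}$ has been made (to control the bulk error), this packing loss becomes $O(\varrho_i^{-20/27}/L)=O(\varrho_i^{-20/27}/(N^\beta l))$ directly. No separate optimization against your items (i) and (ii) is needed; the exponent $-20/27$ is inherited from the LSS choice of sub-box size, not produced by balancing finite-$L$ contributions.
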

	\begin{remark}
		For this lemma to be used, we need to check that $\varrho_i \to 0$ when $N \to + \infty$. By definition,
		\begin{equation}
			\varrho_i \leq N^{1-3\beta} \left(\frac{1}{|\Lambda_i|}\int_{\Lambda_i}\rho^{\mathrm{TF}} +1\right)\leq N^{1/3 - \beta}\Big(2\|\rho^{\mathrm{TF}}\|_{L^{\infty}} +1\Big).
		\end{equation}
		Moreover, we remark that $\varrho_i \to 0$ when $N\to + \infty$ uniformly in $i$, therefore (\ref{equation_energy_bound_small_box}) may be written
		\begin{equation}
			\frac{1}{L^3}\tilde{E}^0_{\hbar}(N_i, L) \leq 2c^{\mathrm{TF}} \varrho_i^{5/3} + 8\pi a_w \varrho_i^2\big(1 + o(1)\big),
		\end{equation}
		the $o(1)$ being uniform in $i$.
		\hfill $\diamond$
	\end{remark}
	
	\begin{proof}[Proof of Lemma \ref{lemma_energy_bound_small_box}.]
		Following \cite{lieb_ground-state_2005}, we decompose once again $\Lambda_L$ into small boxes of size $\ell_i$ and at distance $R_0$ from each other. We set $m_i = \big\lceil\varrho_i(\ell_i + R_0)^3 \big\rceil$.~\\~\\		
		We have the following equality which is a generalization of equation (10) of \cite{lieb_ground-state_2005}
		
		\begin{equation}
			\frac{1}{L^3}\tilde{E}^0_{\hbar}(N_i, L) \leq \frac{1}{(\ell_i + R_0)^3}\tilde{E}^0_{\hbar}(2 m_i, \ell_i)\big(1 + O(\ell_i/L)\big) \leq \frac{1}{\ell_i^3}\tilde{E}^0_{\hbar}(2 m_i, \ell_i)\big(1 + O(\ell_i/L)\big).
		\end{equation}	
		Then, by taking 
		\begin{equation}
			\ell_i \propto \varrho_i^{-20/27},
		\end{equation}		
		we have, following \cite{lieb_ground-state_2005},
		
		\begin{equation}
			\frac{1}{\ell^3}\tilde{E}^0_{\hbar}(m_i, \ell) \leq 2c^{\mathrm{TF}} \varrho_i^{5/3} + 8\pi a_w \varrho_i^2 + o(\varrho_i^2),
		\end{equation}		
		which gives (\ref{equation_energy_bound_small_box}).		
	\end{proof}
	
	\begin{remark}
		Here, we can easily use the result of \cite{lieb_ground-state_2005} because the bound on $\tilde{E}^0_{\hbar}(N_i, L)$ is obtained via the energy of $m_i$ particles, and $m_i$ only depends on $\varrho_i$.
		\hfill $\diamond$
	\end{remark}
	\subsection{Conclusion}
	
	There remains to construct a state with at least $N$ particles (as per \eqref{equation_energie_croit_avec_nombre} below we do not need to have exactly $N$ particles in the trial state) whose energy is correctly bounded from above. Let $N' = \sum_{i\in I}N_i$. We have
	\begin{equation}
		N' \geq N \sum_{i\in I} \int_{\tilde{\Lambda}_i}\rho^{\mathrm{TF}} = N\int_{\Lambda} \rho^{\mathrm{TF}} = N.
	\end{equation}
	For each $i\in I$, let $\Psi_i\in L^2_{\mathrm{as}}\big(\Lambda_i^{M_i}\big)^{\otimes 2}$ minimizing $H_{N_i}^0$ on $\Lambda_i$ with Dirichlet boundary conditions of norm 1 More precisely, we take $\Psi_i$ such that 
	\begin{equation}
		\langle \Psi_i | H_{N_i,\hbar}^0|\Psi_i\rangle \leq N^{2\beta - 2/3}\tilde{E}_{\hbar}^0(N_i, L) + C\frac{\varrho_i^{25/27}}{L}.
	\end{equation}
	Then, we set 
	\begin{equation}
		\Psi_{N'} = \bigwedge_{i\in I} \Psi_i.
	\end{equation}	
	As the $\Psi_i$ are localized in the $\Lambda_i$, we have
	\begin{equation}
		\langle \Psi_{N'}|H_{N',\hbar}^0|\Psi_{N'}\rangle = \sum_{i\in I} \langle \Psi_i| H_{N_i, \hbar}^0|\Psi_i\rangle.
	\end{equation}	
	It is clear that for a given $\hbar$, the energy increases with the number of particles; we have
	\begin{align}\label{equation_energie_croit_avec_nombre}
		E(N) &\leq \langle \Psi_{N'}| H_{N, \hbar} | \Psi_{N'}\rangle \notag \\ &= \int\bigg(\sum_{j=1}^N N^{-2/3}|\nabla_j \Psi_{N'}(X_{N'})|^2 + \sum_{j=1}^N V(x_j)|\Psi_{N'}(X_{N'})|^2 + \sum_{1\leq j < k \leq N} w_N(x_j - x_k)|\Psi_{N'}(X_{N'})\bigg)\d X_{N'} \notag\\
		&\leq \int\bigg(\sum_{j=1}^{N'} N^{-2/3}|\nabla_j \Psi_{N'}(X_{N'})|^2 + \sum_{j=1}^{N'} V(x_j)|\Psi_{N'}(X_{N'})|^2 + \sum_{1\leq j < k \leq N'} w_N(x_j - x_k)|\Psi_{N'}(X_{N'})\bigg)\d X_{N'} \notag\\
		&= \langle \Psi_{N'}|H_{N', \hbar}|\Psi_{N'}\rangle.
	\end{align}	
	There only remains to bound the energy of $\Psi_{N'}$. We will treat separately the potential energy and the kinetic and interaction energies.~\\~\\
		$\bullet$ On the one hand, 
		\begin{equation}
			\langle \Psi_{N'}|H_{N', \hbar}^0|\Psi_{N'}\rangle \leq \sum_{i\in I} \langle \Psi_i| H_{N_i, \hbar}^0|\Psi_i\rangle \leq N^{2\beta - 2/3}\sum_{i\in I}\left(\tilde{E}^0_{\hbar}(N_i, L) + C\frac{\varrho_i^{25/27}}{L}\right). 
		\end{equation}
		By using Lemma \ref{lemma_energy_bound_small_box}, we get
		\begin{equation}
			\langle \Psi_{N'}|H_{N', \hbar}^0|\Psi_{N'}\rangle \leq N^{2\beta - 2/3}L^3\sum_{i\in I}\left(2 c^{\mathrm{TF}} \varrho_i^{5/3} + 8\pi a_w\varrho_i^2 + o(\varrho_i^2) + C\frac{\varrho_i^{25/27}}{L} \right).
		\end{equation}	
		Recall that $\varrho_i = M_i/L^3$ to find
		\begin{equation}
			\langle \Psi_{N'}|H_{N', \hbar}^0|\Psi_{N'}\rangle \leq N^{2\beta - 2/3} \sum_{i\in I} \left(2 c^{\mathrm{TF}} L^{-2}M_i^{5/3} + 8\pi a_wL^{-3}M_i^2\big(1 + o(1)\big) + C L^{-21/27}M_i^{25/27}\right).
		\end{equation}	
		For all $i\in I$, $M_i \leq \frac{N}{2}\int_{\tilde{\Lambda_i}}\rho^{\mathrm{TF}} + 1$, hence for $p\in \{5/3, 2\}$,	
		\begin{equation}
			M_i^p \leq \left(\frac{N}{2}\int_{\tilde{\Lambda}_i}\Big(\rho^{\mathrm{TF}}+ 2N^{-1}(l+r)^{-3}\Big)\right)^p\leq 2^{-p}N^p (l+r)^{3p - 3}\int_{\tilde{\Lambda_i}}\Big(\rho^{\mathrm{TF}} + 2N^{-1}(l + r)^{-3}\Big)^p
		\end{equation}
		by Jensen inequality. Moreover, 
		\begin{equation}
			M_i^{25/27}\leq 1 + \frac{25}{27}\frac{N}{2}\int_{\tilde{\Lambda}_i}\rho^{\mathrm{TF}}
		\end{equation}
		Therefore, as $L = N^{\beta}l$,
		\begin{align}
			\langle \Psi_{N'}|H_{N', \hbar}^0|\Psi_{N'}\rangle &\leq 2^{-2/3}c^{\mathrm{TF}} N \big(1 + r/l\big)^2 \int_{\Lambda}\Big(\rho^{\mathrm{TF}} + 2N^{-1}(l + r)^{-3}\Big)^{5/3} \notag \\
			&+ 2\pi R_wN^{4/3 - \beta}\big(1 + r/l\big)^3\big(1 + o(1)\big)\int_{\Lambda}\Big(\rho^{\mathrm{TF}} + 2N^{-1}(l + r)^{-3}\Big)^2\notag\\
			&+ C l^{-21/27}N^{-21\beta/27}\big(|I| + N \|\rho^{\mathrm{TF}}\|_{L^1}\big).
		\end{align}
		For $p\in \{5/3, 2\}$,
		\begin{equation}
			\big\|2N^{-1}(l+r)^{-3}\big\|_{L^p(\Lambda)} = 2|\Lambda|^{1/p}N^{-1}(l+r)^{-3} = o(N^{1/3 - \beta})
		\end{equation}
		as long as $l \gg N^{\beta/3 - 4/9}$. Moreover, since $l \gg \hbar$, we have $r/l = o(N^{1/3 - \beta})$. This implies that
		\begin{equation}
			\langle \Psi_{N'}|H_{N', \hbar}^0|\Psi_{N'}\rangle \leq 2^{-2/3}c^{\mathrm{TF}} N\int_{\Lambda}\big(\rho^{\mathrm{TF}}\big)^{5/3} \\
			+ 2\pi R_wN^{4/3 - \beta}\int_{\Lambda}\big(\rho^{\mathrm{TF}}\big)^2 + o(N^{4/3 - \beta})
		\end{equation}
		as long as 
		\begin{equation}l^{-21/27}N^{-21\beta/27}\big(|I| + N \|\rho^{\mathrm{TF}}\|_{L^1}\big) = o(N^{4/3-\beta}).\end{equation} 
		If $l^{-3} = o(N)$, this corresponds to 
		\begin{equation}l^{21/27}\gg N^{-21\beta/27 - 1 + 3\beta} ~~~\text{i.e.}~~~ l\gg N^{-\beta +27(3\beta-1)/21}.
		\end{equation}	
		$\bullet$ On the other hand,
		\begin{equation}
			\bigg\langle\Psi_{N'}\bigg|\sum_{j=1}^{N'} V(x_j)\bigg|\Psi_{N'}\bigg\rangle \leq \sum_{i\in I} N_i \sup_{\Lambda_i}V\leq  \sum_{i\in I}\Big(N  \sup_{\tilde{\Lambda}_i}V\int_{\tilde{\Lambda}_i}\rho^{\mathrm{TF}} + 2\Big).
		\end{equation}
		For $x\in \tilde{\Lambda}_i$, we have
		\begin{equation}
			\sup_{\tilde{\Lambda_i}}V \leq V(x) - \|\nabla V\|_{L^{\infty}(\Lambda)}l,
		\end{equation}
		and therefore
		\begin{equation}
			\bigg\langle\Psi_{N'}\bigg|\sum_{j=1}^{N'} V(x_j)\bigg|\Psi_{N'}\bigg\rangle \leq N\int_{\Lambda} V\rho^{\mathrm{TF}} - N \|\nabla V\|_{L^{\infty}(\Lambda)}l \int_{\Lambda}\rho^{\mathrm{TF}} + o(N^{4/3 - \beta}),
		\end{equation}
		provided that $|I| = O(l^{-3}) = o(N^{4/3 - \beta})$. As long as $l \ll N^{1/3 - \beta}$, we finally find
		\begin{equation}
			\bigg\langle\Psi_{N'}\bigg|\sum_{j=1}^{N'} V(x_j)\bigg|\Psi_{N'}\bigg\rangle \leq N\int_{\Lambda} V\rho^{\mathrm{TF}} + o(N^{4/3 - \beta}).
		\end{equation}
		$\bullet$ We have proved that, as long as $N^{1/3 - \beta} \gg l \gg N^{-\frac{27}{21}(1-3\beta) - \beta},~ N^{\beta/3 - 4/9}$, (\ref{equation_upper_bound}) holds. Therefore, it only remains to check that for $\beta < 34/81$, we have 
		\begin{equation} \label{equation_echelle_possible_l}
			N^{1/3 - \beta} \gg N^{-\frac{27}{21}(1-3\beta) - \beta},~ N^{\beta/3 - 4/9}.
		\end{equation}
		Indeed,
		\begin{equation}
			N^{1/3 - \beta} \gg N^{-\frac{27}{21}(1-3\beta) - \beta} \iff \frac{1}{3} - \beta > -\frac{27}{21}(1-3\beta) - \beta \iff \beta < \frac{34}{81}\approx 0.42
		\end{equation}
		and 
		\begin{equation}
			N^{1/3 - \beta} \gg N^{\beta/3 - 4/9} \iff \frac{1}{3} - \beta > \frac{\beta}{3} - \frac{4}{9} \iff \beta < \frac{7}{12}.
		\end{equation}
		This means that as long as $\beta < 34/81$, \eqref{equation_echelle_possible_l} is satisfied and we can find $l$ such that 
		\begin{equation}
			N^{1/3 - \beta} \gg l \gg N^{-\frac{27}{21}(1-3\beta) - \beta},~ N^{\beta/3 - 4/9}.
		\end{equation}
	This concludes the proof of Proposition \ref{proposition_upper_bound}.
	
	\section{Lower bound}\label{section_lower_bound}
	
	In this section, we prove the following proposition.
	\begin{proposition}[Lower bound on the energy]\label{proposition_lower_bound}
		For $1/3 < \beta <1/2$, and assuming \emph{(}\ref{H_1}\emph{)} and \emph{(}\ref{H_2}\emph{)}, we have the following inequality:
		\begin{equation}
			E(N)\geq N E^{\mathrm{TF}} + 2\pi a_w N^{4/3-\beta}\int \big(\rho^{\mathrm{TF}}\big)^2 + o(N^{4/3-\beta}).
		\end{equation}
	\end{proposition}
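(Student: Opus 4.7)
The plan is to take a sequence $(\Psi_N)$ of approximate ground states of $H_N$ satisfying
$$\langle \Psi_N | H_N | \Psi_N\rangle \leq E(N) + o(N^{4/3-\beta}),$$
and to prove the matching lower bound $\langle \Psi_N | H_N | \Psi_N\rangle \geq N\,\mathcal{E}_N^{\mathrm{TF}}\!\bigl[\rup/N, \rdown/N\bigr] + o(N^{4/3-\beta})$, where $\rup = \rho_{\Psi_N}^{(1,0)}$ and $\rdown = \rho_{\Psi_N}^{(0,1)}$ are the one-particle densities. Combined with the trivial inequality $N\mathcal{E}_N^{\mathrm{TF}}[\rup/N, \rdown/N] \geq N E_N^{\mathrm{TF}}$ and the equality $E_N^{\mathrm{TF}} = E^{\mathrm{TF}} + 2\pi a_w N^{1/3-\beta}\int(\rho^{\mathrm{TF}})^2 + o(N^{1/3-\beta})$ already proved, this yields the claim. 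The upper bound Proposition \ref{proposition_upper_bound} additionally forces $(\rup/N, \rdown/N)$ to be an almost-minimizing pair for $\mathcal{E}_{(2)}^{\mathrm{TF}}$, so Lemma \ref{lemme_convergence_suites_minimisantes} gives strong $L^{5/3}$ convergence of these rescaled densities to $\rho^{\mathrm{TF}}/2$, which is what will eventually identify the interaction contribution via a Fatou-type argument.

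The core of the argument is a fractional splitting of the kinetic operator $-\hbar^2 \Delta_j = -(1-\varepsilon_N)\hbar^2 \Delta_j - \varepsilon_N \hbar^2 \Delta_j$ with $\varepsilon_N \to 0$ slowly. The small $\varepsilon_N$ piece is combined with the interaction via the Dyson lemma (see e.g.\ \cite{rougerie_scaling_2021}): since only opposite-spin pairs need to be regularized (same-spin pairs are at typical separation $\hbar \gg N^{-\beta}$ by Pauli exclusion, hence their interaction contribution is $o(N^{4/3-\beta})$), one obtains on a suitable localized piece of configuration space the quadratic-form inequality
$$\varepsilon_N \hbar^2 \sum_{j=1}^{N}(-\Delta_j) + \sum_{j=1}^{\Nup}\sum_{k=\Nup+1}^{N} w_N(x_j - x_k) \geq (1-o(1))\sum_{j=1}^{\Nup}\sum_{k=\Nup+1}^{N} U_R(x_j - x_k),$$
where $U_R$ is a soft radial bump of range $R$ with $N^{-\beta} \ll R \ll \hbar$ and $\int U_R = 8\pi a_w \hbar^2 N^{-\beta}$. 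Taking expectation and approximating the mixed two-body density by $\rup(x)\rdown(y)$ (on scales much larger than $R$, controlled by Cauchy--Schwarz and the a priori kinetic bound) then produces the contribution $8\pi a_w N^{-2/3-\beta}\int \rup\,\rdown + o(N^{4/3-\beta})$, which is precisely the interaction term in $N\mathcal{E}_N^{\mathrm{TF}}[\rup/N,\rdown/N]$.

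For the remaining $(1-\varepsilon_N)$ fraction of the kinetic energy together with the trap $\sum_j V(x_j)$, I would apply a semi-classical Thomas--Fermi lower bound in the spirit of \cite{fournais_semi-classical_2018, lewin_semi-classical_2019}, obtained through a Husimi coherent state quantization at scale $\hbar$ combined with the Pauli bounds $0\leq \gamma_{\Psi_N}^{(1,0)}, \gamma_{\Psi_N}^{(0,1)}\leq \mathds{1}$. This yields
$$(1-\varepsilon_N)c_{\mathrm{TF}} N^{-2/3}\int\!\bigl(\rup^{5/3}+\rdown^{5/3}\bigr) + \int V(\rup+\rdown) + o(N^{4/3-\beta}),$$
which matches the unperturbed part of $N\mathcal{E}_N^{\mathrm{TF}}[\rup/N,\rdown/N]$; here (\ref{H_1}) supplies the commutator bounds for the Husimi smoothing kernel while (\ref{H_2}) ensures a sharp Weyl-type remainder. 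Summing the two contributions gives the claim. The main obstacle will be the quantitative balancing of all error terms so that the kinetic sacrifice $\varepsilon_N$, the Dyson remainder coming from the relation $R \gg N^{-\beta}$, and the semi-classical Husimi error are all simultaneously $o(N^{4/3-\beta})$; the restriction $\beta < 1/2$ is what makes this balancing feasible, and it is less restrictive than the $\beta < 34/81$ constraint imposed by the upper-bound side.
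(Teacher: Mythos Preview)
Your kinetic splitting is the wrong one, and this is a genuine gap. With a scalar fraction $\varepsilon_N\to 0$ of the Laplacian fed into the Dyson lemma you obtain
\[
-\varepsilon_N\hbar^2\Delta + \tfrac12 w_N \;=\; \varepsilon_N\hbar^2\Bigl(-\Delta + \tfrac1{2\varepsilon_N\hbar^2}w_N\Bigr)\;\geq\; \varepsilon_N\hbar^2\, a_{(\varepsilon_N\hbar^2)^{-1}w_N}\,U_R \;=\; \varepsilon_N\,\hbar^2 N^{-\beta}\, a_{\varepsilon_N^{-1}w}\,U_R,
\]
and by Lemma~\ref{lemma_scattering_length_near_hardcore} the factor $a_{\varepsilon_N^{-1}w}$ stays bounded by $R_w$, so the whole right-hand side carries an extra $\varepsilon_N$ and the interaction energy you extract is $o(N^{4/3-\beta})$ rather than $2\pi a_w N^{4/3-\beta}\int(\rho^{\mathrm{TF}})^2$. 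The paper's fix is a \emph{frequency-dependent} splitting: one writes $|p|^2 = |p|^2(1-\Gamma(p)) + |p|^2\Gamma(p)$ with $\Gamma(p)=\max\{1-p_F^2/|p|^2,0\}$ and sacrifices only the high-momentum piece $-\hbar^2\nabla\Gamma(p)\nabla$ to the Dyson lemma. This piece is negligible for the Thomas--Fermi bound (which lives at momenta $|p|\lesssim 1$) yet behaves like the full Laplacian on the interaction scale $N^{-\beta}\ll\hbar$, so that the momentum-cutoff Dyson lemma of \cite{lieb_ground-state_2005} (Lemma~\ref{lemma_Dyson_general} here) recovers the correct $a_w$ with only a $(1-s^2p_F^2)$ loss.

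A second point you underestimate is the replacement of the mixed two-body density by the product $\rup(x)\rdown(y)$. For fermions this factorization does not follow from Cauchy--Schwarz plus a kinetic bound. The paper instead conditions on the spin-down configuration $Y$, compares the conditioned spin-up one-body matrix $\gamma_Y$ to the \emph{free} projector $P_{\Nup}=\mathds{1}(H^0_\hbar\leq\lambda_{\Nup})$ via a Graf--Seiringer-type excitation estimate (Lemma~\ref{lemma_graf_seiringer}, which is where \emph{(}\ref{H_2}\emph{)} enters), and then proves separately that $N^{-1}\rho_{P_{\Nup}}\to\tfrac12\rho^{\mathrm{TF}}$ strongly in $L^{5/2}$ (using \emph{(}\ref{H_1}\emph{)} for commutator bounds) while the spin-down pseudo-density converges weakly in $L^{5/3}$. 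Relatedly, your plan to lower-bound the main part by $c_{\mathrm{TF}}N^{-2/3}\int(\rup^{5/3}+\rdown^{5/3})$ with the \emph{true} one-body densities would require a Lieb--Thirring inequality with the sharp semiclassical constant, which is not available; the paper circumvents this by passing directly to $NE^{\mathrm{TF}}$ via the Husimi function and the bathtub principle (Lemma~\ref{lemma_low_frequency}), never writing the kinetic lower bound in terms of $\rup,\rdown$.
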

	
	To prove Proposition \ref{proposition_lower_bound}, we have to check that we have a lower bound on the energy for all states, or for all states with low enough energy. For all this section, for each $N$, we fix a $\Psi_{\Nup, \Ndown} \in \Hcal_{\Nup, \Ndown}$ that almost minimizes $H_N$. To be more precise, we impose
	\begin{equation}\label{equation_supposition_etat_test}
		\langle\Psi_{\Nup, \Ndown}|H_N |\Psi_{\Nup, \Ndown}\rangle = E(N) + o(N^{4/3 - \beta}).
	\end{equation}
	In particular, because of Proposition \ref{proposition_upper_bound}, we have
	\begin{equation}\label{equation_bound_energy_test}
		\langle\Psi_{\Nup, \Ndown}|H_N |\Psi_{\Nup, \Ndown}\rangle \leq N E^{\mathrm{TF}} + O(N^{4/3-\beta}).
	\end{equation}
	Since the interaction between particles of same spin does not appear in the limit, we remove these interactions from the model. More precisely, we have 
	\begin{equation}
		H_N \geq \sum_{j=1}^N \Big(-\hbar^2\Delta_j + V(x_j)\Big) + \sum_{j=1}^{\Nup}\sum_{k=N_{\uparrow} + 1}^Nw_N(x_j - x_k).
	\end{equation}
	To recover the Thomas-Fermi energy, we divide the Hamiltonian into two parts that give the unperturbed Thomas-Fermi energy and the interaction energy respectively. Let
	\begin{equation}
		\Gamma(p) = \max\left\{1 - \frac{p_F^2}{|p|^2}, 0\right\}
	\end{equation}
	for a given $p_F\in \mathbb{R}_+$ that will depend on $N$. We decompose the kinetic energy in a low-frequency and a high-frequency parts.
	\begin{equation}
		H_N \geq H_N^{\mathrm{LF}} + H_{\Nup, \Ndown}^{\uparrow} + H_{\Nup, \Ndown}^{\downarrow},
	\end{equation}
	with
	\begin{equation}
		H_N^{\mathrm{LF}} = \sum_{j = 1}^N -\hbar^2 \nabla_j \big(1 - \Gamma(p_j)\big) \nabla_j + \sum_{j=1}^N V(x_j)
	\end{equation}
	and
	\begin{equation}
		H_{\Nup, \Ndown}^{\uparrow} = \sum_{j = 1}^N -\hbar^2 \nabla_j\Gamma(p_j)\nabla_j + \frac{1}{2}\sum_{j=1}^{\Nup}\sum_{k= N_{\uparrow} + 1}^N w_N(x_j - x_k).
	\end{equation}
	Let us now give a quick outline of the proof of the lower bound. First, in Section \ref{subsection_low_frequency}, we give a lower bound for $H_N^{\mathrm{LF}}$. More specifically, we show that it is bounded from below by $N E^{\mathrm{TF}}$ and an error term. Then, we have to bound $\langle \Psi_{\Nup, \Ndown}|H_{\Nup, \Ndown}^{\uparrow} \Psi_{\Nup, \Ndown}\rangle$ from below. To do so, we use the Dyson lemma in Section \ref{subsection_dyson} in order to replace $w_N$ by a less singular potential. In particular, the Dyson lemma makes $a_w$ appear naturally. To use this lemma, we have to use the higher frequencies of the kinetic energy, which is why we removed it from $H_N^{\mathrm{LF}}$. After using the Dyson lemma, we rewrite what we obtain, and have to deal with several error terms in Section \ref{subsection_error_estimates}. Then, in Section \ref{subsection_convergence_densities}, we estimate the main term for the interaction, and in particular, we show the convergence of the density of $\Psi_{\Nup, \Ndown}$ and of the projector on the $\Nup$ first eigenvalues of the free Hamiltonian towards the Thomas-Fermi density. We can then conclude in Section \ref{subsection_conclusion}, and explain the constraints on $\beta$.

	\subsection{Main term - low frequencies and potential energy}\label{subsection_low_frequency}
	
	We recall that
	\begin{equation}
		H_N^{\mathrm{LF}} = \sum_{j = 1}^N -\hbar^2 \nabla_j \big(1 - \Gamma(p_j)\big) \nabla_j + \sum_{j=1}^N V(x_j) =: \sum_{j=1}^N (H_0^{\mathrm{LF}})_j.
	\end{equation}
	
	\begin{lemma}[Main term of the energy]\label{lemma_low_frequency}
		For any sequence of state $(\Phi_{\Nup, \Ndown})$ such that $\tr(-\hbar^2 \Delta\gamma_{\Phi_{\Nup, \Ndown}}^{(1)}) = O(N)$, we have the following inequality
		\begin{equation}
			\langle \Phi_{\Nup, \Ndown}| H_N^{\mathrm{LF}}|\Phi_{\Nup, \Ndown}\rangle \geq N E^{\mathrm{TF}} + O(N p_F^{-2}) + O(N^{5/6}).
		\end{equation}
	\end{lemma}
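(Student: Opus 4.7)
The plan is to reduce the many-body expectation to a one-body semi-classical problem, and then to apply coherent-state and bathtub arguments to recover the Vlasov/Thomas--Fermi minimum. Since $H_N^{\mathrm{LF}}$ is a sum of one-body operators, the first step is the identity
\begin{equation*}
\langle\Phi_{\Nup,\Ndown}|H_N^{\mathrm{LF}}|\Phi_{\Nup,\Ndown}\rangle = \tr\bigl(H_0^{\mathrm{LF}}\, \gamma_\Phi^{(1)}\bigr),
\end{equation*}
with $\gamma_\Phi^{(1)} := \gamma_{\Phi_{\Nup,\Ndown}}^{(1,0)} + \gamma_{\Phi_{\Nup,\Ndown}}^{(0,1)}$ the spin-summed one-body density matrix, which satisfies $0\leq \gamma_\Phi^{(1)}\leq 2$ by the Pauli principle in each spin sector and $\tr\gamma_\Phi^{(1)} = N$. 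This reduces the task to a semi-classical lower bound on a one-body trace.

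The second step is a coherent-state lift to phase space. I would pick a family $\phi_{z,\xi}(y) = R^{-3/2}G((y-z)/R)\,e^{i\xi\cdot y}$ with $G$ a fixed smooth normalized profile and a width $R=R(N)$ to be optimized. The Husimi function $m(z,\xi) := \langle \phi_{z,\xi}|\gamma_\Phi^{(1)}|\phi_{z,\xi}\rangle$ then satisfies $0\leq m \leq 2$ and $(2\pi)^{-3}\int m\, dz\, d\xi = N$. Invoking Berezin--Lieb type inequalities separately for $V$ (where $V * |G_R|^2 - V = O(R^2 \|\nabla^2 V\|_{L^\infty(K)})$ on a compact $K\supset\supp\rho^{\mathrm{TF}}$ by (\ref{H_1}) and Proposition \ref{proposition_properties_thomas_fermi_density}) and for the kinetic symbol $\hbar^2\min(|\xi|^2, p_F^2)$ (smoothed near the corner $|\xi|=p_F$) would give
\begin{equation*}
\tr\bigl(H_0^{\mathrm{LF}}\, \gamma_\Phi^{(1)}\bigr) \geq \int\bigl(V(z) + \hbar^2\min(|\xi|^2, p_F^2)\bigr) m(z,\xi)\, \frac{dz\, d\xi}{(2\pi)^3} - \mathcal{R}(R,\hbar).
\end{equation*}

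The third step is to bound the phase-space integral from below by its minimum over $\{0\leq m\leq 2,\ (2\pi)^{-3}\int m=N\}$. By the bathtub principle the minimizer is $m_\star = 2\,\mathds{1}_{V(z)+\hbar^2\min(|\xi|^2, p_F^2)\leq \mu}$ for a Lagrange multiplier $\mu$. After rescaling $\pi = \hbar\xi$ and invoking (\ref{H_2}) for a continuous density of states, this minimum is identified with $N E^{\mathrm{V}} = N E^{\mathrm{TF}}$ plus the cost of capping $|\pi|^2$ at $(\hbar p_F)^2$. A direct computation on the momentum shell $\{|\pi| > \hbar p_F\}$, together with the explicit form of the Thomas--Fermi Fermi ball from Proposition \ref{proposition_properties_thomas_fermi_density}, bounds this cost by $O(N p_F^{-2})$.

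The main obstacle is the quantitative control of the coherent-state error $\mathcal{R}(R,\hbar)$. The kinetic symbol $\min(|\xi|^2, p_F^2)$ is only Lipschitz at $|\xi|=p_F$, so a naive symbol calculus fails near the corner; pre-smoothing at scale $1/R$ produces a correction that must be balanced against the spatial error $\|\nabla^2 V\|_\infty R^2 N$. The a priori bound $\tr(-\hbar^2\Delta\gamma_\Phi^{(1)}) = O(N)$ is what will allow us to discard the high-$|\xi|$ tail in this balance, and optimizing $R$ is what should yield the residual $O(N^{5/6})$ stated in the lemma.
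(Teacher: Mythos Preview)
Your strategy matches the paper's: reduce to a one-body trace, lift to phase space with coherent states, and then minimize. The execution differs in two respects worth noting. First, instead of a direct phase-space bathtub, the paper performs a radial rearrangement in $p$ at fixed $x$ (using that $|p|^2(1-\Gamma(p))=\min(|p|^2,p_F^2)$ is nondecreasing), lands on an explicit modified Thomas--Fermi functional $\mathcal{E}^{\mathrm{TF}}_{p_F}[\rho_\uparrow,\rho_\downarrow]$, and then compares its infimum $E^{\mathrm{TF}}_{p_F}$ to $E^{\mathrm{TF}}$ by truncating the \emph{capped} minimizer $\bar\rho$ on the set $\{\bar\rho^{1/3}\geq p_F\}$ and renormalizing---not by inspecting the Thomas--Fermi Fermi ball as you propose. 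Your full phase-space bathtub would also work, but beware that the sublevel sets of $\min(|p|^2,p_F^2)+V(x)$ have infinite measure once the threshold exceeds $p_F^2+\min V$, so the minimizer must be identified carefully. Second, the $O(N^{5/6})$ does not originate from the corner of the kinetic symbol: in the paper's Lemma~\ref{lemma_kinetic_energy_low_frequency_husimi} the coherent-state error is computed directly, and the dominant term is the standard $O(\sqrt\hbar)$ cross term $p\cdot q$ combined with the a~priori bound $\tr(-\hbar^2\Delta\gamma)=O(N)$; global Lipschitz regularity of $1-\Gamma$ suffices and no pre-smoothing near $|p|=p_F$ is needed. Finally, hypothesis~(\ref{H_2}) plays no role in this lemma---it enters only later, in Lemma~\ref{lemma_graf_seiringer}.
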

	
	\begin{proof} 
		For such a sequence $(\Phi_{\Nup, \Ndown})$, we denote by $\gamma_{\Phi_{\Nup, \Ndown}}^{(1, 0)}$ and $\gamma_{\Phi_{\Nup, \Ndown}}^{(1, 0)}$ the one particle reduced density matrixes associated. Let 
		\begin{equation} 
			m_{\uparrow}(x, p) = \langle f_{x,p}^{\hbar} | \gamma_{\Phi_{\Nup, \Ndown}}^{(1, 0)} | f_{x,p}^{\hbar} \rangle ~~~\text{and}~~~ m(x, p) = m_{\uparrow}(x, p) + m_{\downarrow}(x, p),
		\end{equation}
		then we have (see (\ref{equation_potential_energy_husimi}) and Lemma \ref{lemma_kinetic_energy_low_frequency_husimi})
		\begin{equation}
			\langle \Phi_N | H_N^{\mathrm{LF}} |\Phi_N\rangle = \tr(\gamma_{\Nup, \Ndown}^{(1)} H_0^{\mathrm{LF}}) \geq \frac{N}{(2\pi)^3}\iint\big(|p|^2 (1 - \Gamma(p)\big) + V(x)\big) m(x, p)\d x \d p + O(N^{5/6}). 
		\end{equation}
		We set 
		\begin{equation} 
			\tilde{m}_{\uparrow\downarrow}(x, p) = \mathds{1}_{|p| \leq C \rho_{m_{\uparrow\downarrow}}(x)^{1/3}}~~~ \text{where}~~~ \rho_{m_{\uparrow\downarrow}}(x) = \int m_{\uparrow\downarrow}(x, p)\d p~~~\text{and}~~~ C= (3/4\pi)^{1/3}
		\end{equation}
		so that $\int \tilde{m}_{\uparrow\downarrow}(x, p)\d p = \rho_{m_{\uparrow\downarrow}}(x)$. As $p\in\R_ + \mapsto p^2(1 - \Gamma(p))$ is an increasing function, we get
		\begin{equation}
			\frac{1}{N}\langle \Phi_N | H_N^{\mathrm{LF}}| \Phi_N\rangle \geq \int V\rho + \iint |p|^2 (1 - \Gamma(p))\tilde{m}(x, p) \d p \d x + O(N^{5/6}).
		\end{equation}
		Let $p_F\in \R_+$ be the Fermi impulsion, that we will chose at the end of the proof in \eqref{equation_choix_constantes}. By separating the integral depending on wether $|p| \geq p_F$ or not, and by injecting the expression of $\Gamma$, we finally find
		
		\begin{multline}
			\frac{1}{N}\langle \Phi_N | H_N^{\mathrm{LF}} \Phi_N\rangle \geq \int V\big(\rho_{m_{\uparrow}} + \rho_{m_{\downarrow}} \big)+  \int \Big(c^{TF}\rho_{m_{\uparrow}}^{5/3}\mathds{1}_{C \rho_{m_{\uparrow}}^{1/3}\leq p_F} + c^{TF}\rho_{m_{\downarrow}}^{5/3}\mathds{1}_{C \rho_{m_{\downarrow}}^{1/3}\leq p_F} \\
			+ \big(p_F^2 \rho_{m_{\uparrow}} -\frac{8\pi}{15}p_F^5\big)\mathds{1}_{C \rho_{m_{\uparrow}}^{1/3} > p_F} + \big(p_F^2 \rho_{m_{\downarrow}} -\frac{8\pi}{15}p_F^5\big)\mathds{1}_{C \rho_{m_{\downarrow}}^{1/3} > p_F}\Big) + O(N^{5/6}) =: \mathcal{E}^{\mathrm{TF}}_{p_F}[\rho_{m_{\uparrow}},\rho_{m_{\downarrow}}] + O(N^{5/6}).
		\end{multline}
		We set
		\begin{equation}
			\mathcal{E}^{\mathrm{TF}}_{p_F}[\rho] = \int V\rho +  \int \Big(c^{TF}\rho^{5/3}\mathds{1}_{C \rho^{1/3}\leq p_F} + \big(p_F^2 \rho -\frac{8\pi}{15}p_F^5\big)\mathds{1}_{C \rho^{1/3} > p_F}\Big), 
		\end{equation}
		and
		\begin{equation}
			E^{\mathrm{TF}}_{p_F} = \inf\left\{\mathcal{E}^{\mathrm{TF}}_{p_F}[\rho],~ \rho\geq 0,~\int \rho = 1\right\} = \inf\left\{\mathcal{E}^{\mathrm{TF}}_{p_F}[\rup,\rdown],~ \rup,\rdown\geq 0,~\int \big(\rup+\rdown\big) = 1\right\}
		\end{equation}	
		Then, we have
		\begin{equation}
			\langle \Phi_{\Nup, \Ndown} | H_N^{\mathrm{LF}}| \Phi_{\Nup, \Ndown}\rangle \geq 
			NE^{\mathrm{TF}}_{p_F} + O(N^{5/6}).
		\end{equation}	
		Note that there exists a unique minimizer of $\mathcal{E}^{\mathrm{TF}}_{p_F}$, that we denote $\Bar{\rho}$. As $E^{\mathrm{TF}}_{p_F} \leq E^{\mathrm{TF}}$, $\Bar{\rho}$ satisfies
		\begin{equation}
			\frac{4\pi}{5}\int \mathds{1}_{\Bar{\rho}^{1/3} \geq p_F} \leq \frac{E^{\mathrm{TF}}}{p_F^5}.
		\end{equation}		
		Then, we also have
		\begin{equation}
			\int \bar{\rho} \mathds{1}_{\Bar{\rho}^{1/3} \geq p_F} \leq \frac{E^{\mathrm{TF}}}{p_F^2}+ \frac{8\pi p_F^3}{15} \frac{5}{4\pi}\frac{E^{\mathrm{TF}}}{p_F^5} = \frac{5}{3}\frac{E^{\mathrm{TF}}}{p_F^2}.
		\end{equation}	
		Let $I = \int \Bar{\rho}\mathds{1}_{\Bar{\rho}^{1/3} \leq p_F}$ and $\tilde{\rho} = \Bar{\rho}\mathds{1}_{\Bar{\rho}^{1/3} \leq p_F}$, we have 
		\begin{equation}
			\int \tilde{\rho} = I \geq 1 - \frac{5 E^{\mathrm{TF}}}{3 p_F^2}~~~\mathrm{and}~~~\mathcal{E}^{\mathrm{TF}}[\tilde{\rho}] = \mathcal{E}_{p_F}^{\mathrm{TF}}[\tilde{\rho}] \leq E^{\mathrm{TF}}_{p_F}.
		\end{equation}		
		We set $\hat{\rho} = I^{-1}\tilde{\rho}$ and we finally find		
		\begin{equation}
			E^{\mathrm{TF}} \leq \mathcal{E}^{\mathrm{TF}}[\hat{\rho}] \leq I^{-5/3}\mathcal{E}^{\mathrm{TF}}[\tilde{\rho}] \leq I^{-5/3} E_{p_F}^{\mathrm{TF}} \leq E_{p_F}^{\mathrm{TF}}\big(1 + O(p_F^{-2})\big).
		\end{equation}		
		Therefore,
		\begin{equation}
			\langle \Phi_{\Nup, \Ndown}| H_N^{\mathrm{LF}}\Phi_{\Nup, \Ndown}\rangle \geq N E^{\mathrm{TF}} + O(N p_F^{-2}) + O(N^{5/6}).
		\end{equation}
	\end{proof}
	Because of the a priori bound (\ref{equation_bound_energy_test}), $(\Psi_{\Nup, \Ndown})$ satisfies the hypotheses of Lemma \ref{lemma_low_frequency}, and therefore,
	\begin{equation}\label{equation_conclusion1}
		\langle \Psi_{\Nup, \Ndown}| H_N^{\mathrm{LF}}|\Psi_{\Nup, \Ndown}\rangle \geq N E^{\mathrm{TF}} + O(N p_F^{-2}) + O(N^{5/6}).
	\end{equation}
	Now, we have to bound $\langle \Psi_{\Nup, \Ndown}|H_{\Nup,\Ndown}^{\uparrow}|\Psi_{\Nup, \Ndown}\rangle$ and $\langle \Psi_{\Nup, \Ndown}|H_{\Nup,\Ndown}^{\downarrow}|\Psi_{\Nup, \Ndown}\rangle$ from below.
	
	\subsection{Dyson lemma: high frequencies and interactions}\label{subsection_dyson}
	The Dyson lemma is a well-known result. It allows us to make a potential less singular, at the cost of the high frequencies of the kinetic energy. Here, we will use it to soften the interaction potential $w_N$ and to exhibit the scattering length. Corollary 1 of \cite{lieb_ground-state_2005} can be written:
	\begin{lemma}[Dyson lemma]\label{lemma_Dyson_general}
		Let $R\geq R_0\geq R_v$. If $y_1,...,y_N \in \mathbb{R}^3$ satisfy $|y_i - y_j|\geq 2 R$ for all $i\neq j$, then
		\begin{equation}
			-\nabla \chi_s(p)^2 \nabla + \frac{1}{2}\sum_{i=1}^N v(x - y_j) \geq \sum_{i=1}^N \Big((1-\varepsilon)a_v U_R(x-y_i) - \frac{a(v)}{\varepsilon} u_R(x-y_i)\Big),
		\end{equation}
		with 
		\begin{equation}
			U_R(x) = \mathds{1}_{R_0 \leq |x| \leq R}\frac{3}{R^3 - R_0^3},
		\end{equation}
		\begin{equation}
			\chi_s(p) = l(sp) =  (sp - 1)\mathds{1}_{sp\geq 1} - (sp - 2) \mathds{1}_{sp \geq 2} = (sp-1)\mathds{1}_{1 \leq sp \leq 2} + \mathds{1}_{sp\geq2},
		\end{equation}
		\begin{equation}
			|u_R(x)|\leq C\frac{R^2}{s^5},~~~ \int |u_R(x)| \d x\leq C \frac{R^2}{s^2}~~~\mathrm{and}~~~\sum_{i=1}^N u_R(x - y_i)\leq \frac{C}{Rs^2}.
		\end{equation}
	\end{lemma}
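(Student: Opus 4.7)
The plan is to follow Dyson's original argument as adapted by Lieb--Seiringer--Solovej: reduce the multi-center inequality to a one-center estimate using the disjointness of the balls $B(y_i, R)$, then prove the one-center bound by exploiting the zero-energy scattering solution $f_v$.

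First I would reduce to the local statement. Since $|y_i - y_j| \geq 2R$, the balls $B(y_i, R)$ are pairwise disjoint, and since $\supp v(\cdot - y_i) \subset B(y_i, R_v) \subset B(y_i, R)$, the potential $\sum_i v(\cdot - y_i)$ is a sum of terms with disjoint supports. One then partitions $\mathbb{R}^3$ into the cells $\Omega_i = \{x : |x-y_i| \leq |x-y_j|\text{ for all }j\}$ and works cell-by-cell. The operator $-\nabla\chi_s(p)^2\nabla$ is nonlocal (because of the frequency cutoff), so the reduction will not be exact: the gain $\chi_s(p)^2 \neq 1$ at low frequencies is what forces the error term $u_R/\varepsilon$ and is also why the bound is expressed with a soft delta $U_R$ on an annulus of thickness $R - R_0 > 0$ rather than a pointwise inequality.

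Next I would establish a one-body Dyson-type inequality of the form
\begin{equation}
-\nabla \chi_s(p)^2 \nabla + \tfrac{1}{2} v(x) \geq (1-\varepsilon)\, a_v\, U_R(x) - \tfrac{a_v}{\varepsilon}\, u_R(x).
\end{equation}
The starting point is the variational characterization $4\pi a_v = \inf \int(|\nabla f|^2 + \tfrac{1}{2}v|f|^2)$ with minimizer $f_v$ satisfying $-\Delta f_v + \tfrac{1}{2} v f_v = 0$. Writing $\phi = f_v g$ gives the identity
\begin{equation}
\int_{|x|\leq R}\!\Big(|\nabla \phi|^2 + \tfrac{1}{2}v|\phi|^2\Big) \, \mathrm{d}x = \int_{|x|\leq R}\!|\nabla g|^2 f_v^2\,\mathrm{d}x + \oint_{|x|=R} g^2 f_v \partial_r f_v,
\end{equation}
which, combined with $f_v(x)\partial_r f_v(x) \simeq a_v/|x|^2$ for $|x|\in[R_0, R]$, yields $\int |\nabla\phi|^2 + \tfrac{1}{2}v|\phi|^2 \geq a_v\int U_R |\phi|^2$ for test functions with $\phi(x)=|\phi|_{\partial B_R}$ near the boundary. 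A standard radial averaging argument then promotes this to an operator inequality on $B(0,R)$.

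The main obstacle is handling the frequency cutoff $\chi_s(p)^2$ in place of the genuine Laplacian. The symbol $\chi_s$ suppresses momenta $|p| < 1/s$, so it delivers only the high-frequency part of the kinetic energy; quantitatively one splits $-\nabla \chi_s^2 \nabla = -\Delta - \nabla(1-\chi_s^2)\nabla$ and treats the second term as a controlled perturbation. Passing between the quadratic form and its local version on $B(0,R)$ produces commutator terms that one estimates in Fourier variables, leading to an error $u_R$ whose kernel satisfies the stated bounds $|u_R|\leq CR^2/s^5$, $\|u_R\|_{L^1}\leq CR^2/s^2$, and $\sum_i u_R(\cdot - y_i)\leq C/(Rs^2)$ (the last one using again that the $B(y_i, R)$ are disjoint, so that at any point only finitely many translates contribute). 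The parameter $\varepsilon>0$ arises from a Cauchy--Schwarz splitting that transfers a small part of the kinetic energy to absorb these commutators. Summing the localized inequalities over $i$ and recombining with the disjoint supports of $v(\cdot-y_i)$ gives the stated multi-center bound.
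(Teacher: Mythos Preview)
The paper does not actually prove this lemma: it is stated as a direct restatement of Corollary~1 of \cite{lieb_ground-state_2005}, with no proof given in the paper itself. Your sketch is therefore not comparable to a proof in the paper, but rather an outline of the argument from the cited reference.

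As an outline of the Lieb--Seiringer--Solovej proof, your sketch captures the right ingredients: the reduction to a single center using the disjointness of the balls $B(y_i,R)$, the role of the zero-energy scattering solution $f_v$ in producing the factor $a_v$, and the appearance of the error $u_R$ from the mismatch between the full Laplacian and the high-frequency piece $-\nabla\chi_s(p)^2\nabla$. One point worth sharpening: in the original argument the nonlocality of $\chi_s(p)$ is not handled by a Voronoi-cell partition of position space (which would not commute with the Fourier cutoff), but rather by working directly with the convolution kernel of the cutoff operator and controlling its tails; this is where the explicit bounds $|u_R|\leq CR^2/s^5$ and $\|u_R\|_{L^1}\leq CR^2/s^2$ come from. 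Your description of splitting $-\nabla\chi_s^2\nabla = -\Delta - \nabla(1-\chi_s^2)\nabla$ and treating the second piece perturbatively with a Cauchy--Schwarz parameter $\varepsilon$ is the correct mechanism.
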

	
	From this, we can deduce the following version of the Dyson lemma:
	
	\begin{lemma}[Dyson lemma in the semi-classical scaling]\label{lemma_dyson_our_version}
		Let $\hbar \geq R\geq R_0\geq N^{-\beta}R_w$. If $y_1,...y_N$ statisfy $|y_i-y_j| \geq 2R$ for all $i\neq j$, then
		\begin{equation}
			-\hbar^2 \nabla \chi_s(p)^2 \nabla + \frac{1}{2}\sum_{i=1}^N w_N(x - y_j) \geq N^{-\beta-2/3} \sum_{i=1}^N \Big((1-\varepsilon)4\pi a_w U_R(x-y_i) - \frac{a_w}{\varepsilon} u_R(x-y_i)\Big),
		\end{equation}
		with 
		\begin{equation}\label{equation_definition_UR}
			U_R(x) = \mathds{1}_{R_0 \leq |x| \leq R}\frac{3}{4\pi(R^3 - R_0^3)},
		\end{equation}
		\begin{equation}
			\chi_s(p) = l(sp) =  (sp - 1)\mathds{1}_{sp\geq 1} - (sp - 2) \mathds{1}_{sp \geq 2} = (sp-1)\mathds{1}_{1 \leq sp \leq 2} + \mathds{1}_{sp\geq2},
		\end{equation}
		\begin{equation}\label{equation_properties_uR}
			|u_R(x)|\leq C\frac{R^2}{s^5},~~~ \int |u_R(x)| \d x\leq C \frac{R^2}{s^2}~~~\mathrm{and}~~~\sum_{i=1}^N u_R(x - y_i)\leq \frac{C}{Rs^2}.
		\end{equation}
	\end{lemma}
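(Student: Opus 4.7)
The plan is to reduce directly to the general Dyson lemma (Lemma \ref{lemma_Dyson_general}) by rescaling, exploiting $w_N(x) = N^{2\beta - 2/3}w(N^\beta x)$ and $\hbar^2 = N^{-2/3}$. I would first apply Lemma \ref{lemma_Dyson_general} to the unscaled potential $v = w$ (so that $a_v = a_w$ and $R_v = R_w$), with rescaled points $\tilde y_j = N^\beta y_j$ and rescaled parameters $\tilde R = N^\beta R$, $\tilde R_0 = N^\beta R_0$, $\tilde s = N^\beta s$. The hypotheses of Lemma \ref{lemma_Dyson_general} then become precisely the hypotheses of the present statement: $|\tilde y_i - \tilde y_j| \geq 2\tilde R$ is equivalent to $|y_i - y_j| \geq 2R$, and $\tilde R \geq \tilde R_0 \geq R_w$ is equivalent to $R \geq R_0 \geq N^{-\beta}R_w$.

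Next I would pull back the resulting operator inequality via the dilation $y = N^\beta x$. The chain rule gives $\nabla_y = N^{-\beta}\nabla_x$ and, since $\chi_s(p) = l(sp)$, the identity $\chi_{\tilde s}(p_y) = \chi_s(p_x)$ follows from $\tilde s = N^\beta s$ and $p_y = N^{-\beta}p_x$. Consequently the kinetic term transforms as $-\nabla_y \chi_{\tilde s}(p_y)^2 \nabla_y = -N^{-2\beta}\nabla_x \chi_s(p_x)^2 \nabla_x$, and the interaction term becomes $w(N^\beta(x - y_j))$. Multiplying the inequality by $N^{2\beta - 2/3}$ then produces the required left-hand side $-\hbar^2 \nabla \chi_s(p)^2 \nabla + \frac{1}{2}\sum w_N(x - y_j)$.

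On the right-hand side, a direct computation from the definition of $\tilde U_{\tilde R}$ gives $\tilde U_{\tilde R}(N^\beta x) = N^{-3\beta} U_R(x)$ with $U_R$ as in \eqref{equation_definition_UR}, producing the prefactor $N^{2\beta - 2/3} \cdot N^{-3\beta} = N^{-\beta - 2/3}$ that matches the statement. Defining $u_R(x) = N^{3\beta} \tilde u_{\tilde R}(N^\beta x)$, the estimates \eqref{equation_properties_uR} reduce to the corresponding bounds of Lemma \ref{lemma_Dyson_general} by elementary substitution: for instance $\int |u_R| = \int |\tilde u_{\tilde R}| \leq C \tilde R^2/\tilde s^2 = C R^2/s^2$, and similarly $\|u_R\|_\infty \leq CR^2/s^5$ and $\sum u_R(x - y_i) \leq C/(Rs^2)$ after inserting $\tilde R = N^\beta R$, $\tilde s = N^\beta s$. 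The argument is essentially bookkeeping of scaling factors, so there is no substantive obstacle; the only point demanding a moment of care is the compatibility of the Fourier multiplier $\chi_{\tilde s}$ with the dilation, which is immediate from its explicit form.
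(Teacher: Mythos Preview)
Your proof is correct. The paper takes a slightly different but equivalent route: rather than rescaling space and applying Lemma~\ref{lemma_Dyson_general} to the unscaled potential $w$ with rescaled parameters $\tilde R,\tilde R_0,\tilde s,\tilde y_i$, it applies Lemma~\ref{lemma_Dyson_general} directly to $v=\hbar^{-2}w_N$ with the original $R$, $R_0$, $s$, $y_i$ unchanged, and then multiplies through by $\hbar^2=N^{-2/3}$. The only nontrivial step in that organization is to compute the scattering length $a_{\hbar^{-2}w_N}=N^{-\beta}a_w$, which follows from the same change of variables $y=N^{\beta}x$ that you use, but applied once in the scattering-energy functional rather than to the whole operator inequality. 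The paper's version thus spares the bookkeeping of tracking how $R$, $R_0$, $s$, $y_i$, $\chi_s$ and $u_R$ transform under dilation, while your approach makes each of those transformations explicit; both arrive at the same place with no substantive difference.
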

	Note that in particular, $\int U_R = 1$.
	\begin{proof}
		To deduce Lemma \ref{lemma_dyson_our_version} from Lemma \ref{lemma_Dyson_general}, we only have to prove that $a(\hbar^{-2}w_N) = N^{-\beta}a_w$. Recall that, by definition,
		\begin{equation}
			4\pi a_{\hbar^{-2}w_N} = \hbar^{-2}\inf\left\{ \int \Big(\hbar^2 |\nabla f|^2+ \frac{1}{2}w_N|f|^2\Big),~ f\underset{\infty}{\to}1 \right\}.
		\end{equation}
		For $f$ satisfying $f(x) \to 1$ when $|x| \to + \infty$, we set $\tilde{f} = f(N^{-\beta}\cdot)$ and we have
		\begin{equation}
			\int \Big(\hbar^2 |\nabla f|^2+ \frac{1}{2}w_N|f|^2\Big) =  N^{-\beta-2/3}\int \Big( |\nabla \tilde{f}|^2+ \frac{1}{2}w|\tilde{f}|^2\Big).
		\end{equation}
		Then, we have
		\begin{equation}
			4\pi a_{\hbar^{-2}w_N} = N^{-\beta}\inf\left\{\int \Big(|\nabla\tilde{f}|^2 + \frac{1}{2}w|\tilde{f}|^2\Big),~\tilde{f} \underset{\infty}{\to} 1\right\} = 4\pi N^{-\beta} a_w.
		\end{equation}
	\end{proof}
	\begin{remark}~
		\begin{itemize}
			\item Note that if $w_N = N^{\alpha}w_N(N^{\beta}\cdot)$ with $\alpha > 2\beta-2/3$, we find $a_{\hbar^{-2}w_N} = N^{-\beta}R_w + o(1)$.
			\item In the following, we fix $R$ and $R_0$ such that $\hbar \gg R\geq R_0\gg N^{-\beta}R_w$.
		\end{itemize}	\hfill $\diamond$
	\end{remark}
	Now, we want to choose $s$ such that the kinetic energy required in the lemma corresponds to the part we did not use for $H_N^{\mathrm{LF}}$. Let $s\leq p_F^{-1}$, we have
	\begin{equation}
		\Gamma \geq (1 - s^2p_F^2)\chi_s ^2.
	\end{equation}
	\begin{definition}\label{definition_plein_de_trucs}
		Let $Y = (y_1,..., y_{\Ndown})\in \R^{3\Ndown}$ and $\tilde{Y} = \{y_j \in Y, ~\forall k\neq j,~|y_j - y_k|\geq 2R\}$ the set of positions $y_j$ at distance at least $2R$ of all the other.
		We set
		\begin{equation}\label{equation_definition_W}
			W_Y^+ = 4\pi a_w (1 - \varepsilon)\sum_{k,y_k\in\tilde{Y}}U_R(\cdot-y_k),~~~W_Y^- = \frac{a_w}{\varepsilon}\sum_{k,y_k\in\tilde{Y}}u_R(\cdot-y_k)~~~\text{and}~~~W_Y = W_Y^+ - W_Y^-.
		\end{equation}
		Moreover, let 
		\begin{equation}\label{equation_definition_nY}
			n_Y = \int |\Psi_{\Nup, \Ndown}(X, Y)|^2 \d X
		\end{equation}	
		and 
		\begin{equation}\label{equation_definition_gammaY}
			\gamma_Y(x, x') = \frac{\Nup}{n_Y}\int\Psi_{\Nup, \Ndown}(x, x_2,...x_{\Nup}, Y)\overline{\Psi_{\Nup, \Ndown}(x', x_2,...,x_{\Nup}, Y)}\d \hat{X}_1 
		\end{equation}
		where $\hat{X}_k = (x_1,... \cancel{x_k},... x_N)$.
	\end{definition}
	Note that we have
	\begin{equation}
		\int n_Y\d Y = 1,~~~0\leq \gamma_Y\leq 1,~~~\tr(\gamma_Y) = \Nup~~~\text{and}~~~\int n_Y \gamma_Y \d Y= \gamma_{\Psi_{\Nup, \Ndown}}^{(1, 0)}.
	\end{equation}	
	Then, we have the following lemma.
	\begin{lemma} Using the notations introduced in Definition \ref{definition_plein_de_trucs}, we have
		\begin{equation}\label{equation_conclusion15}
			\langle \Psi_{\Nup, \Ndown}| H_{\Nup, \Ndown}^{\uparrow}|\Psi_{\Nup, \Ndown}\rangle  \geq N^{-\beta-2/3} (1-s^2 p_F^2)(1 + o(1))\int n_Y \tr(\gamma_Y W_Y)\d Y.
		\end{equation}
	\end{lemma}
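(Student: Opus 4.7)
The plan is to reduce $H_{\Nup,\Ndown}^{\uparrow}$ to a sum of single-spin-up-particle operators, condition on the spin-down configuration $Y$, apply the semi-classical Dyson lemma (Lemma \ref{lemma_dyson_our_version}) to each, and rewrite the result in terms of $n_Y$ and $\gamma_Y$. For the first step, I would use that $\Gamma(p) \geq (1-s^2p_F^2)\chi_s(p)^2$ whenever $s\leq p_F^{-1}$, and that the surplus $\Gamma - (1-s^2p_F^2)\chi_s^2\geq 0$ is non-negative. Combined with $1-s^2p_F^2\leq 1$ and $w_N\geq 0$, this gives the elementary operator inequality
$$H_{\Nup,\Ndown}^{\uparrow} \;\geq\; (1-s^2p_F^2)\sum_{j=1}^{\Nup} T_j, \qquad T_j := -\hbar^2\nabla_j\chi_s(p_j)^2\nabla_j + \tfrac{1}{2}\sum_{k=\Nup+1}^{N}w_N(x_j-x_k),$$
so that each bracketed operator acts on a single spin-up coordinate with the remaining variables frozen.

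Next, I would view $T_j$ with $Y = (x_{\Nup+1},\ldots,x_N)$ fixed as a classical parameter. Dropping the non-negative interactions indexed by $k$ with $y_k\notin\tilde Y$, the remaining scattering centres are mutually $2R$-separated by definition of $\tilde Y$, so Lemma \ref{lemma_dyson_our_version} applies and yields the pointwise-in-$Y$ bound
$$T_j \;\geq\; N^{-\beta-2/3}\,W_Y(x_j),$$
with $W_Y$ as in \eqref{equation_definition_W}. Since the kinetic operator in $x_j$ commutes with multiplication by any function of the remaining variables, this single-variable inequality lifts unchanged to a quadratic form inequality on the full antisymmetric Hilbert space.

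Finally, I would take the expectation in $\Psi_{\Nup,\Ndown}$. By antisymmetry in the spin-up variables, each of the $\Nup$ summands $W_Y(x_j)$ contributes equally, and a direct computation using \eqref{equation_definition_nY}--\eqref{equation_definition_gammaY} gives
$$\bigg\langle \Psi_{\Nup,\Ndown}\bigg|\sum_{j=1}^{\Nup}W_Y(x_j)\bigg|\Psi_{\Nup,\Ndown}\bigg\rangle \;=\; \int n_Y\,\tr(\gamma_Y W_Y)\,\d Y,$$
which, combined with the previous two steps, yields the stated inequality. The factor $(1+o(1))$ serves as a safety margin absorbing any minor technical loss coming from the partition of $Y$ into $\tilde Y$ and its complement. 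The main subtlety is the per-particle Dyson reduction: one must transport a single-variable operator inequality, uniform in the classical parameter $Y$, into a genuine form inequality on fermionic wave functions. This is direct from the commutation of $-\hbar^2\nabla_j\chi_s^2\nabla_j$ with functions of the remaining variables, but it requires some care with form domains and with the measurable dependence on $Y$.
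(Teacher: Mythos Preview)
Your proposal is correct and follows essentially the same route as the paper: replace $\Gamma$ by $(1-s^2p_F^2)\chi_s^2$, freeze $Y$, discard the non-negative interactions with $y_k\notin\tilde Y$, apply Lemma~\ref{lemma_dyson_our_version} particle by particle, and take the expectation using $n_Y$ and $\gamma_Y$. One small correction: the partition into $\tilde Y$ and its complement costs nothing (you only drop non-negative terms), so the $(1+o(1))$ is not there to absorb a loss from that step---in the paper it is simply a harmless placeholder.
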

	\begin{proof}
		By Lemma \ref{lemma_dyson_our_version}, we have
		\begin{equation}\label{equation_dyson_N}
			H_{\Nup, \Ndown}^{\uparrow} = \sum_{j = 1}^{\Nup}-\hbar^2 \nabla_j \Gamma(p_j)\nabla_j + \frac{1}{2}\sum_{j= 1}^{\Nup}\sum_{k = 1}^{\Ndown} w_N(x_j - y_k)\geq N^{-\beta - 2/3}(1 - s^2 p_F^2)(1 + o(1))\sum_{j= 1 }^{\Nup}W_Y(x_j).
		\end{equation}
		Using the notation $\gamma_Y$, this implies \eqref{equation_conclusion15}.
	\end{proof}
	\begin{remark}
	To estimate $\tr(\gamma_Y W_Y)$, we want to replace $\gamma_Y$ by something simpler that has approximatively the same energy. We set 
	\begin{equation}\label{equation_definition_P_N}
		H^0_{\hbar} = - \hbar^2\Delta + V = \sum_{n\geq1}\lambda_n |e_n\rangle\langle e_n|~~~\text{and}~~~P_{\Nup} = \sum_{n = 1}^{\Nup} |e_n\rangle\langle e_n|.
	\end{equation}
	Then, we have
	\begin{equation}\label{equation_juste_avant_4.3}
		\tr(\gamma_YW_Y) = \tr(P_{\Nup} W_Y^+) + \tr(P_{\Nup}W_Y^-) + \tr\big((\gamma_Y - 1)P_{\Nup}W_YP_{\Nup}\big) + \tr\big(\gamma_Y(W_Y - P_{\Nup} W_Y P_{\Nup})\big).
	\end{equation}
	\hfill $\diamond$
	\end{remark}
	Now, we have to show that $\tr(P_{\Nup} W_Y^+)$ gives the expected interaction energy, and we have to prove that the other terms are error terms that we can control.

	\subsection{Error estimates}\label{subsection_error_estimates}
	Here, we estimate the error terms in (\ref{equation_juste_avant_4.3}). To do so, we prove the following lemma first. 
	\begin{lemma}[Approximation by the free ground state]\label{lemma_graf_seiringer}
		Let $(\Phi_{\Nup, \Ndown})$ satisfying
		\begin{equation}\label{equation_hypothese_lemme_graf_seiringer}
			\langle \Phi_{\Nup, \Ndown}| H_N \Phi_{\Nup, \Ndown}\rangle \leq N E^{\mathrm{TF}}+ O(N^{4/3- \beta}),
		\end{equation}
		then
		\begin{equation}
			\tr\big(\gamma_{\Phi_{\Nup, \Ndown}}^{(1, 0)}(1 - P_{\Nup})\big) = O(N^{17/18} + N^{7/6 - \beta/2}).
		\end{equation}
	\end{lemma}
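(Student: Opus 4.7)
The plan is to convert the energy bound into an estimate on the number of ``excited'' fermions in the one-particle reduced density matrices of $\Phi_{\Nup,\Ndown}$. The approach has three steps: (i) use the positivity of $w_N$ to reduce to a one-body energy comparison; (ii) combine the Pauli principle with a semi-classical Weyl-type estimate to force the state close to the low-lying eigenstates of $H^0_\hbar$; (iii) use a Markov-type argument based on the semi-classical density of states to convert the resulting excitation-energy bound into a bound on the number of excited particles.

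Concretely, I first drop the non-negative interaction in \eqref{equation_hypothese_lemme_graf_seiringer} to obtain
\[
\tr\bigl(H^0_\hbar\, \gamma_\Phi^{(1,0)}\bigr) + \tr\bigl(H^0_\hbar\, \gamma_\Phi^{(0,1)}\bigr) \leq NE^{\mathrm{TF}} + O(N^{4/3-\beta}).
\]
Since $0 \leq \gamma^{(1,0)}, \gamma^{(0,1)} \leq 1$, the variational characterization of partial sums of eigenvalues yields $\tr(H^0_\hbar \gamma^{(1,0)}) \geq \tr(H^0_\hbar P_\Nup)$ and analogously for spin-down, while convexity in the particle number gives $\tr(H^0_\hbar P_\Nup) + \tr(H^0_\hbar P_\Ndown) \geq 2\tr(H^0_\hbar P_{N/2})$. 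A semi-classical analysis of $H^0_\hbar = -\hbar^2\Delta + V$ under (\ref{H_1}), (\ref{H_2}) (likely through coherent states/Husimi functions, as in Lemma \ref{lemma_low_frequency}) should provide $\tr(H^0_\hbar P_{N/2}) \geq \tfrac{1}{2}NE^{\mathrm{TF}} - O(N^{8/9})$. Combining these inequalities, the ``excitation energies'' $E_{\mathrm{exc}}^\uparrow := \tr(H^0_\hbar(\gamma^{(1,0)}-P_\Nup)) \geq 0$ and its spin-down analogue will satisfy $E_{\mathrm{exc}}^\uparrow + E_{\mathrm{exc}}^\downarrow \leq O(N^{4/3-\beta}) + O(N^{8/9})$.

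Next I expand $\gamma^{(1,0)}$ in the eigenbasis $(e_n)$ of $H^0_\hbar$ and subtract $\lambda_\Nup \tr(\gamma^{(1,0)}-P_\Nup) = 0$, decomposing $E_{\mathrm{exc}}^\uparrow$ into two non-negative sums over $n > \Nup$ and $n \leq \Nup$ respectively. Consequently, for any integer $K \geq 1$,
\[
\tr\bigl(\gamma^{(1,0)}(1-P_\Nup)\bigr) = \sum_{n>\Nup}\langle e_n|\gamma^{(1,0)}|e_n\rangle \leq K + \frac{E_{\mathrm{exc}}^\uparrow}{\lambda_{\Nup+K}-\lambda_\Nup}.
\]
A local Weyl estimate (exploiting (\ref{H_2}) for the semi-classical density of states near the Fermi level) gives $\lambda_{\Nup+K} - \lambda_\Nup \geq c K/N$ for $K$ in a suitable range, so the right-hand side is $\leq K + C\,N\,E_{\mathrm{exc}}^\uparrow/K$. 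Optimizing $K \sim \sqrt{NE_{\mathrm{exc}}^\uparrow}$ then yields $\tr(\gamma^{(1,0)}(1-P_\Nup)) = O(\sqrt{NE_{\mathrm{exc}}^\uparrow}) = O(N^{17/18} + N^{7/6-\beta/2})$, as claimed.

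The hard part will be the two semi-classical inputs: the Weyl-type asymptotics $\tr(H^0_\hbar P_{N/2}) = \tfrac{1}{2}NE^{\mathrm{TF}} - O(N^{8/9})$ with an explicit remainder, and the local level-spacing bound $\lambda_{\Nup+K} - \lambda_\Nup \gtrsim K/N$ near the Fermi level. Both require careful semi-classical analysis of $H^0_\hbar$ exploiting the regularity encoded in (\ref{H_1}) and the non-concentration of eigenvalues expressed by (\ref{H_2}); the variational and Markov-type steps above are essentially routine once these are at hand.
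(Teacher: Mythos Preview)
Your approach is correct and rests on the same semi-classical input as the paper (the Weyl-type asymptotics of Lemma~\ref{lemma_semi_classical_approximation} with $O(N^{8/9})$ remainders), but the conversion from excitation energy to excited-particle count is packaged differently. Where you split $\sum_{n>\Nup}$ at $n=\Nup+K$, bound the tail by $E_{\mathrm{exc}}^\uparrow/(\lambda_{\Nup+K}-\lambda_\Nup)$ via Markov, invoke a level-spacing bound $\lambda_{\Nup+K}-\lambda_\Nup\gtrsim K/N$ (valid once $K\gg N^{8/9}$, which holds at the optimal $K$ for $\beta<1/2$), and optimize in $K$, the paper instead introduces an energy shift $\alpha$, minimizes $\sum_n(\lambda_n-\Lambda_\Nup+\alpha\mathds{1}_{\lambda_n\leq\Lambda_\Nup})M_n$ over $0\leq M_n\leq 1$, Taylor-expands the resulting semi-classical function $F_N(\alpha)$ to second order, and optimizes in $\alpha$ to reach $\tau_\Nup^2/N\lesssim N^{8/9}+N^{4/3-\beta}$. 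The two optimizations are Legendre-dual and give identical bounds; your Markov version is arguably more transparent, while the paper's $\alpha$-trick avoids having to state the level-spacing estimate separately. One further minor difference: you compare to $E^{\mathrm{TF}}$ via the convexity step $\tr(H^0_\hbar P_{\Nup})+\tr(H^0_\hbar P_{\Ndown})\geq 2\tr(H^0_\hbar P_{N/2})$ followed by semi-classics at $N/2$, whereas the paper applies Lemma~\ref{lemma_semi_classical_approximation} spinwise and then uses $E^{\mathrm{cl}}(\Lambda_\Nup)+E^{\mathrm{cl}}(\Lambda_\Ndown)\geq E^{\mathrm{TF}}$ from the variational definition of $E^{\mathrm{TF}}$.
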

	This lemma relies on the generalization of Inequality (4.13) of \cite{graf_correlation_1994}, which corresponds to the homogeneous case. In our case, as we do not have an explicite diagonalization of $H^0_{\hbar}$, the computations are slightly more complicated.
	\begin{proof}
		Let $\tau_{\Nup} = \tr\big(\gamma_{\Phi_{\Nup, \Ndown}}^{(1, 0)}(1 - P_{\Nup})\big)$. We define the quantum number of particles and energy by 
		\begin{equation}
			\begin{cases}
				N^{\mathrm{q}}_{\hbar}(\Lambda) = \#\{n\in \N,~\lambda_n \leq \Lambda\}\\
				E^{\mathrm{q}}_{\hbar}(\Lambda) = \sum_{\lambda_n\leq \Lambda}\lambda_n,
			\end{cases}
		\end{equation}
		and the semi-classical number of particles and energy by
		\begin{equation}
			\begin{cases}
				N^{\mathrm{cl}}(\Lambda) = (2\pi)^{-3}\int \mathds{1}_{|p|^2 + V(x)\leq \Lambda}\d x \d p\\
				E^{\mathrm{cl}}(\Lambda) = (2\pi)^{-3}\int (|p|^2 + V(x))\mathds{1}_{|p|^2 + V(x) \leq \Lambda}\d x \d p.
			\end{cases}
		\end{equation}
		For a given $\alpha \in \R$, and for $\Lambda_{\Nup} \in \R$ such that $N^{\mathrm{cl}}(\Lambda_{\Nup}) = \Nup/N$, because of Lemma \ref{lemma_semi_classical_approximation}, we know that 
		\begin{equation}
			N^{\mathrm{q}}_{\hbar}(\Lambda_{\Nup}) = \Nup + O(N^{8/9}),
		\end{equation} 
		and hence, we have
		\begin{align}
			D :=& \Big\langle \Phi_{\Nup, \Ndown}|\sum_{n = 1}^{\Nup} (H^0_{\hbar})_n | \Phi_{\Nup, \Ndown}\Big\rangle -\alpha \tau_{\Nup} \notag\\
			=& \sum_{n\in \N}\lambda_n \langle e_n |\gamma_{\Phi_{\Nup, \Ndown}}^{(1, 0)}e_n\rangle - \alpha \sum_{n\in\N} \mathds{1}_{n\geq N_{\uparrow} + 1} \langle e_n|\gamma_{\Phi_{\Nup, \Ndown}}^{(1, 0)}e_n\rangle\notag \\
			=& \sum_{n\in \N} \Big(\lambda_n - \Lambda_{\Nup} + \alpha\mathds{1}_{\Lambda_{\Nup} \geq \lambda_n}\Big) \langle e_n | \gamma_{\Phi_{\Nup, \Ndown}}^{(1, 0)}e_n\rangle + (\Lambda_{\Nup} - \alpha)N^{\mathrm{q}}_{\hbar}(\Lambda_{\Nup}) + O(N^{8/9})\notag\\
			\geq& \inf\left\{\sum_{n\in \N} \Big(\lambda_n - \Lambda_{\Nup} + \alpha\mathds{1}_{\Lambda_{\Nup} \geq \lambda_n}\Big) M_n,~ 0\leq M_n\leq 1\right\} + (\Lambda_{\Nup} - \alpha) N^{\mathrm{q}}_{\hbar}(\Lambda_{\Nup}) + O(N^{8/9})\notag\\
			\geq& \sum_{n\in \N} \Big(\lambda_n - \Lambda_{\Nup} + \alpha\mathds{1}_{\Lambda_{\Nup} \geq \lambda_n}\Big) \mathds{1}_{(\lambda_n - \Lambda_{\Nup} + \alpha\mathds{1}_{\Lambda_{\Nup} \geq \lambda_n}) \leq 0} + (\Lambda_{\Nup} - \alpha) N^{\mathrm{q}}_{\hbar}(\Lambda_{\Nup}) + O(N^{8/9})\notag\\
			\geq & \sum_{\lambda_n \leq \Lambda_{\Nup} - \alpha}(\lambda_n - \Lambda_{\Nup} + \alpha) + (\Lambda_{\Nup} -\alpha)N^{\mathrm{q}}_{\hbar}(\Lambda_{\Nup}) + O(N^{8/9}) \notag\\
			\geq & E^{\mathrm{q}}_{\hbar}(\Lambda_{\Nup} - \alpha) - (\Lambda_{\Nup} - \alpha)N^{\mathrm{q}}_{\hbar}(\Lambda_{\Nup} - \alpha) + (\Lambda_{\Nup} - \alpha)N^{\mathrm{q}}_{\hbar}(\Lambda_{\Nup}) + O(N^{8/9}).
		\end{align}
		We use Lemma \ref{lemma_semi_classical_approximation} to find
		\begin{equation}
			D \geq N F_N(\alpha) + O(N^{8/9})
		\end{equation}
		for 
		\begin{equation}
			F_{\Nup}(\alpha) = E^{\mathrm{cl}}(\Lambda_{\Nup} - \alpha) - (\Lambda_{\Nup} - \alpha)N^{\mathrm{cl}}(\Lambda_{\Nup} - \alpha) + (\Lambda_{\Nup} - \alpha)N^{\mathrm{cl}}(\Lambda_{\Nup}).
		\end{equation}
		By differentiating $F_N$ at 0, we find
		\begin{equation}
			F_N'(0) = - (E^{\mathrm{cl}})'(\Lambda_{\Nup}) + \Lambda_{\Nup}(N^{\mathrm{cl}})'(\Lambda_{\Nup}) = 0.
		\end{equation}
		Indeed, for $\varepsilon > 0$, we have
		\begin{equation}
			E^{\mathrm{cl}}(\Lambda_{\Nup} + \varepsilon) - E^{\mathrm{cl}}(\Lambda_{\Nup}) = (2\pi)^{-3} \int \big(|p|^2 + V(x)\big)\mathds{1}_{\Lambda_{\Nup} \leq |p|^2 + V(x) \leq \Lambda_{\Nup} + \varepsilon} \d x \d p
		\end{equation}
		and 
		\begin{equation}
			N^{\mathrm{cl}}(\Lambda_{\Nup} + \varepsilon) - N^{\mathrm{cl}}(\Lambda_{\Nup}) = (2\pi)^{-3} \int \mathds{1}_{\Lambda_{\Nup} \leq |p|^2 + V(x) \leq \Lambda_{\Nup} + \varepsilon} \d x \d p,
		\end{equation}
		hence
		\begin{equation}
			\left| \frac{E^{\mathrm{cl}}(\Lambda_{\Nup} + \varepsilon) - E^{\mathrm{cl}}(\Lambda_{\Nup})}{\varepsilon} - \Lambda_{\Nup}\frac{N^{\mathrm{cl}}(\Lambda_{\Nup} + \varepsilon) - N^{\mathrm{cl}}(\Lambda_{\Nup})}{\varepsilon} \right| \leq N^{\mathrm{cl}}(\Lambda_{\Nup} + \varepsilon) - N^{\mathrm{cl}}(\Lambda_{\Nup}) = o(1).
		\end{equation}
		since $N^{\mathrm{cl}}$ is differentiable (see \eqref{H_2}). Therefore, we can write a Taylor expansion of $F_{\Nup}$: there exists $\alpha' \in (0, \alpha)$ such that
		\begin{equation}
			F_{\Nup}(\alpha) = F_N(0) + \frac{\alpha^2}{2}F''_{\Nup}(\alpha') = E^{\mathrm{cl}}(\Lambda_{\Nup}) - \frac{\alpha^2}{2}(N^{\mathrm{cl}})'(\Lambda_{\Nup} - \alpha').
		\end{equation}
		Then, by adding the energy of the spin-down particles and the assumption \eqref{equation_hypothese_lemme_graf_seiringer}, we find
		\begin{equation}
			N E^{\mathrm{TF}} + O(N^{4/3-\beta}) - \alpha \tau_{\Nup} \geq E^{\mathrm{cl}}(\Lambda_{\Nup}) + E^{\mathrm{cl}}(\Lambda_{\Ndown}) - \frac{N\alpha^2}{2}(N^{\mathrm{cl}})'(\Lambda_{\Nup} - \alpha'),
		\end{equation}
		and thus
		\begin{equation}
			\alpha \tau_{\Nup} - \frac{N\alpha^2}{2}(N^{\mathrm{cl}})'(\Lambda_{\Nup} - \alpha') =  O(N^{8/9} + N^{4/3 - \beta}).
		\end{equation}
		Note that for $\alpha$ bounded, $\alpha'$ and thus  $(N^{\mathrm{cl}})'(\Lambda_{\Nup} - \alpha')$ are bounded too. Then, by taking $\alpha$ converging slowly to zero, namely $\alpha = o(1)$ and $\alpha^{-1}(N^{8/9} + N^{4/3 - \beta}) = o(N)$, we find 
		\begin{equation}
			\tau_{\Nup} = \frac{N\alpha}{2}(N^{\mathrm{cl}})'(\Lambda_{\Nup} - \alpha') + \alpha^{-1}O(N^{8/9} + N^{4/3-\beta}) = o(N).
		\end{equation}
		Hence, we can now take $\alpha = c\tau_{\Nup}/N = o(1)$ to find 
		\begin{equation}
			\frac{\tau_{\Nup}^2}{N}\Big(c - \frac{c^2}{2} (N^{\mathrm{cl}})'(\Lambda_{\Nup} - \alpha')\Big) = O(N^{8/9} + N^{4/3-\beta}).
		\end{equation}
		As $(N^{\mathrm{cl}})'(\Lambda_{\Nup} - \alpha')$ is bounded, by choosing $c$ small enough but fixed, we get 
		\begin{equation}
			c - \frac{c^2}{2} (N^{\mathrm{cl}})'(\Lambda_{\Nup} - \alpha') \geq C > 0
		\end{equation}
		and then
		\begin{equation}
			\tau_{\Nup} = O(N^{17/18} + N^{7/6-\beta/2}).
		\end{equation}
	\end{proof}
	
	Now, we can state the main result of this section.
	\begin{proposition}[Bounding the interaction energy using the non-interacting ground state]
		Recall that $W_Y$, $n_Y$ and $\gamma_Y$ are defined in \eqref{equation_definition_W}, \eqref{equation_definition_nY} and \eqref{equation_definition_gammaY}. We have the following inequality for $1 > \delta > 0$
		\begin{equation}\label{equation_errors_estimated}
			\int n_Y \tr(\gamma_Y W_Y)\d Y \geq (1 - \delta) \int n_Y \tr(P_{\Nup} W_Y^+)\d Y - C\delta^{-1}N(N^{-1/18} + N^{1/6 - \beta/2})(R^{-3} + 1/\varepsilon s^2 R) - C\frac{N}{\varepsilon s^2 R}
		\end{equation}
	\end{proposition}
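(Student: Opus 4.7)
The strategy is to start from the algebraic identity \eqref{equation_juste_avant_4.3} and bound each of the three non-main contributions. Setting $P = P_{\Nup}$, $Q = 1-P$ and $\tau_Y := \tr(Q\gamma_YQ) = \Nup - \tr(P\gamma_Y)$, two preliminary ingredients are needed. First, since any two elements of $\tilde Y$ are $2R$-separated, the supports of the functions $U_R(\cdot - y_k)$ are pairwise disjoint, which together with Lemma \ref{lemma_dyson_our_version} yields $\|W_Y^+\|_\infty \leq C/R^3$ and $\|W_Y^-\|_\infty \leq C/(\varepsilon s^2 R)$ uniformly in $Y$. Second, the identity $\int n_Y\gamma_Y\,dY = \gamma^{(1,0)}_{\Psi_{\Nup,\Ndown}}$ gives $\int n_Y\tau_Y\,dY = \tau_{\Nup}$, which Lemma \ref{lemma_graf_seiringer} controls as $\tau_{\Nup} = O\bigl(N(N^{-1/18} + N^{1/6 - \beta/2})\bigr)$.

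The first two error contributions in \eqref{equation_juste_avant_4.3} are handled directly. For $\tr(PW_Y^-)$ one has $|\tr(PW_Y^-)| \leq \Nup\|W_Y^-\|_\infty \leq CN/(\varepsilon s^2 R)$, producing the last summand in \eqref{equation_errors_estimated}. For $\tr((\gamma_Y - 1)PW_YP)$, the operator $A := P(1 - \gamma_Y)P$ is non-negative with $\tr(A) = \tau_Y$ and $M := PW_YP$ is self-adjoint with $\|M\|_{\mathrm{op}} \leq \|W_Y\|_\infty$, so the standard bound $|\tr(AM)| \leq \tr(A)\|M\|_{\mathrm{op}}$ gives $|\tr((\gamma_Y - 1)PW_YP)| \leq C\tau_Y(R^{-3} + (\varepsilon s^2 R)^{-1})$, which after integration against $n_Y$ fits into the first stated error (without the $\delta^{-1}$ prefactor).

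The delicate contribution is $\tr(\gamma_Y(W_Y - PW_YP))$. Expanding $W_Y - PW_YP = PW_YQ + QW_YP + QW_YQ$, the diagonal piece is again bounded by $\tau_Y\|W_Y\|_\infty$ as in the previous paragraph. For the two off-diagonal pieces, which reduce to twice $\Re\tr(\gamma_YQW_YP)$, one splits $W_Y = W_Y^+ - W_Y^-$ and applies the Hilbert-Schmidt Cauchy-Schwarz to the factorization $\tr\bigl((W_Y^\pm)^{1/2}P\gamma_YQ(W_Y^\pm)^{1/2}\bigr)$ (legal because $W_Y^\pm \geq 0$), using $\tr(Q\gamma_Y^2 Q) \leq \tau_Y$ to obtain
\begin{equation*}
|\tr(\gamma_YQW_Y^\pm P)| \leq \sqrt{\tr(PW_Y^\pm)}\;\sqrt{\|W_Y^\pm\|_{\mathrm{op}}\tau_Y}.
\end{equation*}
The AM-GM inequality with weight $\delta$ applied to the $+$ piece produces $\tfrac{\delta}{2}\tr(PW_Y^+) + C\delta^{-1}\|W_Y^+\|_{\mathrm{op}}\tau_Y$, the first half being subtracted from the main term to yield the $(1-\delta)$ prefactor; the analogous bound for the $-$ piece is similar, its $\delta\tr(PW_Y^-)$ contribution being absorbed into the $CN/(\varepsilon s^2 R)$ term already accounted for. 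Integration against $n_Y$, together with the bound on $\tau_{\Nup}$ and the uniform bounds on $\|W_Y^\pm\|_\infty$, yields \eqref{equation_errors_estimated}.

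The principal obstacle is precisely this Cauchy-Schwarz step. A naive $L^1$-$L^\infty$ duality bound $|\tr(\gamma_YQW_Y^+P)| \leq \|P\gamma_YQ\|_1\|W_Y^+\|_\infty$ uses only $\|P\gamma_YQ\|_1 \leq \sqrt{\Nup\tau_Y}$ and produces a contribution of order $\sqrt{N\tau_{\Nup}}\cdot R^{-3}$, which with $\tau_{\Nup} \sim N^{17/18}$ gives $N^{35/36}R^{-3}$ — too large to be acceptable. The key is to extract the factor $\sqrt{\tr(PW_Y^+)}$ rather than $\sqrt{\Nup\|W_Y^+\|_\infty}$, so that it can be reabsorbed into the main term via the $\delta$-parameter.
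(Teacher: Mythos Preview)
Your proof is correct and follows essentially the same route as the paper: the same decomposition \eqref{equation_juste_avant_4.3}, the same $L^\infty$ bounds on $W_Y^\pm$, the same use of Lemma \ref{lemma_graf_seiringer} via $\int n_Y\tau_Y\,\d Y = \tau_{\Nup}$, and the same Cauchy--Schwarz/AM--GM mechanism on the off-diagonal $PQ$ pieces (the paper factorizes as $A=\gamma_Y^{1/2}Q(W_Y^+)^{1/2}$, $B=(W_Y^+)^{1/2}P\gamma_Y^{1/2}$ and applies the weighted inequality $|\tr(AB)|\leq \tfrac{\delta}{2}\tr(AA^*)+\tfrac{1}{2\delta}\tr(B^*B)$, which is exactly your square-root Cauchy--Schwarz followed by AM--GM). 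One small caveat: your parenthetical ``legal because $W_Y^\pm\geq 0$'' is not quite justified for $W_Y^-$, since $u_R$ is only controlled in absolute value by \eqref{equation_properties_uR} and need not have a sign; this is harmless (split $u_R$ into its positive and negative parts and run the same argument on each), and the paper glosses over the same point.
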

	\begin{proof}
		We divide the proof of this lemma in several parts, using \eqref{equation_juste_avant_4.3^}:\\~\\
			$\bullet$ First, we bound $\tr(P_N W_Y^-)$. Because of (\ref{equation_properties_uR}), we have
			\begin{equation}\label{equation_lemma_errors_1}
				\tr(P_{\Nup}W_Y^-) = \sum_{n= 1 }^{\Nup}\langle e_n| W_Y^- e_n\rangle = \frac{R_w}{\varepsilon}\sum_{n = 1}^{\Nup} \Big\langle e_n \Big| \sum_{k, y_k\in \tilde{Y}}u_R(\cdot - y_k)e_n\Big\rangle  \leq C \frac{\Nup}{\varepsilon s^2 R} \leq C \frac{N}{\varepsilon s^2 R}.
			\end{equation}
			$\bullet$ Then, we bound $\tr\big((\gamma_Y - 1)P_{\Nup}W_YP_{\Nup}\big)$. By positivity of $P_{\Nup}(1 - \gamma_Y)P_{\Nup}$, we have
			\begin{equation}
				\tr\big((\gamma_Y - 1)P_{\Nup}W_YP_{\Nup}\big) \geq - \|W_Y\| \tr\big(P_{\Nup}(1 - \gamma_Y)P_{\Nup}\big) = - \|W_Y\| \tr\big(P_{\Nup}(1 - \gamma_Y)\big).
			\end{equation}
			On the one hand, $\|W_Y\|\leq \|W_Y^+\|_{\infty} + \|W_Y^-\|_{\infty}$. Since the terms in in the sum defining $W_Y^+$ in \eqref{equation_definition_W} have disjoint supports, we have
			\begin{equation}
				\|W_Y^+\|_{\infty} = 4\pi R_w (1 - \varepsilon)\|U_R\|_{\infty} = \frac{3R_w (1 - \varepsilon)}{R^3 - R_0^3} = O(R^{-3}).
			\end{equation}
			The norm $\|W_Y^-\|_{\infty}$ can be bounded easily beacause of the properties of $u_R$ stated in (\ref{equation_properties_uR}):
			\begin{equation}
				\|W_Y^-\|_{\infty} = O\left(\frac{1}{\varepsilon s^2 R}\right).
			\end{equation}
			On the other hand, because of Lemma \ref{lemma_graf_seiringer}, we have 
			\begin{equation}
				\int n_Y \tr\big(P_{\Nup}(1 - \gamma_Y)\big)\d Y = \int n_Y \tr\big(\gamma_Y(1 - P_{\Nup})\big)\d Y = O(N^{17/18} + N^{7/6 - \beta/2}).
			\end{equation}
			Therefore, we find
			\begin{equation}\label{equation_lemma_errors_2}
				\int n_Y\tr\big((\gamma_Y-1)P_{\Nup}W_Y P_{\Nup}\big) \geq - C (N^{17/18} + N^{7/6-\beta/2})(R^{-3}+1/\varepsilon s^2 R).
			\end{equation}
			$\bullet$ Finally, we bound $\tr\big(\gamma_Y(W_Y - P_{\Nup} W_Y P_{\Nup})\big)$. To do so, we separate it in three different parts:
			\begin{multline}
				\tr\big(\gamma_Y(W_Y - P_{\Nup} W_Y P_{\Nup})\big) = \tr\big(\gamma_Y(1 - P_{\Nup})W_Y P_{\Nup}\big) + \tr\big(\gamma_YP_{\Nup} W_Y (1 - P_{\Nup})\big)\\ + \tr\big(\gamma_Y(1 - P_{\Nup}) W_Y (1 - P_{\Nup})\big).
			\end{multline}
			For $A$, $B$ two Hilbert Schmidt operators, we know from Cauchy-Schwarz' inequality that
			\begin{equation}
				|\tr(AB)| \leq \frac{\delta}{2}\tr(AA^*) + \frac{\delta^{-1}}{2}\tr(B^*B).
			\end{equation}
			Applying it to $A = \gamma_Y^{1/2}(1 - P_{\Nup})(W_Y^+)^{1/2}$ and $B = (W_Y^+)^{1/2} P_{\Nup}\gamma_Y^{1/2}$, we find
			\begin{equation}
				\tr\big(\gamma_Y(1 - P_{\Nup})W_Y^+ P_{\Nup}\big) \geq - \frac{\delta}{2}\tr(\gamma_YP_{\Nup}W_Y^+ P_{\Nup}) - \frac{\delta^{-1}}{2}\tr\big(\gamma_Y(1 - P_{\Nup})W_Y^+ (1 - P_{\Nup})\big).
			\end{equation}
			By the same method applied to $A = \gamma_Y^{1/2}(1 - P_{\Nup})(-W_Y^-)^{1/2}$ and $B = (-W_Y^-)^{1/2} P_{\Nup}\gamma_Y^{1/2}$, we find
			\begin{equation}
				-\tr\big(\gamma_Y(1 - P_{\Nup})W_Y^- P_{\Nup}\big) \geq - \frac{\delta}{2}\tr(\gamma_YP_{\Nup}W_Y^- P_{\Nup}) - \frac{\delta^{-1}}{2}\tr\big(\gamma_Y(1 - P_{\Nup})W_Y^- (1 - P_{\Nup})\big).
			\end{equation}
			Moreover, 
			\begin{equation}
				\tr(\gamma_YP_{\Nup}W_Y^+ P_{\Nup}) \leq \tr(P_{\Nup}W_Y^+)
			\end{equation}
			and
			\begin{equation}
				\tr\big(\gamma_Y(1 - P_{\Nup})W_Y^+ (1 - P_{\Nup})\big) \leq \|W_Y^+\|_{\infty}\tr\big((1 - P_{\Nup})\gamma_Y\big),
			\end{equation}
			which implies
			\begin{equation}
				\tr\big(\gamma_Y(1 - P_{\Nup})W_Y P_{\Nup}\big) \geq -\frac{\delta}{2}\tr(P_{\Nup}W_Y^+) -\frac{\delta}{2} \tr(P_{\Nup}W_Y^-) - \frac{\delta^{-1}}{2}\tr\big((1 - P_{\Nup})\gamma_Y\big)(\|W_Y^+\|_{\infty} + \|W_Y^-\|_{\infty}).
			\end{equation}
			It is clear that the same inequality holds for $\tr\big(\gamma_YP_{\Nup} W_Y (1 - P_{\Nup})\big)$, and therefore
			\begin{multline}
				\tr\big(\gamma_Y(W_Y - P_{\Nup} W_Y P_{\Nup})\big) \geq - \delta\tr(P_{\Nup}W_Y^+) - \delta\tr(P_{\Nup}W_Y^-) \\+(1 + \delta^{-1})\tr\big(\gamma_Y(1 - P_{\Nup})\big)(\|W_Y^+\|_{\infty} + \|W_Y^-\|_{\infty}).
			\end{multline}
			Because of Lemma \ref{lemma_graf_seiringer} and of the bounds on $\|W_Y^+\|_{\infty}$ and $\|W_Y^-\|_{\infty}$, we finally find
			\begin{multline}\label{equation_lemma_errors_3}
				\int n_Y\tr\big(\gamma_Y(W_Y - P_{\Nup} W_Y P_{\Nup})\big) \d Y \geq - \delta\tr(P_{\Nup}W_Y^+) - C\frac{N}{\varepsilon s^2 R} \\ -C \delta^{-1} (N^{17/18} + N^{7/6 - \beta/2})(R^{-3} + 1/\varepsilon s² R).
			\end{multline}
		We collect the inequalities (\ref{equation_lemma_errors_1}), (\ref{equation_lemma_errors_2}) and (\ref{equation_lemma_errors_3}) to conclude.
	\end{proof}
	
	\subsection{Convergenge of densities}\label{subsection_convergence_densities}
	Now that we have estimated the errors, we have to check that the main term in (\ref{equation_errors_estimated}) converges to the expected interaction energy. First, we write it in a more convenient way as the interaction between two densities. Then, we show that both densities converge to the Thomas-Fermi density. 
		
	\subsubsection{Main term of the interaction energy}
	
	Recall that $W_Y$ and $n_Y$ are defined in \eqref{equation_definition_W} and \eqref{equation_definition_nY}. By definition, 
	\begin{equation}
		\int n_Y \tr(P_{\Nup}W_Y^+)\d Y = 4 \pi a_w (1-\varepsilon)\int \rho_{P_{\Nup}}(x)\sum_{k=1}^{\Ndown}\int n_Y \mathds{1}_{y_k \in \tilde{Y}}U_R(x - y_k)\d Y\d x.
	\end{equation}	
	Recall that $\tilde{Y} = \{y_j \in Y, ~\forall k\neq j,~|y_j - y_k|\geq 2R\}$; we set
	\begin{equation}\label{equation_definition_rho_tilde}
		\sigma_N^{(0,1)}(y_1) = \int n_Y \mathds{1}_{y_1 \in \tilde{Y}}\d \hat{Y}_1.
	\end{equation}
	The pseudo-dendity $\sigma_N^{(0,1)}$ corresponds to to the density without taking into account the particles that are too close from another. As this is a rare event, the pseudo-density should be close to the density. By symmetry of $n_Y$, we have
	
	\begin{equation}\label{equation_conclusion2}
		\int n_Y \tr(P_{\Nup}W_Y^+)\d Y = 4\pi a_w (1-\varepsilon)\int \rho_{P_{\Nup}}(x) U_R(x - y)\sigma_N^{(0, 1)}(y)\d x \d y.
	\end{equation}
	Besides, if we set $I_R(Y) = \sum_{k=1}^{\Ndown} \mathds{1}_{y_k\notin \tilde{Y}}$, we have
	\begin{equation}
		\int \sigma_N^{(0, 1)}(y)\d y = \Ndown - \sum_{k=1}^{\Ndown}\int n_Y \mathds{1}_{y_k\notin \tilde{Y}}\d Y = \Ndown - \int n_Y I_R(Y)\d Y.
	\end{equation}	
	Then, we use Theorem 5 of \cite{lieb_stability_1988} that we recall here to bound $\int n_Y I_R(Y)\d Y$.
	\begin{lemma}[Bound on the inverse distances]\label{lemma_bound_inverse_distance}
		Let $\delta_k$ be the distance of $y_k$ to its closest neighbour, i.e. $\delta_k = \min_{j\neq k}|y_j - y_k|$. Then, there exists $C$ such that the following inequality for operators on $L^2\big((\mathbb{R}^d)^{\Ndown}\big)$ holds true:
		\begin{equation}
			\sum_{k = 1}^{\Ndown} \frac{1}{\delta_k^2} \leq C \sum_{k = 1}^{\Ndown} -\Delta_k
		\end{equation}
	\end{lemma}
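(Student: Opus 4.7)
The plan is to follow the proof of Theorem 5 of \cite{lieb_stability_1988}, reducing the many-body inequality to single-particle Hardy-type inequalities via a partition-of-unity argument on the $\Ndown$-body configuration space. The starting point is the one-body Hardy inequality in three dimensions: for any fixed $z\in\R^3$, $-\Delta\geq \frac{1}{4|x-z|^2}$ on $L^2(\R^3)$. The naive wish is to take $z=y_{j(k)}$, where $y_{j(k)}$ realizes the minimum $\delta_k=\min_{j\neq k}|y_k-y_j|$, so as to get $-\Delta_k\geq \frac{1}{4\delta_k^2}$. The obstruction is that $j(k)$ depends on all the other coordinates, so this is not an operator inequality as stated; we need a localization scheme.

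First I would introduce a smooth partition of unity $\{\chi_\alpha\}$ on $(\R^3)^{\Ndown}$ subordinate to a Voronoi-like decomposition according to the identity of the nearest neighbour of each $y_k$, with the $\chi_\alpha$'s cut off at length scales proportional to the local nearest-neighbour distance. On each piece $\supp\chi_\alpha$, the nearest neighbour index $j(k)$ is essentially frozen, so that treating $y_{j(k)}$ as a parameter and applying Hardy in the $y_k$ variable yields a bound of the form $\chi_\alpha(-\Delta_k)\chi_\alpha\geq \frac{c}{\delta_k^2}\chi_\alpha^2$. I would then use the IMS-type identity
\begin{equation*}
-\Delta_k=\sum_\alpha\chi_\alpha(-\Delta_k)\chi_\alpha-\sum_\alpha|\nabla_k\chi_\alpha|^2
\end{equation*}
to reassemble the pieces. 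With the length scales chosen proportional to $\delta_k$, one checks that $\sum_\alpha|\nabla_k\chi_\alpha|^2\leq C'/\delta_k^2$, so a positive fraction of the Hardy term survives after moving the IMS error to the left-hand side. Summing over $k$ and over the localization pieces, and absorbing the geometric constants, produces the desired inequality with a dimensional constant $C$ independent of $\Ndown$.

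The main obstacle is the geometric bookkeeping required to build the partition of unity so that (i) on each piece $j(k)$ is effectively constant, (ii) the cutoff scales are comparable to the $\delta_k$'s, and (iii) the IMS cost is uniformly controlled in $\Ndown$, with no combinatorial blow-up from overlapping supports. This careful construction is precisely the content of the Lieb--Yau argument, which is why the lemma is quoted rather than reproved in the body of the paper.
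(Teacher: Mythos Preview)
The paper does not prove this lemma; it is stated without proof as a direct quotation of Theorem~5 of \cite{lieb_stability_1988} and used as a black box. Your proposal is therefore fully aligned with the paper's treatment---both defer to Lieb--Yau---and you go a step further by sketching the mechanism.

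One simplification of your sketch: since $-\Delta_k$ acts only in the $y_k$ variable, there is no need to build the partition of unity on the full configuration space $(\R^3)^{\Ndown}$. For each $k$ one may freeze the remaining coordinates $y_j$, $j\neq k$, as parameters; the problem then reduces to a one-body multi-center Hardy inequality in $\R^3$, namely $-\Delta_y \geq c\,\mathrm{dist}(y,\{y_j\}_{j\neq k})^{-2}$, which is precisely what the Lieb--Yau argument establishes via the Voronoi decomposition associated to the fixed points $\{y_j\}_{j\neq k}$. Summing over $k$ then gives the lemma. This is a cosmetic change to your outline, not a correction.
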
	
	Note that
	\begin{equation}
		I_R(Y) = \sum_{k=1}^{\Ndown} \mathds{1}_{\delta_k\leq 2R} \leq \sum_{k=1}^{\Ndown} \frac{(2R)^2}{\delta_k^2} \mathds{1}_{\delta_k\leq 2R} \leq \sum_{k=1}^{\Ndown} \frac{(2R)^2}{\delta_k^2}.
	\end{equation}
	Thus, since $R\ll \hbar$, we can use Lemma \ref{lemma_bound_inverse_distance} to find
	\begin{equation}
		\int n_Y I_R(Y) \d Y \leq C \frac{R^2}{\hbar^2} \Big\langle \Psi_{\Nup, \Ndown}\Big|\sum_{k=1}^{\Ndown} -\hbar^2 \Delta_k \Big|\Psi_{\Nup, \Ndown}\Big\rangle \leq C \frac{R^2}{\hbar^2} \langle \Psi_{\Nup, \Ndown}| H_N \Psi_{\Nup, \Ndown}\rangle\leq C N \frac{R^2}{\hbar^2} = o(N)
	\end{equation}
	Therefore, we have
	\begin{equation}
		\Ndown \geq \int \sigma_N^{(0, 1)}(y) \d y \geq \Ndown - CN \frac{R^2}{\hbar^2},
	\end{equation}
	$\sigma_N^{(0, 1)}$ being defined in \eqref{equation_definition_rho_tilde}. Since $0\leq \sigma_N^{(0, 1)}\leq \rho_{\Psi_{\Nup, \Ndown}}^{(0, 1)}$, we finally find
	\begin{equation}
		\|\sigma_N^{(0, 1)} - \rho_{\Psi_{\Nup, \Ndown}}^{(0, 1)}\|_{L^1} \leq C N \frac{R^2}{\hbar^2}.
	\end{equation}
	
	\subsubsection{Pseudo-density of the minimizer}
	Here, we prove that the pseudo-density of the minimizer, defined in \eqref{equation_definition_rho_tilde}, converges.
	\begin{lemma}[Convergence of the pseudo-density of the minimizer to the Thomas-Fermi density]\label{lemma_convergence_pseudo_density}
		Recall that $\rho^{\mathrm{TF}}$ and $\sigma_N^{(0, 1)}$ are defined in Proposition \ref{proposition_properties_thomas_fermi_density} and \eqref{equation_definition_rho_tilde}. We have 
		\begin{equation}\label{equation_conclusion3}
			\frac{1}{N}\sigma_N^{(0, 1)} \rightharpoonup \frac{1}{2}\rho^{\mathrm{TF}}
		\end{equation}
		weakly in $L^{5/3}$.
	\end{lemma}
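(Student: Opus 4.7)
The plan is to reduce the claim about $\sigma_N^{(0,1)}$ to the analogous one for the actual one--particle density $\rho_{\Psi_{\Nup, \Ndown}}^{(0,1)}$, and then to recognize the pair of Husimi spatial densities of $\Psi_{\Nup, \Ndown}$ as an almost--minimizing sequence for $\mathcal{E}^{\mathrm{TF}}_{(2)}$, so that Lemma~\ref{lemme_convergence_suites_minimisantes} applies. The reduction follows from the $L^1$ estimate $\|\sigma_N^{(0,1)} - \rho_{\Psi_{\Nup, \Ndown}}^{(0,1)}\|_{L^1} \leq C N R^2/\hbar^2 = o(N)$ just established: together with the $L^{5/3}$ bound of the next paragraph, this shows that $N^{-1}(\sigma_N^{(0,1)} - \rho_{\Psi_{\Nup, \Ndown}}^{(0,1)})$ vanishes weakly in $L^{5/3}$ (test first against $C_c^\infty$ using $L^1$ convergence, then extend by density of $C_c^\infty$ in $L^{5/2}$). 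Hence it suffices to prove $N^{-1}\rho_{\Psi_{\Nup, \Ndown}}^{(0,1)} \rightharpoonup \tfrac{1}{2}\rho^{\mathrm{TF}}$ weakly in $L^{5/3}$.

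From~\eqref{equation_bound_energy_test}, the positivity of $w_N$ and of $V \geq 1$, one reads off $\hbar^2 \tr(-\Delta\, \gamma_{\Psi_{\Nup, \Ndown}}^{(0,1)}) = O(N)$ and $\int V \rho_{\Psi_{\Nup, \Ndown}}^{(0,1)} = O(N)$. The Lieb--Thirring inequality then gives $\|N^{-1}\rho_{\Psi_{\Nup, \Ndown}}^{(0,1)}\|_{L^{5/3}} = O(1)$, and the coercivity of $V$ yields spatial tightness. Up to extracting a subsequence, $N^{-1}\rho_{\Psi_{\Nup, \Ndown}}^{(1,0)} \rightharpoonup \rho_\infty^\uparrow$ and $N^{-1}\rho_{\Psi_{\Nup, \Ndown}}^{(0,1)} \rightharpoonup \rho_\infty^\downarrow$ weakly in $L^{5/3}$, with $\int(\rho_\infty^\uparrow + \rho_\infty^\downarrow) = 1$ by tightness.

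Using the intermediate estimate obtained in the proof of Lemma~\ref{lemma_low_frequency}, namely $N^{-1}\langle \Psi_{\Nup, \Ndown} | H_N^{\mathrm{LF}} | \Psi_{\Nup, \Ndown}\rangle \geq \mathcal{E}^{\mathrm{TF}}_{p_F}[\rho_{m_\uparrow}, \rho_{m_\downarrow}] + o(1)$, combined with $H_N \geq H_N^{\mathrm{LF}}$ (the removed terms being non--negative) and~\eqref{equation_bound_energy_test} with $\beta > 1/3$, one obtains
\begin{equation*}
	\mathcal{E}^{\mathrm{TF}}_{p_F}[\rho_{m_\uparrow}, \rho_{m_\downarrow}] \leq E^{\mathrm{TF}} + o(1).
\end{equation*}
Choosing $p_F \to +\infty$ slowly enough, the $L^{5/3}$ control of the previous paragraph ensures that the contribution to $\mathcal{E}^{\mathrm{TF}}_{p_F}$ coming from the set where the density exceeds $(p_F/C)^3$ is $o(1)$, so that $\mathcal{E}^{\mathrm{TF}}_{p_F}$ may be replaced by $\mathcal{E}^{\mathrm{TF}}_{(2)}$ up to a vanishing error. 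Thus $(\rho_{m_\uparrow}, \rho_{m_\downarrow})$ is a minimizing sequence of $\mathcal{E}^{\mathrm{TF}}_{(2)}$, and Lemma~\ref{lemme_convergence_suites_minimisantes} delivers strong $L^{5/3}$ convergence of both Husimi spatial densities to $\tfrac{1}{2}\rho^{\mathrm{TF}}$.

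To conclude, $\rho_{m_\downarrow}$ is, up to normalization, the convolution of $N^{-1}\rho_{\Psi_{\Nup, \Ndown}}^{(0,1)}$ with a Gaussian of width $\hbar^{1/2} = N^{-1/6} \to 0$, so they share the same weak $L^{5/3}$ limit; hence $\rho_\infty^\downarrow = \tfrac{1}{2}\rho^{\mathrm{TF}}$. Uniqueness of the weak limit removes the need for a subsequence and, together with the first paragraph, gives the claim. The main technical hurdle is the interplay between the two uses of $p_F$: Lemma~\ref{lemma_low_frequency}'s intermediate bound is sharp only for moderate $p_F$, whereas recovering $\mathcal{E}^{\mathrm{TF}}_{(2)}$ from $\mathcal{E}^{\mathrm{TF}}_{p_F}$ requires $p_F \to \infty$; the uniform $L^{5/3}$ bound from the a priori step is precisely what allows both requirements to hold simultaneously.
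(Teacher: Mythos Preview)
The main gap is in your third paragraph. You invoke the intermediate inequality from the proof of Lemma~\ref{lemma_low_frequency}, which bounds the energy from below by the \emph{truncated} functional $\mathcal{E}^{\mathrm{TF}}_{p_F}$, and then claim that the uniform $L^{5/3}$ bound on the Husimi densities lets you replace $\mathcal{E}^{\mathrm{TF}}_{p_F}$ by $\mathcal{E}^{\mathrm{TF}}_{(2)}$ up to $o(1)$. This step does not follow. The discrepancy $\mathcal{E}^{\mathrm{TF}}_{(2)}[\rho]-\mathcal{E}^{\mathrm{TF}}_{p_F}[\rho]$ is supported on $\{\rho>(p_F/C)^3\}$ and is there comparable to $\rho^{5/3}$; hence what you need is $\int_{\{\rho_{m_\sigma}>(p_F/C)^3\}}\rho_{m_\sigma}^{5/3}\to 0$. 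A uniform bound on $\|\rho_{m_\sigma}\|_{L^{5/3}}$ only says that $\{\rho_{m_\sigma}^{5/3}\}_N$ is bounded in $L^1$, not uniformly integrable, so no choice of $p_F(N)\to\infty$ forces this tail to vanish in general. A higher $L^q$ bound on the Husimi densities would be required, and none is available at this point of the argument.

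The paper sidesteps this issue entirely: instead of routing through $H_N^{\mathrm{LF}}$ and the truncated functional, it simply drops the (non-negative) interaction from $H_N$ and applies the coherent-state lower bound with the \emph{full} Laplacian (the identity in \eqref{equation_kinetic_energy_husimi} together with the bathtub principle), obtaining directly
\[
\mathcal{E}^{\mathrm{TF}}_{(2)}[\rho_{m_\uparrow},\rho_{m_\downarrow}]\leq E^{\mathrm{TF}}+o(1)
\]
with no truncation to undo. The remainder of your outline --- the reduction of $\sigma_N^{(0,1)}$ to $\rho_{\Psi_{\Nup,\Ndown}}^{(0,1)}$ via the $L^1$ estimate, the appeal to Lemma~\ref{lemme_convergence_suites_minimisantes}, and the de-mollification to pass from the Husimi spatial density back to the true density --- is correct and matches the paper's strategy, although the paper arranges the steps slightly differently: it works with the mollification of $\sigma_N^{(0,1)}$ throughout and uses only the pointwise inequality $\sigma_N^{(0,1)}\leq\rho_{\Psi_{\Nup,\Ndown}}^{(0,1)}$ (rather than the $L^1$ closeness) to transfer the Thomas--Fermi upper bound.
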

	\begin{proof} We show that a mollification of $\sigma_N^{(0, 1)}$ is a minimizing sequence of $\mathcal{E}^{\mathrm{TF}}$, and hence that it converges strongly. Then we deduce that the original sequence converges weakly.~\\~\\
			$\bullet$ Let the coherent state $f_{x, p}^{\hbar}$ be defined by \eqref{equation_definition_fxp}, and let $m_{\uparrow}$ and $m_{\downarrow}$ denote the Husimi functions defined by
			\begin{equation} 
				m_{\downarrow}(x, p) = \Ndown^{-1} \langle f_{x, p}^{\hbar}| \gamma_{\Psi_{\Nup, \Ndown}}^{(0, 1)} f_{x, p}^{\hbar}\rangle. 
			\end{equation}
			Let
			\begin{equation}
				\gamma_{\sqrt{\hbar}}= \hbar^{-3/2}f(\cdot/\sqrt{\hbar})
			\end{equation}
			with $f\in C^{\infty}_c(\mathbb{R}^3)$ a real positive and radial function, then, we have (see the proof of Lemma \ref{lemma_semi_classical_approximation} in the Appendix for more details)
			\begin{multline}
				\tr(-\hbar^2 \Delta \gamma_{\Psi_{\Nup, \Ndown}}^{(0, 1)})+\tr(V\gamma_{\Psi_{\Nup, \Ndown}}^{(0, 1)}) = \Ndown \iint |p|^2 m_{\downarrow}(x, p)\d x \d p + \Ndown \iint V(x) m_{\downarrow}(x, p)\d x \d p + o(\Ndown)\\
				\geq \Ndown c^{\mathrm{TF}}\int \rho_{m_{\downarrow}}(x)^{5/3}\d x + \Ndown\int V(x) \rho_{m_{\downarrow}}(x)\d x + o(\Ndown)
			\end{multline}
			with 
			\begin{equation} 
				\rho_{m_{\downarrow}} = N^{-1}\rho_{\Psi_{\Nup, \Ndown}}^{(0, 1)} * \gamma_{\sqrt{\hbar}}. 
			\end{equation} 
			By the same computation for the spin-up particles, and because of \eqref{equation_supposition_etat_test} and the positivity of the interaction $w$ we find
			\begin{equation}
				\mathcal{E}_{(2)}^{\mathrm{TF}}[\rho_{m_{\uparrow}}, \rho_{m_{\downarrow}}] \leq E^{\mathrm{TF}} + o(1).
			\end{equation}
			Now, we set 
			\begin{equation} 
				\bar{\sigma}_N^{(0, 1)} := N^{-1}\sigma_N^{(0, 1)}*\gamma_{\sqrt{\hbar}},
			\end{equation}
			and as 
			\begin{equation}
				\sigma_N^{(0, 1)} \leq \rho_{\Psi_{\Nup, \Ndown}}^{(0, 1)},
			\end{equation}
			we have
			\begin{equation}
				\bar{\sigma}_N^{(0, 1)} \leq N^{-1}\rho_{\Psi_{\Nup, \Ndown}}^{(0, 1)} * \gamma_{\sqrt{\hbar}} = \rho_{m_{\downarrow}}.
			\end{equation}
			Then it is clear that
			\begin{equation}
				\mathcal{E}_{(2)}^{\mathrm{TF}}[\bar{\sigma}_N^{(1, 0)}, \bar{\sigma}_N^{(0, 1)}]\leq E^{\mathrm{TF}} + o(1).
			\end{equation}
			As 
			\begin{equation} 
				\int (\bar{\sigma}_N^{(1, 0)} + \bar{\sigma}_N^{(0, 1)}) = N^{-1}\int (\sigma_N^{(1, 0)} + \sigma_N^{(0, 1)}) = 1 + o(1), 
			\end{equation} 
			the couple $\Big(\frac{\bar{\sigma}_N^{(1, 0)}}{\|\bar{\sigma}_N^{(1, 0)}\| + \|\bar{\sigma}_N^{(0, 1)}\|}, \frac{\bar{\sigma}_N^{(0, 1)}}{\|\bar{\sigma}_N^{(1, 0)}\| + \|\bar{\sigma}_N^{(0, 1)}\|}\Big)$ minimizes $\mathcal{E}_{(2)}^{\mathrm{TF}}$ and hence by Lemma \ref{lemme_convergence_suites_minimisantes}, we have convergence towards $\rho^{\mathrm{TF}}$:
			\begin{equation}
				\frac{\bar{\sigma}_N^{(0, 1)}}{\|\bar{\sigma}_N^{(1, 0)}\|_{L^1} + \|\bar{\sigma}_N^{(0, 1)}\|_{L^1}} \to \frac{1}{2}\rho^{\mathrm{TF}}
			\end{equation}
			strongly in $L^{5/3}$. Finally, since $\|\bar{\sigma}_N^{(1, 0)}\|_{L^1} + \|\bar{\sigma}_N^{(0, 1)}\|_{L^1} \to 1$, we have
			\begin{equation}\label{equation_convergence_forte_pseudodensite}
				\bar{\sigma}_N^{(0, 1)} = N^{-1}\sigma_N^{(0, 1)} * \gamma_{\sqrt{\hbar}} \to \frac{1}{2}\rho^{\mathrm{TF}}.
			\end{equation}
			$\bullet$ Let $\varphi \in L^{5/2}(\R^3)$, we have
			\begin{equation}
				\left|\int \big(2N^{-1} \sigma_N^{(0, 1)} - \rho^{\mathrm{TF}}\big)\varphi\right| \leq \big\|2 \bar{\sigma}_N^{(0, 1)} * \gamma_{\sqrt{\hbar}} -  \rho^{\mathrm{TF}}\big\|_{L^{5/3}}\|\varphi\|_{L^{5/2}} + \|2N^{-1}\sigma_N^{(0, 1)}\|_{L^{5/3}}\big\|\varphi - \varphi * \gamma_{\sqrt{\hbar}}\big\|_{L^{5/2}}.
			\end{equation}
			Then by \eqref{equation_convergence_forte_pseudodensite} and e.g. Theorem 4.22 of \cite{brezis_functional_2011}), we have
			\begin{equation}
				\int \Big(\frac{1}{N}\sigma_N^{(0, 1)} - \frac{1}{2}\rho^{\mathrm{TF}}\Big)\varphi \to 0
			\end{equation}
			which means that $\frac{1}{N}\sigma_N^{(0, 1)}$ converges weakly to $\frac{1}{2}\rho^{\mathrm{TF}}$ in $L^{5/3}$.
	\end{proof}
	
	\begin{remark}\label{remark_convergence_ratio_Ndown_N}
		In the proof, we have proved that $\big(N^{-1}\rho_{\Psi_{\Nup, \Ndown}}^{(1, 0)} * \gamma_{\sqrt{\hbar}}, N^{-1}\rho_{\Psi_{\Nup, \Ndown}}^{(0, 1)} * \gamma_{\sqrt{\hbar}}\big)$ is a minimizing sequence for $\mathcal{E}_{(2)}^{\mathrm{TF}}$. This implies directly the convergence
		\begin{equation}
			N^{-1}\rho_{\Psi_{\Nup, \Ndown}}^{(0, 1)} * \gamma_{\sqrt{\hbar}} \to \frac{1}{2}\rho^{\mathrm{TF}}
		\end{equation}
		strongly in $L^{5/3}$ by Lemma \ref{lemme_convergence_suites_minimisantes}. Moreover $N^{-1}\int V\rho_{\Psi_{N_{\uparrow}, N_{\downarrow}}}^{(0, 1)}$ is bounded and thus $(N^{-1}\rho_{\Psi_{N_{\uparrow}, N_{\downarrow}}}^{(0, 1)})$ is a tight sequence of measures. Therefore, there exists a measure $\mu$ such that $(N^{-1}\rho_{\Psi_{N_{\uparrow}, N_{\downarrow}}}^{(0, 1)})$ converges weakly to $\mu$ as a sequence of measures, and
		\begin{equation}
			N^{-1}\int \rho_{\Psi_{N_{\uparrow}, N_{\downarrow}}}^{(0, 1)} \to \int \d\mu.
		\end{equation}
		In particular, $(\rho_{\Psi_{N_{\uparrow}, N_{\downarrow}}}^{(0, 1)})$ converges to $\mu$ as a distribution, and by uniqueness of the limit, we have $\mu = \rho^{\mathrm{TF}}$ and then
		\begin{equation}
			\frac{N_{\downarrow}}{N} = \frac{1}{N}\int \rho_{\Psi_{\Nup, \Ndown}}^{(0, 1)} \to \frac{1}{2}\int \rho^{\mathrm{TF}} = \frac{1}{2}.
		\end{equation}
		\hfill $\diamond$
	\end{remark}
	
	\subsubsection{Density of the free ground state}
	
	First, we recall why the convergence of the density of the projector on the free ground state $\rho_{P_{\Nup}}$ defined by \eqref{equation_definition_P_N} implies the convergence of its convolution with $U_R$.
	\begin{lemma}[Regularization of a converging sequence]
		Let $p \in (1, \infty)$, we assume that the sequence $(\rho_{P_{\Nup}}/\Nup)$ converges to $\rho^{\mathrm{TF}}$ in $L^p$, then
		\begin{equation}\label{equation_conclusion4}
			\frac{1}{\Nup}\rho_{P_{\Nup}}*U_R \to \rho^{\mathrm{TF}}
		\end{equation}
		with $P_N$, $U_R$ and $\rho^{\mathrm{TF}}$ defined in \eqref{equation_definition_P_N}, \eqref{equation_definition_UR} and Proposition \ref{proposition_properties_thomas_fermi_density}, and with $\hbar \geq R > R_0 \geq N^{-\beta}R_w$.
	\end{lemma}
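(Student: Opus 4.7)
The plan is to invoke Young's convolution inequality together with the fact that $U_R$ behaves as an approximate identity in $L^p$ as $R \to 0$ (which follows from $\int U_R = 1$ and $\supp U_R \subset B(0,R)$ with $R \leq \hbar \to 0$).

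First, I would use the triangle inequality to split
\begin{equation}
\Big\| \tfrac{1}{\Nup}\rho_{P_{\Nup}}*U_R - \rho^{\mathrm{TF}} \Big\|_{L^p} \leq \Big\| \big(\tfrac{1}{\Nup}\rho_{P_{\Nup}} - \rho^{\mathrm{TF}}\big) * U_R \Big\|_{L^p} + \Big\| \rho^{\mathrm{TF}} * U_R - \rho^{\mathrm{TF}} \Big\|_{L^p}.
\end{equation}
For the first term, Young's convolution inequality gives
\begin{equation}
\Big\| \big(\tfrac{1}{\Nup}\rho_{P_{\Nup}} - \rho^{\mathrm{TF}}\big) * U_R \Big\|_{L^p} \leq \Big\| \tfrac{1}{\Nup}\rho_{P_{\Nup}} - \rho^{\mathrm{TF}} \Big\|_{L^p} \|U_R\|_{L^1},
\end{equation}
and by the explicit form \eqref{equation_definition_UR}, $\|U_R\|_{L^1} = \int U_R = 1$. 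Thus this term vanishes by hypothesis.

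For the second term, I would exploit that $\rho^{\mathrm{TF}}$ is bounded and compactly supported (Proposition \ref{proposition_properties_thomas_fermi_density}), hence belongs to $L^p$ for every $p \in [1,\infty]$. Since $\int U_R = 1$ and $\supp U_R \subset B(0,R)$ with $R \leq \hbar = N^{-1/3} \to 0$, we can write
\begin{equation}
\rho^{\mathrm{TF}} * U_R (x) - \rho^{\mathrm{TF}}(x) = \int U_R(y)\big(\rho^{\mathrm{TF}}(x-y) - \rho^{\mathrm{TF}}(x)\big)\d y,
\end{equation}
and by Minkowski's integral inequality,
\begin{equation}
\big\|\rho^{\mathrm{TF}} * U_R - \rho^{\mathrm{TF}}\big\|_{L^p} \leq \sup_{|y| \leq R} \big\|\rho^{\mathrm{TF}}(\cdot - y) - \rho^{\mathrm{TF}}\big\|_{L^p},
\end{equation}
which tends to $0$ as $R \to 0$ by continuity of translation in $L^p$ for $p \in (1,\infty)$.

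There is no real obstacle here; this is a standard approximate-identity argument. The only point worth noting is that $U_R$ is supported on the annulus $\{R_0 \leq |x| \leq R\}$ rather than a ball, but this plays no role since the argument only needs $\int U_R = 1$ and the support shrinking to $\{0\}$, both of which hold under the assumption $\hbar \geq R > R_0 \geq N^{-\beta}R_w$ (with $\hbar \to 0$).
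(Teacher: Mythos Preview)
Your proof is correct and follows essentially the same approach as the paper: the same triangle-inequality splitting, then bounding the first term by the contraction property of convolution with a probability measure (you cite Young's inequality with $\|U_R\|_{L^1}=1$, the paper phrases it via Jensen's inequality, but the content is identical), and treating the second term as an approximate-identity argument. If anything, your version is slightly more explicit on the second term, where the paper simply asserts the convergence of $\rho^{\mathrm{TF}}*U_R$ to $\rho^{\mathrm{TF}}$.
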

	\begin{proof}
		We know that
		\begin{equation}
			\|\rho_{P_{\Nup}}*U_R - \Nup \rho^{\mathrm{TF}}\|_{L^p} \leq \|(\rho_{P_{\Nup}}- N \rho^{\mathrm{TF}})*U_R\|_{L^p} + N \|\rho^{\mathrm{TF}}*U_R - \rho^{\mathrm{TF}}\|_{L^p}.
		\end{equation}
		By Jensen's inequality, 
		\begin{equation}\label{equation_inegalité_Jensen}
			\Big((\rho_{P_{\Nup}} - N \rho^{\mathrm{TF}})*U_R\Big)^p = \left(\int (\rho_{P_{\Nup}} - N \rho^{\mathrm{TF}})(y) U_R(\cdot-y)\d y\right)^p \leq \int (\rho_{P_{\Nup}} - N \rho^{\mathrm{TF}})^p(y) U_R(\cdot-y)\d y.
		\end{equation}
		By integrating \eqref{equation_inegalité_Jensen}, we find
		\begin{equation}
			\|(\rho_{P_{\Nup}} - N \rho^{\mathrm{TF}})*U_R\|_{L^p} \leq \|\rho_{P_{\Nup}} - N \rho^{\mathrm{TF}}\|_{L^p},
		\end{equation}
		which together with the convergence of $\rho^{\mathrm{TF}}*U_R$ to $\rho^{\mathrm{TF}}$ concludes the proof.
	\end{proof}
	
	Now, only the convergence of the normalized density of the free ground state $\frac{1}{\Nup}\rho_{P_{\Nup}}$ remains to be shown. As the normalized pseudo-density $\tilde{\rho}/N$ converges weakly in $L^{5/3}$, we have to prove the strong convergence of the density of the projector in $L^{5/2}$ to have convergence of the right-hande side of \eqref{equation_conclusion2}.
	\begin{lemma}[Convergence of the density of the free ground state]\label{lemma_convergence_densite_projecteur}
		Recall that $P_{\Nup}$ and $\rho^{\mathrm{TF}}$ are defined in \eqref{equation_definition_P_N} and Proposition \ref{proposition_properties_thomas_fermi_density}. We have
		\begin{equation}\label{equation_conclusion5}
			\frac{1}{N}\rho_{P_{\Nup}} \to \frac{1}{2}\rho^{\mathrm{TF}}
		\end{equation}
		strongly in $L^{5/2}$.
	\end{lemma}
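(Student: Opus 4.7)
The plan is to mimic the Husimi--function / Thomas-Fermi minimizing-sequence strategy of Lemma~\ref{lemma_convergence_pseudo_density}, but applied now to the spectral projector $P_{\Nup}$, and then to upgrade the resulting strong $L^{5/3}$ convergence of a mollification to strong $L^{5/2}$ convergence of $\rho_{P_{\Nup}}/N$ itself by means of a uniform pointwise bound that uses the smoothness hypothesis (\ref{H_1}).

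First I would introduce the Husimi function $m_{P}(x,p) = \langle f^{\hbar}_{x,p}|P_{\Nup}|f^{\hbar}_{x,p}\rangle$. It satisfies $0 \leq m_{P} \leq 1$ and $(2\pi)^{-3}\iint m_{P} = \Nup$, and its spatial marginal is exactly the mollification of $\rho_{P_{\Nup}}$ by $\gamma_{\sqrt{\hbar}}$ used in Lemma~\ref{lemma_convergence_pseudo_density}. Since $P_{\Nup}$ minimizes $\tr(H^{0}_{\hbar}\,\cdot\,)$ over density matrices of trace $\Nup$ bounded by $\mathds{1}$, a Slater-determinant trial state built from $\tfrac12\rho^{\mathrm{TF}}$ gives the upper bound $\tr(H^{0}_{\hbar}P_{\Nup}) \leq \tfrac12 NE^{\mathrm{TF}} + o(N)$, while the Berezin--Lieb lower bound (the analog of \eqref{equation_potential_energy_husimi} together with Lemma~\ref{lemma_kinetic_energy_low_frequency_husimi}) reads
\begin{equation*}
\tr(H^{0}_{\hbar}P_{\Nup}) \geq \iint \bigl(|p|^{2}+V(x)\bigr)\,m_{P}(x,p)\,dx\,dp + o(N).
\end{equation*}
Replacing $m_{P}$ by the bang-bang profile $\tilde{m}(x,p) = \mathds{1}_{|p|\leq C\rho_{m_{P}}(x)^{1/3}}$ can only decrease this integral (cf. Section~\ref{subsection_Thomas_Fermi}), so $\rho_{m_{P}}/\Nup$ is, using also $\Nup/N\to 1/2$ from Remark~\ref{remark_convergence_ratio_Ndown_N}, a minimizing sequence for the single-spin TF problem. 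Pairing it with itself as in the proof of Lemma~\ref{lemma_convergence_pseudo_density} produces a minimizing sequence for $\mathcal{E}^{\mathrm{TF}}_{(2)}$, and Lemma~\ref{lemme_convergence_suites_minimisantes} yields
\begin{equation*}
\frac{1}{N}\,\rho_{P_{\Nup}} * \gamma_{\sqrt{\hbar}} \longrightarrow \tfrac{1}{2}\rho^{\mathrm{TF}} \qquad\text{strongly in } L^{5/3}.
\end{equation*}

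To remove the mollification and strengthen the convergence from $L^{5/3}$ to $L^{5/2}$, I would prove the uniform pointwise bound $\rho_{P_{\Nup}}(x)/N \leq C$. This is the pointwise (rather than integrated) form of Weyl's law for the projector of $-\hbar^{2}\Delta + V$ onto energies $\leq \lambda_{F}$; under the smoothness and growth control~(\ref{H_1}) it can be obtained by Helffer--Sjöstrand functional calculus, or equivalently by a direct coherent-state comparison, giving $\rho_{P_{\Nup}}(x)/N = (6\pi^{2})^{-1}(\lambda_{F}-V(x))_{+}^{3/2} + o(1)$ uniformly in $x$. Such a uniform $L^{\infty}$ bound implies that the mollification error $\rho_{P_\Nup}/N - \rho_{P_\Nup}/N * \gamma_{\sqrt{\hbar}}$ is controlled by continuity of translation, and interpolation between the strong $L^{5/3}$ limit and the $L^{\infty}$ bound then yields strong convergence in every $L^{p}$ with $p<\infty$, in particular in $L^{5/2}$.

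The main obstacle is this pointwise semi-classical estimate on $\rho_{P_{\Nup}}$. The trace-level semi-classical bounds encoded in Lemma~\ref{lemma_semi_classical_approximation} control $\tr(P_{\Nup})$ and $\tr(V P_{\Nup})$ but not a pointwise density, and upgrading to the pointwise statement is precisely what forces the fairly strong hypothesis~(\ref{H_1}) on $V$. Everything else in the argument is a rather mechanical adaptation of the tools already developed in Section~\ref{subsection_convergence_densities}.
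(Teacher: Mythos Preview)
Your strategy is genuinely different from the paper's, and it contains a real gap in the step where you remove the mollification.

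The paper does \emph{not} go through a pointwise Weyl law. Instead it obtains compactness directly in $L^{5/2}$ by a Sobolev argument: first it shows $N^{-1}\rho_{P_{\Nup}}$ is bounded in $L^{3}$ by controlling $\tr\bigl((-\hbar^{2}\Delta)^{3}P_{\Nup}\bigr)$ via commutator expansions of $(H^{0}_{\hbar})^{2}$ and $(H^{0}_{\hbar})^{3}$ (this is exactly where the inequalities $|\Delta V|\leq CV^{2}$, $|\nabla\Delta V|\leq CV$, $\sum|\partial_{jk}V|^{2}\leq CV^{2}$ of~(\ref{H_1}) enter) and then Lieb--Thirring; second it shows $N^{-1}\rho_{P_{\Nup}}$ is bounded in $\dot{W}^{1,1}$ using the trace-norm commutator bound $\|[\nabla,P_{\Nup}]\|_{\sigma^{1}}\leq CN$ from \cite{fournais_optimal_2020}; finally Gagliardo--Nirenberg interpolation gives a bound in $W^{1/10,5/2}$, whose compact embedding into $L^{5/2}$ yields strong convergence up to a subsequence, and the limit is identified by the weak argument you also invoke.

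The gap in your route is the sentence ``such a uniform $L^{\infty}$ bound implies that the mollification error $\rho_{P_{\Nup}}/N-\rho_{P_{\Nup}}/N*\gamma_{\sqrt{\hbar}}$ is controlled by continuity of translation''. Continuity of translation in $L^{p}$ is a statement about a \emph{fixed} function; for a \emph{sequence} $(f_{N})$ with a varying mollification scale you need equicontinuity, and a mere $L^{\infty}$ bound does not supply it (take $f_{N}=\mathds{1}_{[0,1]}\sin(N\cdot)$). What would actually work is the full uniform pointwise asymptotics $\rho_{P_{\Nup}}(x)/N=(6\pi^{2})^{-1}(\lambda_{F}-V(x))_{+}^{3/2}+o(1)$ that you also state: this already gives $N^{-1}\rho_{P_{\Nup}}\to\tfrac{1}{2}\rho^{\mathrm{TF}}$ uniformly on compacta (the right-hand side is $\tfrac12\rho^{\mathrm{TF}}$ by the Euler--Lagrange equation~\eqref{equation_egalite_multiplicateur_lagrange}), so the whole mollified TF-minimizing-sequence detour becomes unnecessary, and one only needs a tail estimate (Agmon-type decay in the classically forbidden region) to pass to global $L^{5/2}$. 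But proving that pointwise Weyl law with a uniform remainder under the sole assumption $V\in W^{3,\infty}_{\mathrm{loc}}$ of~(\ref{H_1}) is a substantial external input, not something the paper develops; the paper's commutator/Lieb--Thirring/Gagliardo--Nirenberg route was chosen precisely to stay self-contained and to use~(\ref{H_1}) in an elementary way.
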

	\begin{remark}
		This result may be proved with arguments from \cite{cardenas_commutator_2025}. We present here a different proof.
		\hfill $\diamond$
	\end{remark}
	\begin{proof}The goal here is to prove that $\Check{\rho}_{\Nup}$ is bounded in the Sobolev space $W^{s, 5/2}$ (with $s>0$) in order to obtain strong convergence in $L^{5/2}$ by a compact injection. To do so, we show that $\Check{\rho}_{\Nup}$ is bounded in $L^3$ and $\dot{W}^{1, 1}$ and we conclude using a Gagliardo-Nirenberg interpolation inequality.\\
			$\bullet$ In order to prove that $\Check{\rho}_{\Nup}$ is bounded in $L^3$, we want to bound $\tr\big((-\Delta)^3P_{N_{\uparrow}}\big)$. Before that, we have to bound $\tr\big((-\Delta)^2P_{N_{\uparrow}}\big)$. We denote $p = -i\hbar\nabla$ and we have $H^0_{\hbar} = p^2 + V$. Then,
			\begin{equation}
				(H^0_{\hbar})^2 = p^4 + p^2 V + Vp^2 + V^2 = p^4 + V^2 + pVp + [p, [p, V]].
			\end{equation}
			Since $V \geq 0$, we immediately have $pVp \geq 0$. On the other hand, we find
			\begin{equation}
				[p, [p, V]] = -\hbar^2 \Delta V.
			\end{equation}
			As $|\Delta V| \leq C V^2$, for $N$ large enough (and hence $\hbar$ small enough), we have 
			\begin{equation}
				\frac{1}{2}V^2 + [p, [p, V]] \geq 0
			\end{equation}
			and thus
			\begin{equation}
				(H^0_{\hbar})^2 \geq \frac{1}{2}V^2 + p^4 = \frac{1}{2}V^2 + (-\hbar^2\Delta)^2.
			\end{equation}
			By definition, $P_{\Nup}$ minimizes $\tr(H^0_{\hbar}\gamma)$ for $0\leq \gamma \leq 1$ and $\tr(\gamma) = \Nup$, and it also minimizes $\tr\big((H^0_{\hbar})^2 \gamma\big)$ with the samed constraints on $\gamma$. Then, we have (for $N$ large enough)
			\begin{equation}\label{equation_bound_laplacian_squarred}
				\tr(V^2 P_{\Nup}) + \tr(\hbar^4 (-\Delta)^2 P_{\Nup})\leq \tr\big((H^0_{\hbar})^2 P_{\Nup}\big) \leq C N.
			\end{equation}
			$\bullet$ Now, we want to prove a bound similar to (\ref{equation_bound_laplacian_squarred}) for $(-\hbar^2\Delta)^3$ in order to bound $\check{\rho}_{\Nup}$ in $L^3$. We have
			\begin{equation}
				(H^0_{\hbar})^3 \geq H^0_{\hbar} p^2 H^0_{\hbar} = p^6 + p^4 V + Vp^4 + V p^2 V \geq p^6 + p^4V + V p^4.
			\end{equation}
			Since $p^4 V = p^2 V p^2 + p^2[p^2, V] \geq p^2 [p^2, V]$ and $Vp^4 \geq -[p^2, V] p^2$, we have
			\begin{equation}
				(H^0_{\hbar})^3 \geq p^6 + [p^2, [p^2, V]] = p^6 + \hbar^4 [\Delta,[\Delta, V]].
			\end{equation}
			We have to compute the double commutator above in order to bound $p^6$ by $(H^0_{\hbar})^3$. For $\psi \in \Hcal$
			\begin{equation}
				[\Delta, V] \psi = \Delta (V\psi) - V \Delta \psi = (\Delta V)\psi + 2 \nabla V.\nabla\psi.
			\end{equation}
			Then,
			\begin{equation}
				[\Delta,[\Delta, V]]\psi = (\Delta^2 V)\psi + 4 \nabla \Delta V .\nabla \psi + 2 \nabla^{\otimes 2}V {:} \nabla^{\otimes 2}\psi,
			\end{equation}
			with
			\begin{equation}
				\nabla^{\otimes 2}V = (\nabla_{ij}V)_{i, j}~~~\text{and}~~~\nabla^{\otimes 2}V {:} \nabla^{\otimes 2}\psi = \sum_{i, j}\overline{\partial_{ij}V}\partial_{ij}\psi.
			\end{equation}
			Note that 
			\begin{equation}
				\nabla\Delta V.\nabla \geq -\frac{1}{2}(-\Delta) - \frac{1}{2}|\nabla \Delta V|^2
			\end{equation}
			and
			\begin{equation}
				\nabla^{\otimes 2}V{:}\nabla^{\otimes 2} \geq -\frac{\hbar^{-2}}{2}|\nabla^{\otimes 2}V|^2 - \frac{\hbar^2}{2}|-\nabla^{\otimes 2}|^2 = -\frac{\hbar^{-2}}{2}|\nabla^{\otimes 2}V|^2 - \frac{\hbar^2}{2}(-\Delta)^2.
			\end{equation}
			Therefore, because of (\ref{H_1}), we find
			\begin{equation}
				\tr\Big([\Delta,[\Delta, V]]P_{\Nup}\Big)\geq -2\tr((-\Delta)P_{\Nup}) - C(3 + \hbar^{-2})\tr(V^2P_{\Nup}) - \hbar^2 \tr((-\Delta)^2 P_{\Nup})
			\end{equation}
			and because of (\ref{equation_bound_laplacian_squarred}),
			\begin{equation}
				\tr\Big([\Delta,[\Delta, V]]P_{\Nup}\Big)\geq - C\hbar^{-2}N.
			\end{equation}
			Thus, we finally have
			\begin{equation}
				C N \geq \tr\big((H^0_{\hbar})^3 P_{\Nup})\geq \tr(\hbar^6 (-\Delta)^3P_{\Nup}) + O(\hbar^2 N).
			\end{equation}
			Then, the Lieb-Thirring inequality \cite[Eq. (3.3)]{daubechies_uncertainty_1983} gives
			\begin{equation}
				\tr(\hbar^6 (-\Delta)^3P_{\Nup}) \geq C N^{-2}\int (\rho_{P_{\Nup}})^3,
			\end{equation}
			and hence
			\begin{equation}
				\int \rho_{P_{\Nup}}^3 \leq C N^3.
			\end{equation}
			$\bullet$ Now that that we know that $N^{-1}\rho_{P_{\Nup}}$ is bounded in $L^3$, we want to bound it in $\dot{W}^{1, 1}$ to apply the Gagliardo-Nirenberg inequality. Thanks to Theorem 1.2. of \cite{fournais_optimal_2020}, we know that the trace norm of $[\nabla, P_{\Nup}]$ is bounded:
			\begin{equation}\label{equation_borne_commutateur_gradient}
				\|[\nabla, P_{\Nup}]\|_{\sigma^1}\leq C N.
			\end{equation}
			In particular, since for any $x\in \R^3$
			\begin{equation}
				[\nabla, P_N](x, x) = \nabla \rho_{P_{\Nup}}(x),
			\end{equation} 
			\eqref{equation_borne_commutateur_gradient} implies that
			\begin{equation}
				\sup_{\|\varphi\|_{L^{\infty}} = 1}\left| \int \varphi \nabla \rho_{P_{\Nup}}\right|\leq C N.
			\end{equation} 
			Therefore, $\rho_{P_{\Nup}}\in \dot{W}^{1, 1}$ and $\|\nabla\rho_{P_{\Nup}}\|_{L^1}\leq C N$. Therefore, $N^{-1} \nabla \rho_{\Nup}$ is bounded. We use Gagliardo Nirenberg inequality to find
			\begin{equation}
				\|N^{-1}\rho_{P_{\Nup}}\|_{\dot{W}^{1/10, 5/2}} \leq C \|N^{-1}\rho_{P_{\Nup}}\|_{\dot{W}^{1, 1}}^{1/10} \|N^{-1}\rho_{P_{\Nup}}\|_{L^{3}}^{9/10}.
			\end{equation}
			Moreover, $N^{-1}\rho_{P_{\Nup}}$ is bounded in $L^1$ and in $L^3$, and thus in $L^{5/2}$. Therefore, $N^{-1}\rho_{P_{\Nup}}$ is bounded in $W^{1/10, 5/2}$, and converges strongly  up to extraction to a certain $\rho_{\infty}$ in $L^{5/2}$. Furthermore, as $\Nup/N \to 1/2$ (see Remark \ref{remark_convergence_ratio_Ndown_N}), $P_{\Nup} \otimes P_{\Ndown}$ minimizes approximatively $H_N$ (up to a $O(N)$), and hence $N^{-1}\rho_{P_{\Nup}}$ converges weakly to $\frac{1}{2}\rho^{\mathrm{TF}}$. Therefore, we find $\rho_{\infty} = \frac{1}{2}\rho^{\mathrm{TF}}$ and since it is the only possible limit, we have convergence of the whole sequence $(N^{-1}\rho_{P_{\Nup}})$ to $\frac{1}{2}\rho^{\mathrm{TF}}$, strongly in $L^{5/2}$.
	\end{proof}
	
	\subsection{Conlusion}\label{subsection_conclusion}
	From (\ref{equation_conclusion3}), (\ref{equation_conclusion4}) and (\ref{equation_conclusion5}), we know that $\sigma_N^{(0,1)}/N$ converges weakly in $L^{5/2}$ and $\rho_{P_{\Nup}}*U_R/N$ strongly in $L^{5/3}$. Therefore,
	\begin{equation}\label{equation_juste_avant_conclusion}
		\int \rho_{P_{\Nup}}(x)U_R(x-y)\sigma_N^{(0, 1)}(y)\d x \d y \to \int \big(\rho^{\mathrm{TF}}\big)^2.
	\end{equation}
	Then, putting together (\ref{equation_conclusion1}), (\ref{equation_conclusion15}), (\ref{equation_errors_estimated}), (\ref{equation_conclusion2}) and \eqref{equation_juste_avant_conclusion}, we find
	\begin{equation}
		\langle\Psi_{\Nup, \Ndown}|H_N| \Psi_{\Nup, \Ndown}\rangle \geq N E^{\mathrm{TF}} + 2\pi a_w(1-\varepsilon -\delta - s^2p_{\mathrm{F}}^2) N^{4/3-\beta} \int \big(\rho^{\mathrm{TF}}\big)^2 + C f(N)
	\end{equation}
	with
	\begin{equation}
		f(N) = N^{5/6} + p_{\mathrm{F}}^{-2}N + \delta^{-1}N^{1/3 - \beta} \Big(N^{-1/18} + N^{1/6 - \beta/2}\Big) \left(R^{-3} + \frac{1}{\varepsilon s^2 R}\right) + \frac{N^{1/3 - \beta}}{\varepsilon s^2 R}.
	\end{equation}	
	We take 
	\begin{equation} \label{equation_choix_constantes}
		\begin{cases} 
			\delta = \varepsilon
			p_{\mathrm{F}}^{-2} = \varepsilon N^{1/3-\beta}
			s^2 = \varepsilon^2 N^{1/3-\beta}
			R = \varepsilon \hbar,
		\end{cases}
	\end{equation}
	and we find
	\begin{align}
		f(N) &= N^{5/6} + \varepsilon N^{4/3-\beta} + \varepsilon^{-1}N^{1/3-\beta}(N^{-1/18} + N^{1/6-\beta/2})(\varepsilon^{-3}N + \varepsilon^{-4}N^{\beta}) + \varepsilon^{-4} N^{1/3}\notag\\
		&\leq N^{5/6} + \varepsilon N^{4/3-\beta} + \varepsilon^{-4}N^{4/3 - \beta} (N^{-1/18} + N^{1/6-\beta/2}) + \varepsilon^{-4} N^{1/3}.
	\end{align}	
	Note that $N^{5/6} = o(N^{4/3-\beta})$ if and only if $\beta < 1/2$. Furthermore, we impose 
	\begin{equation}
		1 \gg \varepsilon \gg (N^{-1/18} + N^{1/6-\beta/2})^{1/4},
	\end{equation}	
	which implies
	\begin{equation}
		f(N) = o(N^{4/3-\beta})
	\end{equation}
	and
	\begin{equation}
		\langle\Psi_{\Nup, \Ndown}|H_N |\Psi_{\Nup, \Ndown}\rangle \geq N E^{\mathrm{TF}} + 2\pi a_w(1 + o(1)) N^{4/3-\beta} \int \big(\rho^{\mathrm{TF}}\big)^2 + O(N^{4/3-\beta}).
	\end{equation}
	\appendix
	\section{Appendix: Some semi-classical results}
	
	In this section, we recall some well known definitions and results in semi-classical analysis (see Section 2.1 of \cite{fournais_semi-classical_2018} for more detail and proofs). Moreover, we prove that the semi-classical energy approximates the quantum energy.
	
	\begin{definition}
		We choose $\hbar_x, ~\hbar_p$ such that $\hbar_x\hbar_p = \hbar^2$ and $\hbar_x,~\hbar_p \ll 1$. We fix $f\in C^{\infty}_c(\R)$ a real, positive and radial function. For $y,q\in \R^3$, we set 
		\begin{equation}
			\begin{cases}
				f^{\hbar}(y) = \hbar_x^{-3/4}f(y/\sqrt{\hbar_x})\\
				g^{\hbar}(q) = \hbar_p^{-3/4}\hat{f}(q/\sqrt{\hbar_p})
			\end{cases}
		\end{equation}
		with
		\begin{equation}
			\hat{f}(q) = (2\pi)^{-3/2}\int f(y) e^{-iq.y}\d y
		\end{equation}
		For $x, y, p, q\in \R^3$, we set
		\begin{equation}\label{equation_definition_fxp}
			f_{x,p}^{\hbar} (y) = \hbar_x^{-3/4}f\left(\frac{y-x}{\sqrt{\hbar_x}}\right)e^{ip.y/\hbar}.
		\end{equation}
		For $\gamma = \sum \alpha_n|u_n\rangle\langle u_n|$ a positive trace-class operator, we define respectively the space and momentum densities of $\gamma$ by
		\begin{equation}
			\rho_{\gamma}= \sum \alpha_n |u_n|^2~~~\mathrm{and}~~~t_{\gamma} = \sum_n \alpha_n\big|\mathcal{F}_{\hbar}[u_n]\big|^2.
		\end{equation}
	\end{definition}
	We give some well-known results on $f_{x, p}^{\hbar}$, $\rho_{\gamma}$ and $t_{\gamma}$.
	\begin{proposition}
		We have 
		\begin{equation}\label{equation_resolution_identity_f}
			\int |f_{x, p}^{\hbar}\rangle\langle f_{x, p}^{\hbar}| \d x \d p = (2\pi\hbar)^3 = \frac{(2\pi)^3}{N}.
		\end{equation}
		For a given $\gamma$, we have 
		\begin{equation}\label{equation_integral_density_trace}
			\int \rho_{\gamma}(x)\d x = \int t_{\gamma}(p)\d p = \tr(\gamma).
		\end{equation}
	\end{proposition}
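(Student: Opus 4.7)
The plan is to prove the two stated identities independently, reducing each to a standard isometry property of the Fourier transform (Plancherel's theorem). Both are essentially textbook semiclassical calculations, so the main task is bookkeeping of constants and of the implicit normalization $\|f\|_{L^2(\R^3)} = 1$ built into the definitions of $f^\hbar$ and $f_{x,p}^\hbar$.

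For the resolution of identity \eqref{equation_resolution_identity_f}, I would first observe that it suffices, by polarization, to establish the scalar identity
\begin{equation*}
\iint \bigl|\langle f_{x,p}^\hbar, \psi\rangle\bigr|^2 \, \d x \, \d p = (2\pi\hbar)^3 \|\psi\|_{L^2}^2
\end{equation*}
for every $\psi \in L^2(\R^3)$. Writing $f^\hbar(y) = \hbar_x^{-3/4} f(y/\sqrt{\hbar_x})$, the matrix element becomes
\begin{equation*}
\langle f_{x,p}^\hbar, \psi\rangle = \int \overline{f^\hbar(y-x)}\, \psi(y)\, e^{-ip\cdot y/\hbar}\, \d y,
\end{equation*}
which, for fixed $x$, is (up to the customary factor $(2\pi)^{3/2}$) the Fourier transform of $y \mapsto \overline{f^\hbar(y-x)}\psi(y)$ evaluated at $p/\hbar$. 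Applying Plancherel in the $p$-variable with the change of variable $p = \hbar q$ yields
\begin{equation*}
\int \bigl|\langle f_{x,p}^\hbar, \psi\rangle\bigr|^2 \d p = (2\pi\hbar)^3 \int |f^\hbar(y-x)|^2 |\psi(y)|^2 \d y.
\end{equation*}
Integrating against $\d x$, applying Fubini, and using $\int |f^\hbar|^2 = \|f\|_{L^2}^2 = 1$ finishes the proof.

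For identity \eqref{equation_integral_density_trace}, I would expand $\gamma = \sum_n \alpha_n |u_n\rangle\langle u_n|$ with $(u_n)$ orthonormal in $L^2(\R^3)$. Then by definition and monotone convergence,
\begin{equation*}
\int \rho_\gamma(x)\, \d x = \sum_n \alpha_n \int |u_n(x)|^2 \d x = \sum_n \alpha_n = \tr(\gamma).
\end{equation*}
The identity for $t_\gamma$ follows from the same computation once we note that $\mathcal{F}_\hbar$ is an $L^2$-isometry (a direct consequence of Plancherel and the scaling $p \mapsto p/\hbar$), so that $\int |\mathcal{F}_\hbar[u_n]|^2 \d p = \|u_n\|_{L^2}^2 = 1$.

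There is no real obstacle here: the only point requiring care is the consistent use of the normalization $\|f\|_{L^2} = 1$ in the definition of the coherent states (which is implicit in the constants appearing in \eqref{equation_resolution_identity_f}), and the convention $\mathcal{F}_\hbar[u](p) = (2\pi\hbar)^{-3/2}\int u(y) e^{-ip\cdot y/\hbar}\d y$ used to define $t_\gamma$. A routine verification that these conventions are compatible with the factor $(2\pi\hbar)^3$ on the right-hand side completes the argument.
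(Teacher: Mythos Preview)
Your proof is correct and is the standard argument for these identities. The paper does not actually supply a proof of this proposition: it is stated as a collection of ``well-known results'' with a reference to \cite{fournais_semi-classical_2018} for details, so there is nothing to compare against. Your only implicit assumption, $\|f\|_{L^2}=1$, is indeed required for \eqref{equation_resolution_identity_f} to hold with the stated constant, and you are right that the paper leaves this normalization tacit in the definition of $f^{\hbar}$.
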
	
	We recall that 
	\begin{equation}
		H^0_{\hbar} = -\hbar^2 \Delta + V = \sum_{n \geq1}\lambda_n |e_n\rangle\langle e_n| ~~~~~ \text{and} ~~~~~ P_N = \sum_{n=1}^N |e_n\rangle\langle e_n |.
	\end{equation}
	Now, we recall notation introduced in Section \ref{subsection_error_estimates}. For a given $\Lambda \in \R_+$, we have
	\begin{equation}
		\begin{cases}
			N^{\mathrm{q}}_{\hbar}(\Lambda) = \#\{n\in \N,~\lambda_n \leq \Lambda\}\\
			E^{\mathrm{q}}_{\hbar}(\Lambda) = \sum_{\lambda_n\leq \Lambda}\lambda_n\\
			N^{\mathrm{cl}}(\Lambda) = (2\pi)^{-3}\int \mathds{1}_{|p|^2 + V(x)\leq \Lambda}\d x \d p\\
			E^{\mathrm{cl}}(\Lambda) = (2\pi)^{-3}\int (|p|^2 + V(x))\mathds{1}_{|p|^2 + V(x) \leq \Lambda}\d x \d p.
		\end{cases}
	\end{equation}	
	We want to prove the following result:
	\begin{lemma}[semi-classical approximation of the energy]\label{lemma_semi_classical_approximation}
		Let $\Lambda \in \R_ +$, we have
		\begin{equation}\label{equation_semi_classique_energie}
			N E^{\mathrm{cl}}(\Lambda) = E^{\mathrm{q}}_{\hbar}(\Lambda) + O(N^{8/9}).
		\end{equation}
		and
		\begin{equation}\label{equation_semi_classique_nombre}
			N N^{\mathrm{cl}}(\Lambda) = N^{\mathrm{q}}_{\hbar}(\Lambda) + O(N^{8/9})
		\end{equation}
	\end{lemma}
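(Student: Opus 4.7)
The plan is to prove both identities by a standard coherent-state / Berezin--Lieb argument, the key inputs being the resolution of identity (\ref{equation_resolution_identity_f}) for $f^\hbar_{x,p}$ and the hypotheses (\ref{H_1}) and (\ref{H_2}) on $V$. For definiteness I would take $\hbar_x = \hbar_p = \hbar = N^{-1/3}$. The starting point is the direct computation
\[ \bigl\langle f^\hbar_{x,p}\bigm|H^0_\hbar f^\hbar_{x,p}\bigr\rangle = |p|^2 + (V*|f^\hbar|^2)(x) + \hbar_p\|\nabla f\|_{L^2}^2, \]
together with the pointwise estimate $(V*|f^\hbar|^2)(x) = V(x) + O(\hbar_x)$ uniformly on the bounded set $\{V\le 2\Lambda\}$, which is the only region that contributes. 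This last estimate follows by Taylor-expanding $V$ around $x$, using the radiality of $f$ to kill the odd-order terms, together with the $W^{3,\infty}_{\mathrm{loc}}$ regularity from (\ref{H_1}).

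The core step is to sandwich $\tr(\Lambda - H^0_\hbar)_+ = \Lambda N^q_\hbar(\Lambda) - E^q_\hbar(\Lambda)$ between two semiclassical integrals. The lower bound $\tr(\Lambda - H^0_\hbar)_+ \ge (2\pi\hbar)^{-3}\iint(\Lambda - |p|^2 - V(x) - O(\hbar))_+\,dx\,dp$ follows from the Berezin--Lieb inequality: Jensen's inequality applied to the convex map $t\mapsto (\Lambda-t)_+$ against the spectral measure of $H^0_\hbar$ in the unit vector $f^\hbar_{x,p}$ gives $(\Lambda - \langle f^\hbar_{x,p}|H^0_\hbar f^\hbar_{x,p}\rangle)_+ \le \langle f^\hbar_{x,p}|(\Lambda - H^0_\hbar)_+ f^\hbar_{x,p}\rangle$, which integrates via the resolution of identity. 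For the matching upper bound I would exploit the variational characterization $\tr(\Lambda - H^0_\hbar)_+ = \sup_{0\le \gamma\le I}\tr(\gamma(\Lambda - H^0_\hbar))$ and plug in the coherent-state trial $\gamma = \iint \chi(x,p)\,|f^\hbar_{x,p}\rangle\langle f^\hbar_{x,p}|\,\frac{dx\,dp}{(2\pi\hbar)^3}$ with $\chi$ the indicator of a slightly enlarged semiclassical sub-level set. The two bounds together pin down $\tr(\Lambda - H^0_\hbar)_+$ up to a boundary integral, and running the same argument at two nearby values of $\Lambda$ and finite-differencing separates $N^q_\hbar(\Lambda)$ from $E^q_\hbar(\Lambda)$, yielding both (\ref{equation_semi_classique_energie}) and (\ref{equation_semi_classique_nombre}) at once.

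The main obstacle is estimating the remainder
\[ \frac{1}{(2\pi\hbar)^3}\iint\Bigl[\bigl(\Lambda - |p|^2 - V(x) + O(\hbar)\bigr)_+ - \bigl(\Lambda - |p|^2 - V(x)\bigr)_+\Bigr]dx\,dp, \]
which reduces to controlling the measure of the transition region $\{|\,|p|^2+V(x)-\Lambda\,|\le C\hbar\}$. This is exactly where assumption (\ref{H_2}) enters: differentiability of $\Lambda\mapsto \iint \mathbf{1}_{|p|^2+V(x)\le \Lambda}\,dx\,dp$ implies that this measure is $o(\hbar)$, so the total error is $o(N\hbar) = o(N^{2/3})$, which sits comfortably inside the $O(N^{8/9})$ tolerance of the lemma. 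The unboundedness of $V$ is handled by first restricting to $\{V\le 2\Lambda\}$ before Taylor-expanding, which is uniform thanks to the polynomial bounds on $\Delta V$ and $\nabla\Delta V$ built into (\ref{H_1}).
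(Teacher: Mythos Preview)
Your approach is essentially the paper's: bound the variational quantity $\tilde E^{\mathrm q}_\hbar(\Lambda):=-\tr(\Lambda-H^0_\hbar)_+$ above and below by its semi-classical counterpart via coherent states, then finite-difference in $\Lambda$ to isolate $N^{\mathrm q}_\hbar$. The paper takes the anisotropic scaling $\sqrt{\hbar_x}=\hbar_p=\hbar^{2/3}$ and obtains an $O(N^{7/9})$ error on $\tilde E$, whereas you keep $\hbar_x=\hbar_p=\hbar$ and exploit the radiality of $f$ to kill the first-order Taylor term in $V*|f^\hbar|^2$; your choice in fact gives a slightly sharper $O(N^{2/3})$.

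One correction, though. The remainder integral you display is \emph{not} governed by the transition region. The dominant contribution is the bulk shift: over the whole sub-level set $\{|p|^2+V\le\Lambda\}$ the integrand differs by $O(\hbar)$, so the integral is $(2\pi\hbar)^{-3}\cdot O(\hbar)\cdot O(1)=O(N\hbar)=O(N^{2/3})$, not $o(N^{2/3})$; the boundary layer is lower order and (\ref{H_2}) is irrelevant here. Where (\ref{H_2}) \emph{is} needed is precisely in the finite-differencing step you only sketch: passing from $\tilde E^{\mathrm{cl}}(\Lambda)-\tilde E^{\mathrm{cl}}(\Lambda+\mu)=\mu N^{\mathrm{cl}}(\Lambda)+o(\mu)$ to the quantitative $\mu N^{\mathrm{cl}}(\Lambda)+O(\mu^2)$ requires differentiability of $N^{\mathrm{cl}}$, which is exactly (\ref{H_2}). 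With that in hand, balancing $N\mu$ against $N^{2/3}/\mu$ gives $N^{\mathrm q}_\hbar(\Lambda)=NN^{\mathrm{cl}}(\Lambda)+O(N^{5/6})$, well inside the stated $O(N^{8/9})$.
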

	\begin{remark}
		If we choose $\Lambda$ such that $N^{\mathrm{cl}}(\Lambda) = 1$, we find 
		\begin{equation}
			N^{\mathrm{q}}(\Lambda) = N + o(N^{8/9}).
		\end{equation}
		\hfill $\diamond$
	\end{remark}
	\begin{proof}
		We fix $\Lambda \in \mathbb{R}_+$. Let $\tilde{E}^{\mathrm{q}}_{\hbar}(\Lambda)$ and $\tilde{E}^{\mathrm{cl}}(\Lambda)$ be the variational energies defined respectively by
		\begin{equation}\label{equation_quantum_energy_variational}
			\tilde{E}^{\mathrm{q}}_{\hbar}(\Lambda) = \min \big\{\tr\big((H^0_{\hbar}-\Lambda)\gamma\big),~ 0\leq \gamma\leq 1\big\} = \tr\big((H^0_{\hbar} - \Lambda)P_N\big) = E^{\mathrm{q}}_{\hbar}(\Lambda) - \Lambda N^{\mathrm{cl}}(\Lambda)
		\end{equation}
		and
		\begin{equation}\label{equation_semiclassical_energy_variational}
			\tilde{E}^{\mathrm{cl}}(\Lambda) = \min \left\{\frac{1}{(2\pi)^3}\iint \big(|p|^2 + V(x) - \Lambda\big)m(x, p)\d x\d p,~ 0\leq m\leq 1 \right\} = E^{\mathrm{cl}}(\Lambda) - \Lambda N^{\mathrm{cl}}(\Lambda).
		\end{equation}
		First, we show the counterpart of (\ref{equation_semi_classique_energie}) for $\tilde{E}^{\mathrm{q}}_{\hbar}$ and $\tilde{E}^{\mathrm{cl}}$; to do so, we show an upper bound and a lower bound.\\~\\
		\textbf{Upper bound for the variational energies.} Let $m_0(x, p) = \mathds{1}_{|p|^2 + V(x) \leq \Lambda}$ be the minimizer of the semi-classical variational problem (\ref{equation_semiclassical_energy_variational}), we define the test one-body density matrix $\gamma$ by
		\begin{equation}
			\gamma = \frac{N}{(2\pi)^3}\iint m_0(x, p) |f_{x, p}^{\hbar}\rangle \langle f_{x, p}^{\hbar}|\d x \d p.
		\end{equation}
		Because of (\ref{equation_resolution_identity_f}), we know that $0 \leq \gamma \leq 1$, and hence
		\begin{equation}
			\tilde{E}^{\mathrm{q}}_{\hbar}(\Lambda) \leq \tr\big((H^0_{\hbar} - \Lambda)\gamma\big) = \frac{N}{(2\pi)^3}\iint m_0(x, p)\langle f_{x, p}^{\hbar}|(H^0_{\hbar} - \Lambda)f_{x, p}^{\hbar}\rangle.
		\end{equation}
		Now, we compute separately the kinetic and potentiel energies. On the one hand, 
		\begin{align}
			\int |\nabla f_{x, p}^{\hbar}|^2 &= \hbar_x^{-3/2} \int \left| \nabla_y\left( f\left(\frac{y - x}{\sqrt{h_x}}\right)e^{ipy/\hbar}\right)\right|^2 \d y \nonumber\\
			&= \hbar_x^{-3/2}\int \left|\hbar_x^{-1/2}\nabla f\left(\frac{y - x}{\sqrt{\hbar_x}}\right)e^{ipy/\hbar} + \frac{ip}{\hbar}f\left(\frac{y-x}{\sqrt{\hbar_x}}\right)e^{ipy/\hbar}\right|^2 \d y \nonumber\\
			&= \hbar_x^{-3/2} \int \hbar_x^{-1} \left|\nabla f\left(\frac{y - x}{\sqrt{\hbar_x}}\right)\right|^2\d y + \hbar_x^{-3/2}\int  \hbar^{-2}|p|^2 \left| f\left(\frac{y - x}{\sqrt{\hbar_x}}\right)\right|^2\d y\nonumber\\
			&= \hbar_x^{-1} \|\nabla f\|_{L^2}^2 + \hbar^{-2}|p|^2.
		\end{align}		
		On the other hand,
		\begin{equation}
			\hbar_x^{-3/2} \int V(y) \left|f\left(\frac{y - x}{\sqrt{\hbar_x}}\right)\right|^2 \d y= \int |f(y)|^2 V(x + \sqrt{\hbar_x}y) \d y = V(x) + O\big(\|\nabla V\|_{L^{\infty}(\{V \leq \Lambda + 1\})}\sqrt{\hbar_x}\big).
		\end{equation}		
		Therefore,
		
		\begin{equation}
			\tilde{E}^{\mathrm{q}}_{\hbar}(\Lambda)\leq N\big(\tilde{E}^{\mathrm{cl}}(\Lambda)+O(\hbar_p + \sqrt{\hbar_x})\big).
		\end{equation}
		Finally, we optimize by choosing $\sqrt{\hbar_x} = \hbar_p = \hbar^{2/3}$, which gives
		\begin{equation}\label{equation_variational_energy_upper_bound}
			\tilde{E}^{\mathrm{q}}_{\hbar}(\Lambda)\leq N\tilde{E}^{\mathrm{cl}} + O(N^{7/9}).
		\end{equation}
		\textbf{Lower bound for the variational energies.} Let $\gamma_0 = \sum_{\lambda_n\leq \Lambda}|e_n\rangle\langle e_n|$ be the minimizer of the quantum variational problem (\ref{equation_quantum_energy_variational}), we define the Husimi function associated to $\gamma_0$ by
		\begin{equation}
			m(x, p) = \langle f_{x, p}^{\hbar}| \gamma_0| f_{x, p}^{\hbar}\rangle.
		\end{equation}
		As $0\leq \gamma_0\leq 1$ and $\|f_{x, p}^{\hbar}\|^2 = 1$, we have
		\begin{equation}
			\tilde{E}^{cl}(\Lambda) \leq \frac{1}{(2\pi)^d}\iint \big(|p|^2 + V(x) - \Lambda\big) m(x, p)\d x \d p = \frac{1}{(2\pi)^d} \iint \big(|p|^2 + V(x)-\Lambda\big) \langle f_{x, p}^{\hbar} | \gamma_0 | f_{x, p}^{\hbar}\rangle\d x \d p.
		\end{equation}
		Once again, we compute separately the kinetic and potential energies. On the one hand,
		\begin{align}\label{equation_kinetic_energy_husimi}
			\frac{1}{(2\pi \hbar)^3} \int |p|^2 \langle f_{x, p}^{\hbar} | \gamma_0 | f_{x,p}^{\hbar}\rangle\d x \d p &= \int |p|^2 \big(t_{\gamma_0}*|g^{\hbar}|^2\big)(p) \d p\nonumber\\
			&= \iint |p-q|^2  t_{\gamma_0}(p) |g^{\hbar}(q)|^2 \d q \d p \nonumber\\
			&= \int |p|^2 t_{\gamma_0}(p) \d p + \hbar_p^{-3/2}\int t_{\gamma_0}(p) \int |q|^2 |\hat{f}(q/\sqrt{\hbar_p})|^2 \d q \d p\nonumber\\
			&= \tr\big((-\hbar^2 \Delta) \gamma_0\big) + \hbar_p \tr(\gamma_0)\int |k|^2 |\hat{f}(k)|^2 \d k\notag\\
			&= \tr\big((-\hbar^2 \Delta) \gamma_0\big) + \hbar_p \tr(\gamma_0)\|\nabla f\|_{L^2}^2.
		\end{align}
		On the other hand, note that we do not need the hypothesis $V \to + \infty$ for the proof of the lower bound, so we can replace $V$ by a bounded potential (that we still denote by $V$) as long as it is equal to $V$ on $\supp \rho^{\mathrm{TF}}$ and greater than $\max_{\mathrm{supp}\rho^{\mathrm{TF}}}V$ outside. In particular, we can ask that $V\in W^{1, \infty}$. Thus, we obtain
		\begin{equation}\label{equation_potential_energy_husimi}
			\frac{1}{(2\pi\hbar)^d} \iint V(x) \langle f_{xp}^{\hbar} | \gamma_0 | f_{xp}^{\hbar}\rangle\d x \d p = \int V \big(\rho_{\gamma_0} * |f^{\hbar}|^2\big)= \int \big(V * |f^{\hbar}|^2\big)\rho_{\gamma_0}= \tr(V\gamma_0) + O\big(\|\nabla V\|_{L^{\infty}}\sqrt{\hbar_x}\big).
		\end{equation}
		Therefore, if we choose again $\sqrt{\hbar_x} = \hbar_p = \hbar^{2/3}$, we find
		\begin{equation}\label{equation_variational_energy_lower_bound}
			\tilde{E}^{\mathrm{q}}_{\hbar}(\Lambda)\geq N \tilde{E}^{\mathrm{cl}}(\Lambda) + O(N^{7/9}),
		\end{equation}
		which together with (\ref{equation_variational_energy_upper_bound}) finally gives
		\begin{equation}\label{equation_quantum_semiclassical_variational}
			\tilde{E}^{\mathrm{q}}_{\hbar}(\Lambda) = N \tilde{E}^{\mathrm{cl}}(\Lambda) + O(N^{7/9}).
		\end{equation}
		\textbf{Estimate on the number of particles.} We still denote by $\gamma_0$ the minimizer of the quantum variational problem (\ref{equation_quantum_energy_variational}). Let $\mu \in \R$ (think of it as small), because of \eqref{equation_quantum_semiclassical_variational}, we have
		\begin{equation}
			\mu N^{\mathrm{q}}_{\hbar}(\Lambda) = \mu \tr(\gamma_0) \leq \tilde{E}^{\mathrm{q}}_{\hbar}(\Lambda) - \tilde{E}^{\mathrm{q}}_{\hbar}(\Lambda + \mu) \leq N\tilde{E}^{\mathrm{cl}}(\Lambda) - N\tilde{E}^{\mathrm{cl}}(\Lambda + \mu) + O(N^{7/9}).
		\end{equation}
		We know that $\frac{\d}{\d \Lambda}E^{\mathrm{cl}}(\Lambda) = \Lambda\frac{\d}{\d \Lambda}N^{\mathrm{cl}}(\Lambda)$ \textcolor{red}{mettre référence sur explications}, and hence
		\begin{equation}
			\frac{\d}{\d \Lambda}\tilde{E}^{\mathrm{cl}}(\Lambda) = -N^{\mathrm{cl}}(\Lambda).
		\end{equation}
		Then, as $N^{\mathrm{cl}}$ is differentiable, we find
		\begin{equation}
			\tilde{E}^{\mathrm{cl}}(\Lambda) - \tilde{E}^{\mathrm{cl}}(\Lambda+\mu) = \mu N^{\mathrm{cl}}(\Lambda) + O(\mu^2), 
		\end{equation}
		and thus
		\begin{equation}\label{equation_quantum_semiclassical_number}
			\mu N^{\mathrm{q}}_{\hbar}(\Lambda) \leq \mu N N^{\mathrm{cl}}(\Lambda) + O(N\mu^2) + O(N^{7/9}).
		\end{equation}
		Then, we take $\mu = \pm N^{-1/9}$, we finally have
		\begin{equation}\label{equation_}
			N^{\mathrm{q}}_{\hbar}(\Lambda) = N N^{\mathrm{cl}}(\Lambda) + O(N^{8/9}).
		\end{equation}
		\textbf{Conclusion for the energy.} Equations (\ref{equation_quantum_energy_variational}), (\ref{equation_semiclassical_energy_variational}), (\ref{equation_quantum_semiclassical_variational}) and (\ref{equation_quantum_semiclassical_number}) provide the expected result for the energy.
	\end{proof}
	
	\begin{lemma}\label{lemma_kinetic_energy_low_frequency_husimi}
		Let $0 \leq \gamma \leq 1$ be a trace class operator such that $\tr(\gamma) = \tr(-\hbar^2\Delta \gamma)= O(N)$. We set $m(x, p) = \langle f_{x, p}^{\hbar}|\gamma f_{x, p}^{\hbar}\rangle$. We have
		\begin{equation}
			\tr\big((-\hbar^2 \nabla (1 - \Gamma(p))\nabla\gamma\big) = \frac{N}{(2\pi)^3}\iint |p|^2 (1 - \Gamma(p))m(x, p)\d x\d p + O(N^{5/6})
		\end{equation}
	\end{lemma}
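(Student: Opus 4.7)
The plan is to express both sides as integrals against the momentum density $t_\gamma(p)=\sum_n\alpha_n|\mathcal{F}_\hbar[u_n](p)|^2$ and to bound the discrepancy using the structure of the symbol $F(p):=|p|^2(1-\Gamma(p))=\min(|p|^2,p_F^2)$. Fourier diagonalization of $-i\hbar\nabla$ gives $\tr\bigl((-\hbar^2\nabla(1-\Gamma(p))\nabla)\gamma\bigr)=\int F(p)t_\gamma(p)\,\d p$, while the marginalization identity $\frac{1}{(2\pi\hbar)^3}\int\langle f^\hbar_{x,p}|\gamma f^\hbar_{x,p}\rangle\,\d x=(t_\gamma\ast|g^\hbar|^2)(p)$ (implicit in the computation leading to (\ref{equation_kinetic_energy_husimi})) combined with the evenness of $|g^\hbar|^2$ yields $\frac{1}{(2\pi\hbar)^3}\iint F(p)m(x,p)\,\d x\,\d p=\int(F\ast|g^\hbar|^2)(q)t_\gamma(q)\,\d q$. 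Hence the quantity to bound is $E=\int G_\hbar(q)\,t_\gamma(q)\,\d q$ with $G_\hbar(q):=(F\ast|g^\hbar|^2)(q)-F(q)$.

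I would estimate $G_\hbar$ by splitting on $q$ relative to the kink of $F$ at $|p|=p_F$. On the smooth region $|q|\le p_F/2$ or $|q|\ge 2p_F$, the symbol $F$ locally coincides with either the quadratic $|p|^2$ or the constant $p_F^2$; the same expansion used in the proof of Lemma \ref{lemma_semi_classical_approximation} (odd moments of $|g^\hbar|^2$ vanish by symmetry, its second moment equals $\hbar_p\|\nabla f\|_{L^2}^2$) then gives $|G_\hbar(q)|\le C\hbar_p$ up to rapidly decaying tail terms from the Schwartz decay of $\hat f$. On the kink shell $|q|\in[p_F/2,2p_F]$, only the Lipschitz bound $\|F\|_{\mathrm{Lip}}\le 2p_F$ is available, yielding $|G_\hbar(q)|\le Cp_F\sqrt{\hbar_p}$.

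The naive estimate $\int t_\gamma=\tr(\gamma)=O(N)$ on the shell would produce an error $O(p_F\sqrt{\hbar_p}N)$ which is too large when $p_F\to\infty$. The key refinement is to invoke the a priori kinetic bound $\int|q|^2 t_\gamma(q)\,\d q=\tr(-\hbar^2\Delta\gamma)=O(N)$, which gives $\int_{|q|\ge p_F/2}t_\gamma(q)\,\d q\le (2/p_F)^2\cdot O(N)=O(N/p_F^2)$. Combining,
\begin{equation}
|E|\le C\hbar_p \tr(\gamma)+Cp_F\sqrt{\hbar_p}\cdot O(N/p_F^2)=O\bigl(\hbar_p N+\sqrt{\hbar_p}N/p_F\bigr).
\end{equation}
Choosing $\hbar_x=\hbar_p=\hbar=N^{-1/3}$ one obtains $|E|=O(N^{2/3}+N^{5/6}/p_F)=O(N^{5/6})$ as soon as $p_F$ stays bounded below by a positive constant, which is the case in every application; the regime of very small $p_F$ is handled by the trivial bound $|E|\le Cp_F\sqrt{\hbar_p}\cdot O(N)\le C\sqrt{\hbar_p}N=O(N^{5/6})$.

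The main obstacle is the non-smoothness of the symbol $F$ at $|p|=p_F$: the Lipschitz constant $2p_F$ is unbounded in the regime of interest, so absorbing it in the error requires combining the locality of the singularity on a thin shell with the kinetic a priori bound that forces $t_\gamma$ to be dilute precisely on that shell.
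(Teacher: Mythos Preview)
Your proof is correct but organized differently from the paper's. You treat the full symbol $F(p)=|p|^2(1-\Gamma(p))=\min(|p|^2,p_F^2)$ as a single object; since its Lipschitz constant is $2p_F\to\infty$ in the regime of interest, you are forced into the shell decomposition near the kink and must invoke the kinetic a priori bound $\int_{|q|\ge p_F/2}t_\gamma=O(N/p_F^2)$ to compensate. The paper instead separates the two factors: it writes the integrand as $|p-q|^2(1-\Gamma(p-q))$ and expands $|p-q|^2=|p|^2-2p\cdot q+|q|^2$. The key observation is that $1-\Gamma$ itself has Lipschitz constant $2/p_F$, which is bounded (indeed small), so the $|p|^2$-term immediately gives $\int|p|^2(1-\Gamma(p))\,t_\gamma(p)\,\d p$ plus an error $O(\sqrt\hbar)\tr(-\hbar^2\Delta\gamma)$; the $|q|^2$ and cross terms are controlled crudely by $O(\hbar)\tr\gamma$ and $O(\sqrt\hbar)\bigl(\tr\gamma+\tr(-\hbar^2\Delta\gamma)\bigr)$ using only $0\le1-\Gamma\le1$. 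No shell argument or localized mass estimate is needed. The ``main obstacle'' you flag---the diverging Lipschitz constant of $F$---is thus an artifact of combining the two factors, and dissolves once they are treated separately. Your route does have the minor merit of handling the (here irrelevant) regime $p_F\to0$ cleanly via the global bound $\|F\|_{\mathrm{Lip}}\le 2p_F$, at the price of a longer argument.
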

	\begin{proof}
		We have
		\begin{equation}
			\frac{1}{(2\pi \hbar)^3} \int (1 - \Gamma(p))|p|^2 \langle f_{x, p}^{\hbar} | \gamma_0 | f_{x,p}^{\hbar}\rangle\d x \d p = \iint (1 - \Gamma(p-q))|p-q|^2  t_{\gamma_0}(p) |g^{\hbar}(q)|^2 \d q \d p.
		\end{equation}
		Then, we can decompose $|p-q|^2 = |p|^2 - 2p.q + |q|^2$, and compute each integral. First,
		\begin{equation}
			\int (1 - \Gamma(p-q)) |g^{\hbar}(q)|^2 \d q = \int \big(1 - \Gamma(p - \sqrt{\hbar}k)\big)|\hat{f}(k)|^2 \d k = 1 - \Gamma(p) + O(\sqrt{\hbar}).
		\end{equation}
		Then, 
		\begin{equation}
			\iint (1 - \Gamma(p-q))|q|^2  t_{\gamma}(p) |g^{\hbar}(q)|^2 \d q \d p \leq \iint|q|^2  t_{\gamma}(p) |g^{\hbar}(q)|^2 \d q \d p = \hbar\tr(\gamma)\int |k|^2 |\hat{f}(k)|^2\d k.
		\end{equation}
		Finally,
		\begin{equation}
			\left|\int q\big(1 - \Gamma(p - q)\big)|g^{\hbar}(q)|^2 \d q \right| \leq \sqrt{\hbar} \int |k||\hat{f}(k)|^2\d k \leq \frac{\sqrt{\hbar}}{2} \int (1 + |k|^2)|\hat{f}(k)|^2\d k,
		\end{equation}
		and
		\begin{equation}
			\left| \int p t_{\gamma}(p) \d p\right| \leq \frac{1}{2} \int (1 + |p|^2)t_{\gamma}(p)\d p = \frac{1}{2}\Big(\tr(-\hbar^2 \Delta \gamma) + \tr(\gamma)\Big).
		\end{equation}
		Therefore,
		\begin{equation}
			\frac{1}{(2\pi\hbar)^3}\iint |p|^2 (1 - \Gamma(p)) m(x, p)\d x \d p = \tr\big((-\hbar^2 \nabla (1 - \Gamma(p))\nabla\gamma\big) + \Big(\tr(-\hbar^2 \Delta\gamma) + \tr(\gamma)\Big) O(\sqrt
			\hbar).
		\end{equation}
	\end{proof}
	
	\section*{Acknowledgements}
	
	I thank Nicolas Rougerie for many fruitful discussions and for his proofreading of this manuscript. I also thank Laurent Laflèche for some explanations of his joint work with Esteban C\'ardenas \cite{cardenas_commutator_2025}.
	
	\printbibliography

\end{document}